\definecolor{tolblue}{HTML}{4477AA}
\newcommand{\x}{\mathbf{x}}
\newcommand{\y}{\mathbf{y}}
\newcommand{\X}{\mathbf{X}}
\newcommand{\Y}{\mathbf{Y}}
\newcommand{\W}{\mathbf{W}}
\newcommand{\N}{\mathbf{N}}
\newcommand{\G}{{G}}
\newcommand{\E}{{E}}
\newcommand{\RR}{\mathbb{R}}
\newcommand{\PP}{\mathbb{P}}
\newcommand{\EE}{\mathbb{E}}
\newcommand{\LL}{\mathcal{L}}
\newcommand{\Cov}{\text{Cov}}
\newcommand{\indep}{\perp \!\!\! \perp}
\newcommand{\einsfun}{\mathbf{1}} 
\newcommand{\bs}{\boldsymbol}
\newcommand{\HR}{H\"usler--Reiss}
\def\g{\boldsymbol}
\newcommand{\perpe}{\perp_e}
\newcommand{\ci}{\perp\!\!\!\perp}
\DeclareMathOperator{\diag}{diag}
\newcommand{\pa}{\mathrm{pa}}
\newcommand{\an}{\mathrm{an}}
\newcommand{\nd}{\mathrm{nd}}
\newcommand{\Estar}{\E^\star}
\newcommand{\Gstar}{G^\star}
\newcommand{\intervene}{\operatorname{do}}
\theoremstyle{plain}%
\newtheorem{thm}{Theorem}
\newtheorem{proposition}{Proposition}
\newtheorem{corollary}{Corollary}
\newtheorem{lemma}{Lemma}
\theoremstyle{definition}%
\newtheorem{example}{Example}%
\newtheorem{remark}{Remark}%
\newtheorem{definition}{Definition}
\newtheorem{assumption}{Assumption}
\begin{document}
\title{Extremes of structural causal models}

\author[S.~Engelke]{Sebastian Engelke$^1$}
\address[]{$^1$Research Institute for Statistics and Information Science, University of Geneva, Switzerland}
\email{sebastian.engelke@unige.ch}

\author[N.~Gnecco]{Nicola Gnecco$^2$}
\address[]{$^2$Department of Mathematics, Imperial College London, United Kingdom}
\email{n.gnecco@imperial.ac.uk}

\author[F.~R\"ottger]{Frank R\"ottger$^3$}
\address[]{$^3$Department of Mathematics and
	Computer Science, Eindhoven University of Technology, The Netherlands}
\email{f.rottger@tue.nl}

\date{\today}


\begin{abstract}
  The behavior of extreme observations is well-understood for time series or spatial data, but little is known if the data generating process is a structural causal model (SCM).
  We study the behavior of extremes in this model class, both for the observational distribution and under extremal interventions. We show that under suitable regularity conditions on the structure functions, the extremal behavior is described by a multivariate Pareto distribution, which can be represented as a new SCM on an extremal graph. Importantly, the latter is a sub-graph of the graph in the original SCM, which means that causal links can disappear in the tails.  
  We further introduce a directed version of extremal graphical models and show that an extremal SCM satisfies the corresponding Markov properties. 
  Based on a new test of extremal conditional independence, we propose 
  two algorithms for learning the extremal causal structure from data. The first is an extremal version of the PC-algorithm, and the second is a pruning algorithm that removes edges from the original graph to consistently recover the extremal graph. The methods are illustrated on river data with known causal ground truth.
  \\
  \\
  \emph{Keywords:} causality, extreme value theory, graphical model, structure learning.
\end{abstract}

\maketitle

\section{Introduction}

Causal relations between the components of a random vector $\X = (X_v: v\in V)$ with index set $V = \{1,\dots, d\}$, are typically expressed through structural assignments on a directed acyclic graph (DAG) $\G = (V,\E)$   
\begin{align}\label{SCM0}
	X_v := f_v( \X_{\text{pa}_G(v)}, \varepsilon_v), \quad v \in V.
\end{align}
Here, $\text{pa}_G(v)$ are the parents of $X_v$ in $\G$, the structure functions $f_v: \mathbb R^{|\pa_G(v)|}\times \mathbb R \to \mathbb R$ specify the causal mechanism, and $(\varepsilon_v: v \in V) \sim \mathbb P_\varepsilon$ are independent noise variables.
We refer to the tuple $\mathcal M = (G, \{f_v : v \in V\}, \mathbb P_\varepsilon)$ as a structural causal model (SCM) on $\G$ over the random vector $\X$, which not only specifies the observational distribution of $\X$, but also describes the distribution under interventions \citep[][]{PJS2017}.
There are numerous methods for causal discovery 
 and causal effect estimation if the intervention is in range of the training data set \citep{shimizu2006linear,kalisch2007estimating,peters2014causal}.

The behavior of extreme observations has been extensively studied for various stochastic systems including times series \citep{leadbetter2012extremes}, random fields \citep{KSdH2009} and undirected graphs \citep{segers_2020, EH2020}. 
For directed graphs, only special cases such as linear or max-linear models have been considered \citep{gissibl2018max,GMPE2019}.
As a first main result, we determine the limits of general SCMs~\eqref{SCM0} under the observational distribution of $\X$ if one of the variables become extremely large. 
We state sufficient regularity conditions on the structure functions~$f_v$ at infinity that ensure that $\X$ is multivariate regularly varying \citep{res2008}, a well-known property that guarantees the existence of a limiting vector $\Y= (Y_v: v\in V)$ describing the tail dependence.      
This limit is called a multivariate Pareto distribution and can be characterized as a new SCM on the extremal DAG $G_e=(V,E_e)$
\begin{align}\label{eSCM_intro}
	\mathcal M_e = (G_e, \{\Psi_v : v \in V\}, \mathbb P_\varepsilon),
\end{align}
where we call $\Psi_v: \mathbb R^{|\pa_{G_E}(v)|} \times \mathbb R \to \mathbb R$ the extremal structure functions, which 
satisfy the homogeneity $\Psi_v( \y + t \einsfun, e) = \Psi_v( \y, e) + t$ for all $\y \in \mathbb R^{|\pa(v)|}$ and $e,t\in\mathbb R$. Interestingly, the extremal graph $G_e$ can be shown to be a subgraph of $G$ in the sense that $E_e \subseteq E$. That means that causal links that are present in the overall system can disappear in extreme observations.
Conversely, any set of additive extremal structure functions of the form $\Psi_v(\y, e) = \Phi_v(\y)  + e$ with homogeneous functions $\Phi_v: \mathbb R^{|\pa_{G_e}(v)|} \to \mathbb R$ yields a valid extremal SCM. It turns out that many of the standard extreme value models in the literature have such an additive structure. For instance, if all $\Phi_v$ are linear and the noise terms $\varepsilon_v$ follow normal distributions, then the resulting multivariate Pareto distribution is a directed \HR{} model.

As a second step, we consider the limit of the SCM in~\eqref{SCM0}
under extremal interventions in the tail of one or several components of $\X$. In fact, under the same assumptions on the structure functions as for the observational distribution, we can establish interventional limits. This convergence can be seen as an interventional version of classical multivariate regular variation.
In particular, we show how extremal interventions propagate through the graph, and how the limit can be seen as the extremal SCM~\eqref{eSCM_intro} under suitable interventions.
We further introduce a notion of an extremal cause between two variables $X_i$ and $X_j$, and show that it can be read off from the extremal DAG $G_e$.
For instance, Figure~\ref{fig:intro} shows the scatter of two variables $X_3$ and $X_4$ simulated from different SCMs on the same graph $G$. While in both cases there is a causal effect in the original graph, this link disappears in the corresponding extremal graph $G_e$ for the data on the right-hand side, that is, an extremal intervention on $X_3$ does not lead to an extremal outcome in $X_4$. 
The notion of an extremal cause therefore describes the effect of interventions beyond the training data.

The limiting multivariate Pareto distribution $\Y$ describes the extremal causal structure of $\X$ and, in fact, it satisfies conditional independence statements in the extremal sense as introduced in \cite{EH2020}.
We define directed extremal graphical models as a multivariate Pareto distribution that satisfies the corresponding notions of local and global Markov properties. 
As an equivalent characterization, we show that the density (if it exists) of an extremal directed graphical model factorizes along the directed graph.
With this definition, the extremal limit $\Y$ of the SCM~\eqref{SCM0}
is indeed an extremal graphical model on the graph $G_e$. 
This answers a question raised by Steffen Lauritzen in the discussion contribution of~\cite{EH2020}. 

The mathematical foundations for directed extremal graphical models and extremal SCMs laid out in this paper enable a broad range of statistical methods for the analysis of causal relationships in the distributional tail. As a first tool, we propose an extremal version of the well-known PC-algorithm \citep{spirtes2000causation}, relying on a new test for extremal conditional independence in the \HR{} model class.
A second algorithm, called extremal pruning, leverages an estimate of the DAG $G$ of the initial SCM~\eqref{SCM0}. Based on the fact that causal links can only disappear in the tails, the algorithm prunes suitable edges and guarantees consistent recovery of the extremal graph $G_e$ from data.

After a brief discussion of related work at the interface of causality and extremes in Section~\ref{sec:related}, in Section~\ref{sec:preliminaries} we provide the required preliminaries for this paper. Section~\ref{sec:eSCM} contains the main limit results on 
extremes of structural causal models, both for the observational distribution and under extremal interventions. 
We introduce the notion of directed extremal graphical models in Section~\ref{sec:DEGM} and show the link to structural causal models from the previous section. 
Structure learning methods for the extremal graph are presented in Section~\ref{sec:structure_learning} and illustrated on river data from the Danube basin in Section~\ref{sec:application}. 
All proofs have been relegated to the Supplementary Material, which also contains details on directed graphical models and additional results on the \HR{} distribution.
The code to reproduce all results of this paper can be found in the GitHub repository \url{https://github.com/nicolagnecco/extremeSCM}.

\begin{figure}[tb]
	\centering
	\begin{minipage}{.45\textwidth}
	  \centering
	  \includegraphics[width=.75\linewidth]{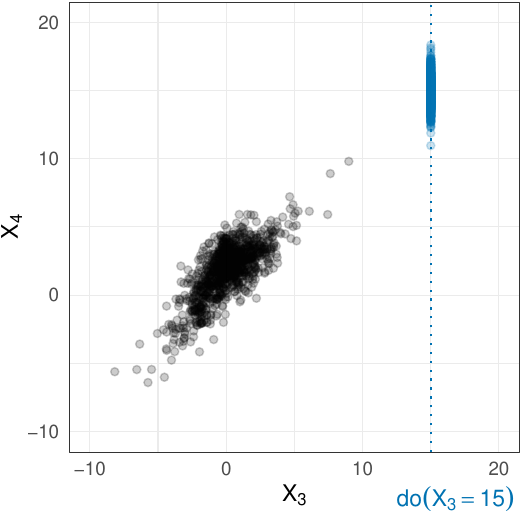}
	\end{minipage}
	\begin{minipage}{.45\textwidth}
	  \centering
	  \includegraphics[width=.75\linewidth]{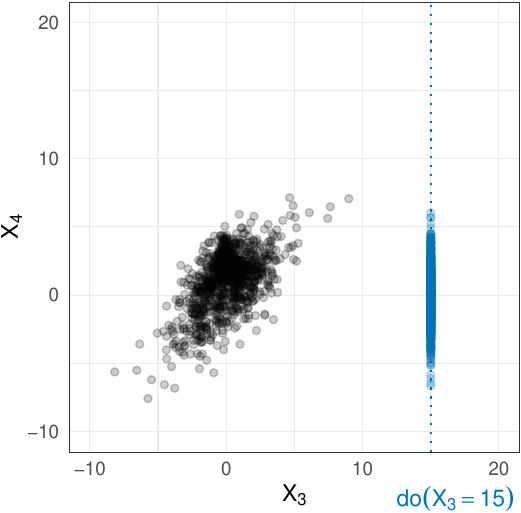}
	\end{minipage}
	\caption{Scatter plots of $X_3$ and $X_4$ simulated according to the SCMs in Example~\ref{ex:tail} (left) and Example~\ref{ex_different_eSCM} (right) on graph $G$ in Figure~\ref{subfig:b}. Blue lines represent an extremal intervention on $X_3$ and samples from the interventional distribution.
  Left: the data exhibits an extremal causal effect from $X_3$ to $X_4$ and the limiting extremal graph satisfies $G_e = G$. Right: the corresponding extremal graph $G_e$ is the DAG in Figure~\ref{subfig:c}, which is different from $G$ since $X_3$ is not an extremal cause of $X_4$; extremal interventions on $X_3$ do not lead to an extreme outcome in $X_4$.}
  \label{fig:intro}
\end{figure}

\subsection{Related work}\label{sec:related}

The study of graphical models and causal inference in the context of extreme observations has attracted increasing attention in recent years. 
A particular focus has been on sparse extremal dependence modeling on undirected graphs. \cite{segers_2020} and \cite{asenova2023extremes} study extremes of probabilistic graphical models on trees and block graphs, respectively, and \cite{EH2020} introduce the concept of general, undirected extremal graphical models. There has been ample of work on the estimation \citep{asenova2021inference, roettger2023parametric, lederer2023extremes} and structure learning \citep{eng_vol_2022, hu2022, engelke2022a, wan2023graphical,engelke2024extremalgraphicalmodelinglatent} of these models, but the directed case has only been considered for very particular model classes so far \citep{gissibl2018max,GMPE2019}; for a recent review see \cite{engelke2024graphicalmodelsmultivariateextremes}.

Most of the work on causality related to extremes considers treatment effects in the tails of the distributions. Closely related to extreme quantile estimation \citep{chernozhukov2005, ExGAM2, Velthoenetal2019}, this line of work studies the difference in the extremal quantiles of the response variable under different potential outcomes \citep{zhang2018extremal,deu2021}. The treatment is binary in these cases, which fundamentally differs from our work where we consider continuous variables that can be extreme simultaneously. 
Interestingly, \cite{wang2024extremebasedcausaleffectlearning}
observe that for light-tailed responses, confounding may become negligible in the extremes, and that average treatment effects can be estimated from the tails only. While in a different setting, this parallels the phenomenon in our theory that causal links can disappear in the extremal DAG $G_e$. 
Extremal causal discovery in multivariate systems with continuous data has mostly been studied using bivariate summary statistics \citep{mha2020, mhalla2024causaldiscoverymultivariateextremes, dei2023}, but a theoretical justification in terms of interventions is often difficult.
For reviews on causality and extremes see  \cite{engelke2021a} and \cite{chavezdemoulin2024causalityextremes}.

Our results on extremal interventions are related to the work on extrapolation that studies estimation of regression functions beyond the training data range \citep{shen2024engressionextrapolationlensdistributional,buritica024progressionextrapolationprincipleregression, bodik2024extremetreatmenteffectextrapolating}.
The methods of our paper, on the other hand, allow us to learn structural causal models and describe extrapolation under both observational and interventional distributions.

\section{Preliminaries}\label{sec:preliminaries}

\subsection{Directed graphical models and structural causal models}

In this section we provide some background on directed graphical models and their connection to structural causal models. 
A directed graph is a pair $G = (V,E)$ consisting of a node (or vertex) set $V = \{1,\dots, d\}$ and an edge set $E \subset V\times V$ containing ordered pairs of distinct nodes.
An undirected graph is a graph where there is an edge $(i,j) \in E$ if and only if $(j,i) \in E$; the two edges are then graphically represented by a single undirected connection.
In this paper, we always consider the sub-class of directed acyclic graphs (DAGs) defined as a directed graph that does not contain any directed cycles.
For a DAG $G$, the set of parents $\pa_G(i)$ of node $i\in V$ is defined as all nodes $j\in V$ such that $(j,i)\in E$. The set of non-descendants $\nd_G(i)$ of node $i$ contains all nodes $j$ for
which there is no directed path from $i$ to $j$ in $G$, and the set of ancestors $\an_G(i)$ is the collection of all nodes $j$ for which there is a directed path from $j$ to $i$ in $G$; we sometimes drop the subscript when there is no ambiguity about the corresponding graph. 
For details and more graph notions we refer to \citet[Chapter 2]{lau1996}. 

Let $G = (V,E)$ be a DAG and $\X = (X_v : v\in V)$ a random vector with components indexed by the nodes~$V$. A probabilistic graphical model connects the graph structure $G$ to conditional independence statements on the distribution of $\X$ through Markov properties. 
The directed local Markov property, for instance, requires that
\[ X_i \indep \X_{\nd(i)\setminus \pa(i)} \mid \X_{\pa(i)},\quad  \text{for all } i \in V.\] 
Another notion is the directed global Markov property, which in general is stronger than the local Markov property, but both can be shown to be equivalent for conditional independence models that form a semi-graphoid \citep[Proposition 4]{Lauritzen1990}; the latter is satisfied for probabilistic conditional independence. We refer to Supplementary Material~\ref{app:d-separation} for details on the directed global Markov property and the related notion of $d$-separation.

Directed graphical models are closely connected to SCMs in~\eqref{SCM0} on the underlying DAG $G = (V,E)$. We give here a formal definition including the behavior of the system under interventions.

\begin{definition}\label{def:SCM_intervention}
	Let $G = (V,E)$ be a DAG. For measurable functions $f_j:\mathbb R^{|\pa(j)|} \times \mathbb R \to \mathbb R$, $j\in V$, a structural causal model (SCM) for the random vector $\X = (X_j: j\in V)$ is defined as $\mathcal M := (G, \{f_j: j\in V\}, \mathbb P_{\varepsilon})$, where $\varepsilon = (\varepsilon_j : j\in V)$ denote the exogenous variables with product measure $\mathbb P_{\varepsilon}$. The observational distribution of $\X$ is then given by the generative mechanism in~\eqref{SCM0}. The SCM under perfect interventions on variables $X_i:= \xi_i \in \mathbb R$ for a subset $\mathcal I\subset V$ is denoted by $\mathcal M^{\intervene(\X_\mathcal{I}:=\boldsymbol{\xi}_\mathcal{I})} := (G, [\tilde f_j: j\in V], \mathbb P_{\varepsilon})$, where
	\begin{align}\label{f_intervened}
		\tilde f_j(\x_{\pa(j)}, e) = 
	\begin{cases}
		\xi_j, & j \in \mathcal I,\\
		f(\x_{\pa(j)}, e), & j \in V \setminus \mathcal I.\\
	\end{cases}
	\end{align}
\end{definition}

The SCMs defined above are called causal since they not only describe the observational distribution of $\X = (X_v : v\in V)$, but also how the system behaves under intervention on one or several of the variables; see \cite{bon2021} for a more complete treatment of definitions and properties.
If the model only describes the observational distribution, it is also known as a structural equation model.

In general, SCMs can have arbitrary structure functions $f_v$ and noise distributions of $\varepsilon_v$. For identifiability and statistical inference, often a subset of functions is assumed. An important example are linear SCMs.
\begin{example}\label{ex:linear_SCM}
    Let $\G = (V, \E)$ be a DAG with and let $\X \in \mathbb{R}^{d}$ denote a random vector with components 
    \begin{align}\label{eq:linSCM}
		X_v := \sum_{i\in\pa(i)} b_{iv} X_{i} + \varepsilon_v, \quad v\in V,
    \end{align}
	where $\varepsilon_v$ are independent noise variables, and the matrix $B = (b_{ij})_{i,j=1,\dots, d}$ contains the causal coefficients. Then $\X$ is called a linear SCM. It has the more compact representation $\X:=B^{\top}\X+\varepsilon$, where  $\varepsilon = (\varepsilon_v : v\in V)$.
\end{example}
Linear SCMs have been studied extensively in the literature. For instance, it is known that the causal structure can be identified from the observational distribution if not all of the noise variables are normally distributed \citep{shimizu2006linear,shimizu2011directlingam}.
In the setting where $\varepsilon_v$ have heavy-tailed distributions, the extremes of the linear SCM~\eqref{eq:linSCM} are well understood and the causal order can be learned based on the largest observations only \citep{GMPE2019}. Another popular SCM class arises when replacing the sum with the maximum.
\begin{example}\label{ex:max_linear_SCM}
    Let $\G = (V, \E)$ be a DAG with and let $\X \in \mathbb{R}^{d}$ denote a random vector with components 
    \begin{align}\label{eq:maxSCM}
		X_v := \max_{i\in\pa(i)} b_{iv} X_{i} + \varepsilon_v, \quad v\in V.
\end{align}
where $\varepsilon_v$ are independent noise variables.
Then $\X$ is called a max-linear SCM.
\end{example}
Max-linear models have been studied in the field of extreme value theory \citep{gissibl2018max} and their probabilistic structure is well understood \citep{amendola2022conditional}.
For a more detailed review of SCMs and their causal interpretation we refer to \citet{PJS2017}.

\subsection{Multivariate Pareto distributions}\label{sec:MPD}

Let $\X = (X_v: v\in V)$ be a random vector with index set $V = \{1,\dots, d\}$ and assume that the marginal distribution functions $F_v$, $v\in V$, are continuous with generalized inverse $F_v^{-1}$.
For notational simplicity we normalize the marginal distributions of $\g X$ to standard exponential distributions by defining the random vector $\X^*$ with components 
\begin{align}\label{X_star}
	X_v^* = -\log\{1- F_v(X_v)\}, \quad v\in V.
\end{align}
In practice, this marginal normalization is typically done using empirical distribution functions.
In order to describe the extremal dependence between large values of the components of this vector, a standard assumption is that the rescaled threshold exceedances converge in distribution to a (generalized) multivariate Pareto distribution $ \Y $, that is,
\begin{align}\label{eq:MPD}
	\PP(\Y\le \y)=\lim_{u\to\infty}\PP(\X^*-u\mathbf{1}\le \y \mid \max_{i=1,\dots, d} X^*_i >u), \quad \y\in\mathcal{L}, 
\end{align}
where $\mathcal{L}=\{\x\in\RR^d: \max_{i=1,\dots, d} x_i >0\} $ and $ \mathbf{1}=(1,\ldots,1)$ is the vector of ones.
Multivariate Pareto distributions are the only possible limits for such threshold exceedances \citep{rootzen2006} and therefore natural objects for the study of extremal dependence.

Equivalent to the existence of the limit in~\eqref{eq:MPD} is the well-known assumption of multivariate regular variation \citep[Chapter 5]{res2008}. On exponential margins, it states vague convergence, i.e.,
\begin{align}\label{MRV}
	\lim_{t \to \infty} t \mathbb P(\X^* - t\einsfun \in A)  = \Lambda\left(A \right),
\end{align}
for all Borel subsets $A \subset \mathbb R^d$ bounded away from $\{-\infty\}^d$ with zero mass on boundary, $\Lambda(\partial A) = 0$. Here, $\Lambda$ is a Radon measure on $\mathbb R^d$ called the exponent measure. It is homogeneous in the sense that $\Lambda(tA) = e^{-t}\Lambda(A)$ for any Borel set $A\subset \mathbb R^d$ and $t \in \mathbb R$. The exponent measure fully describes the distribution of $\Y$ since $\PP(\Y\le \y) = \Lambda\left([-\mathbf \infty, \y]^C\cap \mathcal L  \right)/ \Lambda(\mathcal L)$.
If $\Lambda$ is absolutely continuous with respect to the Lebesgue measure, its derivative $\lambda$ satisfies the same homogeneity as $\Lambda$. The density of $\Y$ is then given by $f_{\Y}(\y) = \lambda(\y) / \Lambda(\mathcal L)$ for $\y\in\mathcal L$.
For any $I\subseteq V$, let $\lambda(\y_I) = \lambda(\y_I)$ be the marginal exponent measure density obtained by integrating out all variables in $V\setminus I$.
Note that here and throughout the paper we exclude $-\infty$ as possible values of $\Y$ since we are interested in multivariate Pareto distributions that admit densities on $\mathcal L$. 

We give three popular examples of parametric families of exponent measure densities, or equivalently, of multivariate Pareto distributions. They are used throughout the paper to illustrate the theory and to provide examples for extremal causal models.
\begin{example}\label{ex:logistic_prelim}
	The $d$-dimensional extremal logistic model with parameter $\theta \in (0,1)$ has density 
	\begin{align*}
		\lambda(\y) =  \left(\sum_{i=1}^d \exp\left\{ -\frac{y_i}{\theta} \right\}\right)^{\theta-d} \exp\left\{-\frac{1}{\theta}\sum_{i=1}^{d} y_i\right\}\prod_{i=1}^{d-1}\left(\frac{i}{\theta}-1\right), \quad  y \in \mathbb R^d.
	\end{align*}
  For any $I\subseteq V$, the marginal exponent measure density $\lambda(\y_I)$ is an extremal logistic model with the same parameter $\theta$.
\end{example}

\begin{example}\label{ex:dirichlet_prelim}
	The exponent measure density of a $d$-dimensional extremal Dirichlet distribution \citep{CT1991} 
	 is defined for parameters $\alpha_1,\ldots,\alpha_k>0$ as
\begin{equation} 
	\lambda(\y) =  \frac 1 d \frac{\Gamma(1+\sum_{i=1}^d \alpha_i) \exp\left\{\sum_{i=1}^{d} y_i\right\}}{(\sum_{i=1}^d \alpha_i \exp\{y_i\})^{d+1}} \prod_{i=1}^d \frac{\alpha_i}{\Gamma(\alpha_i)}  
			  \left(\frac{\alpha_i \exp\{y_i\}}{\sum_{j=1}^d \alpha_j \exp\{y_j\}}\right)^{\alpha_i-1}, \quad y \in \mathbb R^d.
   \end{equation}
  For any $I\subseteq V$, the marginal exponent measure density $\lambda(\y_I)$ is an extremal Dirichlet model with parameters $(\alpha_i:i\in I)$ \citep[Lemma 2.4]{corradini2024stochastic}.
\end{example}

\begin{example}\label{ex:HR_cond_prelim}
	Let $\mathbb{S}^d_0$ be the set of real symmetric $d\times d$ matrices with zero diagonal and let $\mathcal{D}^d:=\{\Gamma\in\mathbb{S}^d_0: \x^{\top}\Gamma \x<0 \text{ for all } \x \in \RR^d\setminus \mathbf{0} \text{ with } \x^{\top}\mathbf{1}=0 \}$.
	The exponent measure density of a $d$-dimensional \HR{} distribution
	\citep{HR1989} with parameter matrix $\Gamma\in \mathcal{D}^{d} $ has the representation
	\[\lambda(\y)=c_\Gamma \exp\left(-\frac{1}{2}\begin{pmatrix}
		\y^{\top},1\\
	\end{pmatrix}
	\begin{pmatrix}
		-\frac{1}{2}\Gamma & \mathbf{1}\\
		\mathbf{1}^\top&0\\
	\end{pmatrix}^{-1}\begin{pmatrix}
	\y\\
	1\\
	\end{pmatrix}
	 \right), \quad y \in \mathbb R^d,
	 \]
	 where $c_\Gamma>0$ is a constant depending on $\Gamma$. For $I\subseteq V$, the marginal exponent measure density $\lambda(\y_I)$ is \HR{} with parameter matrix~$\Gamma_{II}\in \mathcal{D}^{|I|}$. Note that the \HR{} distribution can also be parameterized through a signed Laplacian matrix $\theta(I)$, called the \HR{} precision matrix \citep{HES2022}, which we can obtain via the Fiedler--Bapat identity
   \begin{align}
     \begin{pmatrix}
       -\frac{1}{2}\Gamma_{II} & \bs 1 \\
       \bs 1^{\top}          & 0     \\
     \end{pmatrix}^{-1}=\begin{pmatrix}
       \theta(I) & \mathbf{p}(I)\\
       \mathbf{p}(I)^{\top}&\sigma^2(I)\\
     \end{pmatrix}, \label{eq:Fiedler-Bapat_margin}
   \end{align}
    where $\sigma^2(I)=\frac{1}{2}\left(\mathbf{1}^{\top}\Gamma_{II}^{-1}\mathbf{1}\right)^{-1}$ and $\mathbf{p}(I)=2\sigma^2(I) \Gamma_{II}^{-1}\mathbf{1} $ are the resistance radius and curvature, respectively \citep{devriendt2022a}. For the full model $I=V$, we denote the precision matrix by $\Theta = \theta(V)$. 
\end{example}

\subsection{Extremal conditional independence}\label{sec:extremal_CI}

The support of a multivariate Pareto distribution as defined in~\eqref{eq:MPD} is contained in the set $\mathcal L$, which is not of product form. This complicates statistical analysis and inhibits the use of traditional notions of conditional independence and graphical models. In extreme value theory, it is therefore often convenient to consider exceedances where the conditioning in~\eqref{eq:MPD} is replaced by one variable $X_k^* > u$ for some $k\in V$ \citep{heffernan2004,eng2014}. The resulting limit $\Y^k$ can be seen as an auxiliary vector derived from the initial multivariate Pareto distribution as $\Y \mid \Y_k > 0$. It has stochastic representation on the product space  $\LL^k:=\{\y\in\LL:y_k>0\}$ given by
\begin{align}
	\Y^k\stackrel{d}{=}R\mathbf{1}+\W^k,\label{eq:stoch_rep}
\end{align}
where $R$ is a standard exponential random variable and $\W^k$ is a random vector called the $k$th extremal function with $W^k_k=0$ almost surely and $\mathbb E \exp W^k_j = 1$, $j\in V$. Its probability density, if it exists, is given by $\lambda(\y)$ for $\y \in \LL^k$.
An important property of these auxiliary vectors is that through homogeneity, for any $k\in V$, the distribution of $\Y^k$ fully characterizes the distribution of $\Y$ \citep[e.g.,][Corollary 2]{segers_2020}.

One advantage of the vectors $\Y^k$, $k\in V$ is that their support has product form. \citet{EH2020} leverage this to define extremal conditional independence for the multivariate Pareto distribution $\Y$.
\begin{definition} \label{def:extremal_indep}
	Let $ A,B,C $ be disjoint subsets of $ V $. We say that the components $A$ and $B$ of the multivariate Pareto distribution $\Y$ are conditionally independent of the components in $C$, if 
	\begin{align}\label{CI_k}
		\forall k\in V:\;\Y_A^{k}\indep \Y_B^{k}\mid \Y_C^{k}.
	\end{align}
	In this case, we speak of extremal conditional independence and write $ \Y_A \perp_{e} \Y_B\mid \Y_{C} $. 
\end{definition}
If the exponent measure has Lebesgue density $\lambda$, then it suffices to require the existence of a $k\in V$ in~\eqref{CI_k} \citep[Lemma A.7]{eng_iva_kir}. Extremal conditional independence is then equivalent to the factorization 
\[\lambda(\y_{A\cup B\cup C}) = \lambda(\y_{A\cup C}) \lambda(\y_{B\cup C}) / \lambda(\y_C), \quad \y \in \mathbb R^d. \]
\cite{EH2020} define undirected extremal graphical models through local and global Markov properties and show that the resulting sparse models can be used for efficient statistical dependence modeling in high dimensions.
For the \HR{} model introduced in Example~\ref{ex:HR_cond_prelim}, similarly to the Gaussian case, extremal conditional independence can be be characterized through its precision matrix.

\begin{example}\label{ex:CI_test}
	Let $\Y$ be a \HR{} multivariate Pareto distribution with parameter matrix $\Gamma$. Then the extremal function $\g W^k_{\setminus k}$ is $(d-1)$-dimensional multivariate normal with mean $-\Gamma_{{\setminus k},k}/2$ and precision matrix $\Theta^{(k)} = \Theta_{\setminus k,\setminus k}$, that is, the \HR{} precision matrix with $k$th row and column removed.
	For two nodes $i,j\in V$, extremal conditional independence statement $ Y_i\perp_{e}Y_j\mid\Y_{\setminus ij} $ is equivalent to $ \Theta_{ij}=0 $ \citep{HES2022}. More generally, for some non-empty subset $S\subset V$ with $i,j\notin S$, by~\eqref{eq:Fiedler-Bapat_margin} we have 
	\begin{align}
			Y_i\perp_{e}Y_j\mid \Y_S\quad\Longleftrightarrow \quad \theta(\{i,j\}\cup S)_{ij}=0 
			 \label{eq:CIviaGamma}.
	\end{align}
\end{example}

Extremal conditional independence can also be shown to be natural from an axiomatic point of view, in the sense that it forms a semi-graphoid under weak assumptions \citep{REZ2021}.
\cite{eng_iva_kir} generalize the conditional independence notion in Definition~\ref{def:extremal_indep} to infinite measures $\Lambda$, allowing for mass on sub-faces of $\mathbb R^d$, asymptotic independence, and applications to L\'evy processes \citep{engelke2024levygraphicalmodels}.

\section{Extremes of structural causal models}\label{sec:eSCM}

\subsection{Extremal structural causal models}

Suppose that the random vector $\X = (X_v: v\in V)$ follows an SCM $\mathcal M := (G, \{f_j: j\in V\}, \mathbb P_{\varepsilon})$ on the DAG $G$ as in~\eqref{SCM0}, and assume that the marginal distribution functions $F_v$, $v\in V$, are continuous with generalized inverse $F_v^{-1}$.
We aim to study how the causal structure of $\X$ behaves for extreme observations, typically defined as an event where one or several components of $\X$ take very large values. 
We first focus on the observational distribution and discuss implications for interventional distributions in Section~\ref{sec:interventions_causality}. 
Extremes of SCMs have so far only been studied for heavy-tailed linear \citep{GMPE2019} and max-linear models \citep{gissibl2018max} as in Examples~\ref{ex:linear_SCM} and~\ref{ex:max_linear_SCM}, respectively. In both cases, the extremes are described by a multivariate Pareto distribution $\Y$ where the exponent measure $\Lambda$ does not possess a density. This is fairly restrictive and excludes most commonly used parametric models.

We take a broader approach for SCMs with general structure functions in~\eqref{SCM0} and concentrate on models where the multivariate Pareto limit distribution admits a density. 
This implies that $G$ can only have one root node, and we assume without losing generality that this is $X_1$; see Figures~\ref{subfig:b} and~\ref{subfig:c} for examples. Indeed, if there were at least two root nodes $X_1 := f_1(\varepsilon_1)$ and $X_2 := f_2(\varepsilon_2)$ as in Figure~\ref{subfig:a}, then, because of the independence $X_1 \indep X_2$, the extremes of these two variables would be asymptotically independent and the limiting multivariate Pareto distribution would not admit a density \citep[e.g.,]{EH2020}.

To characterize the extremal dependence and causal structures of $\X$, we consider the normalization in~\eqref{X_star} by $X^*_v = H^{-1}\circ F_v(X_v)$, where $H$ is the distribution function of a standard exponential variable.
The normalized vector $\X^*$ satisfies the structural assignments
\begin{align}\label{SCM1}
	X^*_v := f^*_v( \X^*_{\pa_G(v)}, \varepsilon_v), \quad v \in V,
\end{align}
on the same DAG $G$, with structure functions
\begin{align}\label{f_star}
	f^*_v(\x, e) = H^{-1}\circ F_v \circ f_v( F_{\pa_G(v)}^{-1} \circ H(\x), e), \quad (\x,e) \in \mathbb R^{|\pa_G(v)|+1},	
\end{align}
where $F_{\pa_G(v)}^{-1} \circ H$ denotes the componentwise transformation from exponential margins to original margins $F_j$, $j\in \pa_G(v)$.
The following assumption gives an easy-to-check regularity condition on the asymptotic behavior of the structure functions $f_v^*$. Similar assumptions are used in the study of univariate \citep{segers2007multivariate} and multivariate time series \citep{jan2014}
to determine the tail chain of a Markov chain defined through stochastic recursive equations.
\begin{assumption}\label{ass_main}
	Let $\X$ follow the SCM $\mathcal M := (G, \{f_j: j\in V\}, \mathbb P_{\varepsilon})$, where the DAG $G$ is rooted at node $1$.
  Assume that for any $j\in V \setminus \{1\}$ and any sequence $\x(t) \to \x \in \mathbb R^{|\pa_G(j)|} $, we have the convergence
	\begin{align}\label{limit_cond}
		\lim_{t\to \infty} f^*_j(\x(t) + t\einsfun, e) - t   = \Psi_j(\x, e), \quad  e \in \mathbb R,
	\end{align} 
	for some functions $\Psi_j: \mathbb R^{|\pa_G(j)|} \to \mathbb R$, called the extremal structure functions. 
\end{assumption}
The next theorem shows that this assumption guarantees the existence of a limiting model for the extremes of the SCM.
 
\begin{thm}\label{thm:SCM_limit}
	Let $\X$ satisfy Assumption~\ref{ass_main}.
	Then, for any $v\in V\setminus\{1\}$, the extremal structure function is homogeneous, that is, $\Psi_v( \x + s \einsfun, e) = \Psi_v( \x, e) + s$ for all $s\in\mathbb R$. Moreover, the distribution of $\X^*$ is multivariate regularly varying and 
    \begin{align}\label{MRV1}
		\lim_{t \to \infty} \mathbb P(\X^* - t\einsfun \in A \mid  X^*_1 > t ) = \mathbb P(\Y^{1} \in A),
	\end{align}
	for any Borel subset $A\subset \mathcal L^1$ with $ \mathbb P(\Y^{1} \in \partial A) = 0$.
	The limiting random vector has the form $\Y^{1} \stackrel{d}{=} R \mathbf{1} + \W^{1}$ for a standard exponential variable $R$ that is independent of an extremal function $\W^{1} = (W^{1}_v:v\in V)$. The latter admits the structural causal model
    \begin{align}\label{eSCM}
		\begin{split}
		W^{1}_1 &:= 0, \quad W^{1}_v := \Psi_v(\W^{1}_{\pa_G(v)}, \varepsilon_v), \quad v \in V.
		\end{split}
	\end{align}
Moreover, the extremal function then satisfies the moment condition $\mathbb E \exp\{ W^{1}_v \} = 1$ for all $v \in V$.
\end{thm}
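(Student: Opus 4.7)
The strategy is to first establish homogeneity of each $\Psi_v$ directly from the defining limit, then propagate the conditional limit through the DAG by induction on topological order, and finally deduce multivariate regular variation and the moment identity from the stochastic representation of $\Y^1$.

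\textbf{Homogeneity.} For any $s \in \RR$, Assumption~\ref{ass_main} applied to the constant sequence $\x(t) \equiv \x + s\einsfun$ gives $\Psi_v(\x + s\einsfun, e) = \lim_{t\to\infty} f^*_v(\x + (t+s)\einsfun, e) - t$; the substitution $t' = t + s$ then yields $\Psi_v(\x + s\einsfun, e) = \lim_{t'\to\infty} [f^*_v(\x + t'\einsfun, e) - t'] + s = \Psi_v(\x, e) + s$, where the second equality uses Assumption~\ref{ass_main} once more with the constant sequence $\x(t')\equiv \x$.

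\textbf{Conditional limit by induction on topological depth.} Label the nodes of $G$ in topological order with the unique source at $1$ and set $Z_v(t) := X^*_v - t$. I would show inductively in the depth of $v$ that, conditional on $X^*_1 > t$, the family $(Z_v(t))_{v\in V}$ converges jointly in distribution to $R\einsfun + \W^1$, jointly with the exogenous noises $(\varepsilon_v)_{v \neq 1}$ (which are independent of the conditioning event). The base case is the memoryless overshoot: $X^*_1 \sim \operatorname{Exp}(1)$ gives $Z_1(t) \mid X^*_1 > t \sim \operatorname{Exp}(1)$. For the inductive step, use
\[
Z_v(t) \;=\; f^*_v\!\left(\Z_{\pa_G(v)}(t) + t\einsfun,\, \varepsilon_v\right) - t,
\]
and note that by the inductive hypothesis the inputs $\Z_{\pa_G(v)}(t)$ converge jointly with $\varepsilon_v$ to $R\einsfun + \W^1_{\pa_G(v)}$ in distribution. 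A Skorokhod coupling turns this into almost-sure convergence of the inputs on a common probability space, so the continuous convergence provided by Assumption~\ref{ass_main}---which holds for \emph{every} sequence $\x(t)\to\x$, not merely pointwise---transfers directly to the outputs:
\[
Z_v(t) \xrightarrow{d} \Psi_v\!\left(R\einsfun + \W^1_{\pa_G(v)},\, \varepsilon_v\right) \;=\; R + \Psi_v(\W^1_{\pa_G(v)}, \varepsilon_v) \;=\; R + W^1_v,
\]
where homogeneity is used in the penultimate equality. This closes the induction and simultaneously delivers~\eqref{MRV1} and the SCM~\eqref{eSCM}.

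\textbf{Multivariate regular variation and moment identity.} MRV of $\X^*$ follows from~\eqref{MRV1} by the standard equivalence on exponential margins between exceedance limits and vague convergence of the exponent measure \citep[Chapter~5]{res2008}, using that the distribution of $\Y^1$ characterizes the full multivariate Pareto distribution by Corollary~2 of \cite{segers_2020}. For the moment identity, the standard exponential marginals of $\X^*$ transfer to the limit as $\Lambda(\{y_v>0\}) = 1$ for every $v$; the change of variables $u = y_1$, $\tilde w_i = y_i - y_1$ together with the shift-homogeneity $\lambda(\y + s\einsfun) = e^{-s}\lambda(\y)$ reduces this to
\[
1 \;=\; \Lambda(\{y_v>0\}) \;=\; \int_{\RR^{d-1}} e^{\tilde w_v}\, \lambda(0,\tilde w)\, d\tilde w \;=\; \EE \exp\{W^1_v\},
\]
since $\lambda(0,\cdot)$ is the density of $\W^1_{\setminus 1}$ under the stochastic representation~\eqref{eq:stoch_rep}. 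The main obstacle is the inductive step: lifting Assumption~\ref{ass_main}, which is a \emph{deterministic} continuous-convergence condition, to joint-in-distribution convergence with random inputs across the DAG. This is handled cleanly by Skorokhod coupling, but one must check that the coupling is carried out jointly over all $v$, and exploit that Assumption~\ref{ass_main} is precisely a continuous-convergence (locally uniform) statement rather than a pointwise one.
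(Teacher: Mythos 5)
Your proposal is correct and follows the same overall architecture as the paper's proof: homogeneity by the substitution $t'=t+s$, induction along a topological order with the memoryless exponential overshoot as base case, and a transfer of the continuous convergence in Assumption~\ref{ass_main} from the structure functions to the conditional distributions. The difference lies in how that transfer is executed. The paper integrates out the new noise variable $\varepsilon_{\pi(m)}$ to form a deterministic function $\tilde g_t$ of the already-processed coordinates, shows $\tilde g_t\to\tilde g$ continuously via dominated convergence, and applies the extended continuous mapping theorem \citep[Theorem 18.11]{vaart_1998} to bounded continuous test functions (Portmanteau); it also works with the self-normalized quantities $\X^*-X_1^*\einsfun$ and invokes \citet[Theorem 2]{segers_2020} to pass back to the $t$-normalized statement~\eqref{MRV1} and to multivariate regular variation. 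You instead keep the $t$-normalization throughout and use a Skorokhod coupling to upgrade the inductive distributional convergence of $(\Z_{\pa_G(v)}(t),\varepsilon_v)$ to almost-sure convergence, after which Assumption~\ref{ass_main} applies pathwise. This is a valid alternative and arguably more transparent, but it carries exactly the bookkeeping burdens you flag: the coupling must be redone (or maintained) jointly over the whole vector at each stage, the continuous parameter $t$ must be reduced to arbitrary sequences $t_n\to\infty$, and one must note that the conditioning event $\{X_1^*>t\}$ only affects $\varepsilon_1$, so the remaining noises keep their product law under the conditional measure. The paper's test-function route avoids constructing couplings altogether, which is why it is the standard device in the tail-chain literature it cites.

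One caveat on your final step: your derivation of $\mathbb E\exp\{W_v^1\}=1$ via the change of variables $u=y_1$, $\tilde w=\y_{\setminus 1}-y_1\einsfun$ presupposes that the exponent measure admits a Lebesgue density $\lambda$, which does not follow from Assumption~\ref{ass_main} alone (the noise laws need not be absolutely continuous). The paper avoids this by citing \citet[Corollary 3]{segers_2020}, which yields the moment identity at the level of the measure, using only that $W_v^1>-\infty$ almost surely --- a fact you do have, since $\Psi_v$ is real-valued. Your computation is the density-level shadow of that argument and should either be rephrased at the measure level or accompanied by an explicit absolute-continuity hypothesis.
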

\begin{remark}
	Since $X^*_1$ is a root node, we can see the conditioning in~\eqref{MRV1} also as an intervention that sets the value of this node to a large random value with the same distribution as $X^*_1 \mid X_1^* > t$; limits after interventions on other sets of variables are studied in Section~\ref{sec:interventions_causality}.
\end{remark}
Instead of writing the extremal function as a structural causal models as in~\eqref{eSCM}, we can equivalently write the limit $\Y^{1}$ as
\begin{align}\label{eSCM2}
		Y^{1}_1 &:= R, \quad  Y^{1}_v := \Psi_v(\Y^{1}_{\pa_G(v)}, \varepsilon_v), \quad v \in V.
\end{align}
The extremal structure functions $\Psi_v$ are \emph{a priori} defined on all variables corresponding to the parents of node $v$ in the graph $G$ of the original SCM. It can however be that edges from certain parents in this graph are not inherited in the extremes, that is, the function $\Psi_v$ is constant along such parent nodes. We can therefore define an extremal graph $G_e = (V, E_e)$ with edge set $E_e$ being the smallest set such that the distribution of the random vector $\Y^1$ in~\eqref{eSCM2} is not changed when replacing $\pa_G(v)$ with $\pa_{G_e}(v)$ and adapting $\Psi_v$ accordingly for all $v\in V$.
The extremal graph $G_e$ is therefore a subgraph of $G$ in the sense that $E_e \subseteq E$. We assume this minimal representation of the extremal SCM in the sequel.
Examples illustrating this behavior are presented in Section~\ref{sec:examples} below.

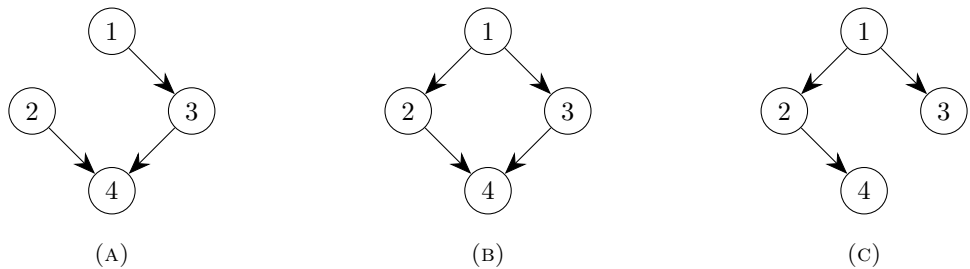
\begin{figure}[!t]
\begin{tikzpicture}[>={Stealth[length=3mm]}, node distance=1.5cm]
    \begin{scope}[local bounding box=graph1]
        \node[draw, circle] (A) {1};
        \node[draw, circle] (B) [below left of=A] {2};
        \node[draw, circle] (C) [below right of=A] {3};
        \node[draw, circle] (D) [below right of=B] {4};

        \draw[->] (A) -- (C);
        \draw[->] (B) -- (D);
        \draw[->] (C) -- (D);
        
        \node [below of=D, yshift=.75cm] {\parbox{0.3\linewidth}{\subcaption{}\label{subfig:a}}};
    \end{scope}

    \begin{scope}[shift={(5,0)}, local bounding box=graph2]
        \node[draw, circle] (1') {1};
        \node[draw, circle] (2') [below left of=1'] {2};
        \node[draw, circle] (3') [below right of=1'] {3};
        \node[draw, circle] (4') [below right of=2'] {4};

        \draw[->] (1') -- (2');
        \draw[->] (1') -- (3');
        \draw[->] (2') -- (4');
        \draw[->] (3') -- (4');
        
        \node [below of=4', yshift=.75cm] {\parbox{0.3\linewidth}{\subcaption{}\label{subfig:b}}};
    \end{scope}

	 \begin{scope}[shift={(10,0)}, local bounding box=graph2]
        \node[draw, circle] (1') {1};
        \node[draw, circle] (2') [below left of=1'] {2};
        \node[draw, circle] (3') [below right of=1'] {3};
        \node[draw, circle] (4') [below right of=2'] {4};

        \draw[->] (1') -- (2');
        \draw[->] (1') -- (3');
        \draw[->] (2') -- (4');
        
        \node [below of=4', yshift=.75cm] {\parbox{0.3\linewidth}{\subcaption{}\label{subfig:c}}};
    \end{scope}
\end{tikzpicture}
\caption{Examples of directed acyclic graphs on the nodes $V = \{1,\dots, 4\}$: a DAG with two root nodes (left); the diamond graph (center); a directed tree (right).}\label{graphs_ex}
\end{figure}
\begin{definition}\label{def_extremal_SCM}
	We call the random vector $\Y^1$ in~\eqref{eSCM2} arising as in Theorem~\ref{thm:SCM_limit} a realization from an extremal SCM $\mathcal M_e := (G_e, \{\Psi_j: j\in V\}, \mathbb P_{\varepsilon})$
	on the extremal DAG $G_e$ and with extremal structure functions  $\Psi_j: \mathbb R^{|\pa_{G_e}(j)|}\times\mathbb R \to \mathbb R$. This also uniquely induces a corresponding multivariate Pareto distribution $\mathbf Y$.
\end{definition}

If the set of extremal parents $\pa_{G_e}(v)$ consists of only one element, say $j\in V$, then the extremal structure function has the form 
$\Psi_v(x, e) = x + h(e),$
where $h:\mathbb R \to \mathbb R$ is a measurable function. Indeed, the homogeneity of $\Psi_v$ implies in this case for all $x \in \mathbb R$ that 
\[\Psi_v(x, e) = \Psi_v(0 + x, e) = x + \Psi_v(0, e).\]
We define $h(e) = \Psi_v(0, e)$ for all $e\in\mathbb R$, and note that this is a measurable function since $\Psi_v$ is defined as the limit of measurable functions in~\eqref{limit_cond}. Thus, whenever there is only one parent, we have an additive noise model with linear dependence on the parent.

\subsection{Examples}\label{sec:examples}
We illustrate Theorem~\ref{thm:SCM_limit} with examples for a random vector $\X$ that follows an SCM $\mathcal M := (G, \{f_j: j\in V\}, \mathbb P_{\varepsilon})$ on four nodes with $\G$ given by the diamond graph in Figure~\ref{subfig:b}. To simplify the examples, we assume the following structure for the first three variables
\begin{align*}
	X_1 :=f_1(\varepsilon_1)= \varepsilon_1, \qquad X_2 := f_2(X_1,\varepsilon_2)= X_1 + \varepsilon_2, \qquad X_3 := f_3(X_1,\varepsilon_3)= X_1 + \varepsilon_3,
\end{align*} 
where $\varepsilon_1$ is a standard exponential variable and $\varepsilon_j$, $j=2,3, 4$, are such that $\mathbb E \exp \{  \varepsilon_j \} = 1$ and $\mathbb E \exp \{ (1+\delta) \varepsilon_j \} < \infty$ for some $\delta > 0$. Note that with this definition, the tails of the three variables are all exponential and satisfy 
\begin{align}\label{exp_tail}
	 \lim_{q \to 1} H^{-1}(q) - F^{-1}_j(q) = 0, \quad j=2,3;
\end{align}
see \citet[Lemma 2]{zhang2023extremal}. By putting $q = F_j(u)$ and letting $u\to\infty$, this implies that $H^{-1} \circ  F_j$, $j=2,3$, are approximately the identity function, and therefore we have
\[ \lim_{t\to \infty} f^*_j(x_1 +t, e) - t   =  \lim_{t\to \infty} f_j(x_1 +t, e) - t = x_1 + e=:\Psi_j(x_1, e),\]
that is, the limiting extremal structure function is linear; as shown above, this is always the case for nodes with just one parent. We discuss two examples for the  structural assignment $X_4 := f_4(X_2, X_3, \varepsilon_4)$.

\begin{example}\label{ex:tail}
	Suppose that the structure function is given by
	\[ X_4 := f_4(X_2, X_3, \varepsilon_4) = \frac12\left\{X_2 + X_3 + \sqrt{ (X_2 - X_3)^2 + 1/(1+X_2^2 + X_3^2)}\right\} + \varepsilon_4. \]
	One can check that $X_4$ has an exponential tail up to a constant, that is, $\lim_{u\to \infty} H^{-1}\circ F_4(u) - u = c_4$ for $c_4\in\mathbb R$; see Supplementary Material~\ref{sec:app_proofs}. Therefore,
	\[ \lim_{t\to \infty} f^*_4(x_2 +t, x_3 + t, e) - t  = \max(x_2, x_3) +c_4 + e=:\Psi_v(x_2,x_3, e).\]
	In this case, the original and extremal DAGs coincide, but the structure functions differ. In particular, the original functions are not homogeneous, but the limiting $\max$ function is homogeneous.
\end{example}

\begin{example}\label{ex_different_eSCM}
	Suppose that the structure function is given by
	\[ X_4 := f_4(X_2, X_3, \varepsilon_4) = X_2 + \frac{1}{1+X_2^2 + X_3^2} + \varepsilon_4.\]	
	 With arguments very similar to Example~\ref{ex:tail},
	 we can check that $X_4$ has an exponential tail as in~\eqref{exp_tail}. Thus, $H^{-1}\circ F_4$ and $F_j^{-1}\circ H$, $j=2,3$, are approximately the identity and therefore
	\[ \lim_{t\to \infty} f^*_4(x_2 +t, x_3 + t, e) - t   =  \lim_{t\to \infty} f_4(x_2 +t, x_3 + t, e) - t = x_2 + e=:\Psi_v(x_2, e).\]
	The extremal structure function only depends on $x_2$ and the causal link from $X_3$ to $X_4$ is not present in the limit. Therefore, the minimal representation of the extremal SCM is defined on the DAG $G_e$ on the right-hand side of Figure~\ref{graphs_ex} and different from the original DAG $G$.
\end{example}

The examples above are just two simple instances of a rich class of SCMs in the domain of attraction of extremal SCMs. As extremal structure functions we already observe two different cases, namely a linear function and the maximum. In Section~\ref{sec:ESCM_connection} we will encounter several other such homogeneous functions that naturally arise from well-known models for multivariate Pareto distributions.  

\subsection{Extremal interventions}\label{sec:interventions_causality}

Theorem~\ref{thm:SCM_limit} describes extremes of the observational distribution of the SCM in~\eqref{SCM0}. The resulting extremal SCM also encodes information on the behavior of the system under extremal interventions. An intervention as defined in Definition~\ref{def:SCM_intervention} is called extremal if it is beyond the range of the available training data; see, e.g., Figure~\ref{fig:intro}. To make interventions comparable on the different marginal scales, we consider here perfect interventions $\intervene\{X_v := F_v^{-1}(q_v)\}$, where the variable $X_v$ is set to a high quantile with $q_v\to 1$. 
It will be convenient to parameterize $q_v$ through a common level $t > 0$ that governs how extreme the interventions are. We set 
$q_v = 1 - \exp(-t - \xi_v) = H(t + \xi_v)$, where $H$ is the standard exponential distribution function and the $\xi_v\in\mathbb R$, $v\in V$, quantify by how much the individual interventions deviate from the common level. 
For instance, if we have $n$ training observations, choosing $t=\log n$ yields interventions that are around or above the highest marginal observations of $X_v$, $v\in \mathcal I$.
We use here the exponential scale as baseline to define extremal interventions since additive changes correspond to changes of the extremal functions in the framework of multivariate regular variation; see Section~\ref{sec:extremal_CI}. We define extremal interventions and the corresponding SCMs.
\begin{definition}
	For a random vector $\X$ following an SCM $\mathcal M = (G, \{f_j: j\in V\}, \mathbb P_{\varepsilon})$, let $\mathcal I\subset V$ be the subset of nodes that is intervened on. We call 
	\begin{align}
		\label{extremal_do}
		\intervene\{X_v := F_v^{-1}\circ H(t + \xi_v)\} = \intervene\{X^*_v := t + \xi_v\}, \quad v\in\mathcal I,
	\end{align}
	an extremal intervention at level $t>0$ at values $\xi_v \in \mathbb R$, where $X^*_v = H^{-1}\circ F_v(\X_v)$ is defined with the distribution function $F_v$ of  $X_v$ in the original SCM. Following Definition~\ref{def:SCM_intervention}, the corresponding random vector $\tilde{\X} = \tilde{\X}^{(t)}$ follows the intervened SCM $\mathcal M^{\intervene(\X^*_\mathcal{I}:=t + \boldsymbol{\xi}_\mathcal{I})} := (G, \{\tilde f_j: j\in V\}, \mathbb P_{\varepsilon})$ with structure functions as in \eqref{f_intervened}.
\end{definition}

To introduce a notion of an extremal cause, we define the extremal limit $Y_j$ of a target variable $X_j$ under interventions on variables in $\mathcal I$ as 
\begin{align}\label{limit_Y}
	\mathbb P^{\intervene(\Y_{\mathcal I} := \boldsymbol{\xi}_\mathcal I)}_{Y_j}(A) := \lim_{t\to\infty}  \mathbb P^{\intervene(\X^*_{\mathcal I} := t + \boldsymbol{\xi}_\mathcal I)}(X^*_j - t \in A),
\end{align}
for any continuity set $A \subset [-\infty, \infty)$ of the limit,
and whenever the limit exists. Here it is important to allow also for values $-\infty$, since not all variables might be extreme after the intervention.
We call $\intervene(\Y_{\mathcal I} := \boldsymbol{\xi}_\mathcal I)$ a (limiting) extremal intervention.

\begin{definition}\label{def_extreme_cause}
	We say that $X_i$ is an extremal cause of $X_j$, if there exists a subset of indices $\mathcal I\subset V\setminus\{i,j\}$ and extremal interventions $\boldsymbol{\xi}_{\mathcal I\cup \{i\}}$ such that 
	\begin{align}\label{Y_intervened}
		 \mathbb P^{\intervene(\Y_{\mathcal I \cup\{i\}} := \boldsymbol{\xi}_{\mathcal I\cup\{i\}})}_{Y_j} \neq \mathbb P^{\intervene(\Y_{\mathcal I} := \boldsymbol{\xi}_\mathcal I)}_{Y_j}.
	\end{align}	
\end{definition}
Definition~\ref{def_extreme_cause} relates to the notion of causal relevance in \citet{galles1997axioms} and \citet{Pearl2009}, Section 7.3.3.
The rationale behind this definition is 
that for a variable $X_i$ to be an extremal cause of $X_j$, there must exist an intervention that changes the extremal limit $Y_j$. It is important to allow for a set of intervention variables $\mathcal I$ since some causes might only have an effect in combination with others; we will see examples of this later. For now, we consider two simple illustrative examples.

\begin{example}\label{ex:normal}
	Let $\varepsilon_1$ and $\varepsilon_2$ be independent, non-degenerate normal distributions with arbitrary means and variances. Define the structural causal model
	\[ X_1 := \varepsilon_1, \quad X_2 := X_1 + \varepsilon_2.\]
	It follows from~\cite{heffernan2004} that $\mathbb P^{\intervene(Y_1 := \xi_1)}_{Y_2} = \mathbb P^{\intervene(\emptyset)}_{Y_2} = \delta_{-\infty}$, where $\intervene(\emptyset)$ indicates the limit in~\eqref{Y_intervened} under the observational distribution.
	Therefore, there is no extremal causal effect from $X_1$ to $X_2$ (and neither in the other direction).
\end{example}

This example shows that the type of extremal causality that we consider here can be seen as a strong version of extremal effects. Indeed, we require that setting the cause $X_1$ to an extreme value results in a large value of the effect $X_2$ on its respective quantile scale.
For the normal distribution in the above example, this is not satisfied since the resulting value of $X^*_2$ on the exponential scale is much smaller than the intervention on $X_1^*$; see the left-hand side of Figure~\ref{fig:examples}.
In extreme value theory it is natural to compare variables on a common scale. In particular, two variables are termed asymptotically dependent if they have joint exceedances over their respective high quantiles with common probabilistic level; see the definition of extremal correlation in \cite{col1999}. In that sense, our definition is an extension of extremal dependence to causality.

\begin{figure}[tb!]
  \centering
  \begin{subfigure}{.4\textwidth}
    \centering
    \includegraphics[scale=.6]{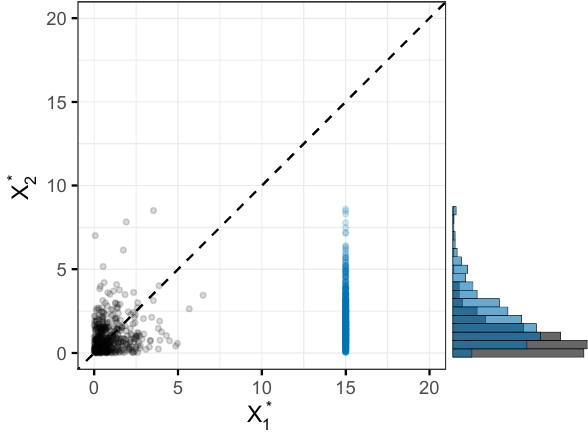}
  \end{subfigure}
  \begin{subfigure}{.4\textwidth}
    \centering
    \includegraphics[scale=.6]{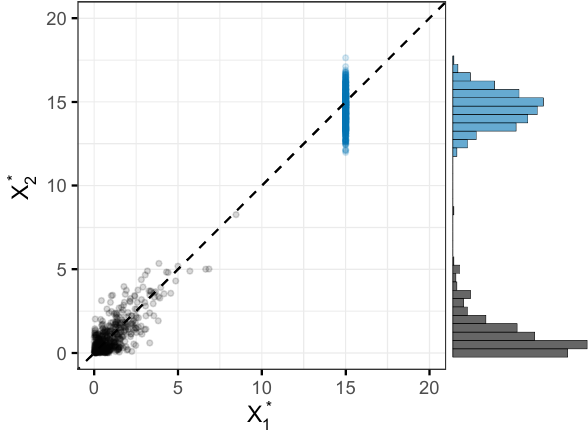}
  \end{subfigure}
  \caption{Data from from the models in Example~\ref{ex:normal} (left) and Example~\ref{ex:exp} (right) sampled from the observational distribution (black) and the distribution after an extremal intervention on $X^*_1$ (blue).
  }
  \label{fig:examples}
\end{figure}

\begin{example}\label{ex:exp}
	Let $\varepsilon_1$ follow a standard exponential distribution and, independently, let $\varepsilon_2$ be a centered normal distribution with variance $\sigma^2 > 0$. Define the structural causal model
	\[ X_1 := \varepsilon_1, \quad X_2 := X_1 + \varepsilon_2.\]
	Then $\mathbb P^{\intervene(Y_1 := \xi_1)}_{Y_2}$ equals a normal distribution $N(\xi_1 - \sigma^2/2, \sigma^2)$, which is different from $\mathbb P^{\intervene(\emptyset)}_{Y_2} = \delta_{-\infty}$.
	Therefore there is an extremal causal effect from $X_1$ to $X_2$.
\end{example}
In this example we have extremal causality since an extremal intervention on $X_1$ leads to an extreme value in $X_2$ on the same level on its quantile scale; see the right-hand side of Figure~\ref{fig:examples} and the derivation in Supplementary Material~\ref{app:ex-exp}. Note also that in contrast to Example~\ref{ex:normal}, the vector $(X_1, X_2)$ exhibits asymptotic dependence.

The next theorem characterizes the distribution under extremal interventions of general SCMs. 

\begin{thm}\label{thm:SCM_limit_intervened}
	Let $\mathbf X$ follow the SCM $\mathcal M = (G, \{f_v: v\in V\}, \mathbb P_\varepsilon)$ and suppose that Assumption~\ref{ass_main} also holds for sequences $\{\x(t)\} \subset [-\infty,\infty)^{|\pa_G(v)|}$ in the extended real line, and where the extremal functions are defined as $\Psi_v: [-\infty,\infty)^{|\pa_G(v)|} \to [-\infty,\infty)$.
	For extremal interventions at levels $t>0$ and values $\xi_v \in \mathbb R$, $v\in\mathcal I$ for some subset $\mathcal I\subset V$, we have
	\begin{align}\label{conv_dist}
		\lim_{t\to\infty}  \mathbb P^{\intervene(\X^*_{\mathcal I} := t + \boldsymbol{\xi}_\mathcal I)}(\X^* - t\einsfun \in A) = \mathbb P^{\intervene(\Y_{\mathcal I} := \boldsymbol{\xi}_\mathcal I)}(\Y \in A), 
	\end{align}
	for any continuity set $A\subset [-\infty,\infty)^d$ of the limit $\Y$, which has to be understood as the extremal SCM $\mathcal M_e^{\intervene(\Y_{\mathcal I} := \boldsymbol{\xi}_\mathcal I)} = (G_e, \{\tilde \Psi_v: v\in V\}, \mathbb P_\varepsilon)$ derived from the extremal SCM  of the observational distribution in Theorem~\ref{thm:SCM_limit}, but with structure functions 
	\begin{align}\label{tilde_Psi}
		\tilde \Psi_v(\x_{\pa(v)}, e) = 
		\begin{cases}
			\xi_v & v \in \mathcal I,\\
			-\infty & \{v \cup \an(v)\}\cap  \mathcal I = \emptyset,\\
			\Psi_v(\x_{\pa(v)}, e) & \text{otherwise.}\\
		\end{cases}
\end{align}
\end{thm}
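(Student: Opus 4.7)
The plan is to establish the joint convergence in~\eqref{conv_dist} by induction along a topological order $1 = v_1, v_2, \ldots, v_d$ of $G$. A crucial observation is that both the finite-$t$ system and the limiting extremal SCM $\mathcal{M}_e^{\intervene(\Y_{\mathcal{I}} := \boldsymbol{\xi}_{\mathcal{I}})}$ are driven by the same exogenous noise $(\varepsilon_v)_{v \in V}$, which provides a natural coupling: under the intervention, $X^*_v$ is a deterministic function of the noise variables $(\varepsilon_w)_{w \in \{v\} \cup \an(v)}$ and of the intervention offsets $(\xi_w)_{w \in \mathcal{I} \cap (\{v\} \cup \an(v))}$, and the analogous statement holds for $Y_v$ in the limiting extremal SCM.

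For the inductive step at node $v$, I would distinguish the three cases in~\eqref{tilde_Psi}. If $v \in \mathcal{I}$, then by~\eqref{extremal_do} one has $X^*_v - t = \xi_v$ deterministically, which matches $\tilde\Psi_v$. If $v \notin \mathcal{I}$, then combining~\eqref{SCM1} with~\eqref{f_intervened} yields
\[ X^*_v - t \; = \; f^*_v\bigl((\X^*_{\pa(v)} - t\einsfun) + t\einsfun, \, \varepsilon_v\bigr) - t. \]
The induction hypothesis gives joint convergence in distribution of $\X^*_{\pa(v)} - t\einsfun$ to a limit $\Y_{\pa(v)}$ taking values in $[-\infty, \infty)^{|\pa(v)|}$, and this limit is independent of $\varepsilon_v$ by the product structure of $\mathbb{P}_\varepsilon$ and the acyclicity of $G$. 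Applying Skorokhod's representation to the parent vector and then invoking the extended Assumption~\ref{ass_main} fiberwise in $\varepsilon_v$, one obtains $X^*_v - t \to \Psi_v(\Y_{\pa(v)}, \varepsilon_v)$ almost surely on the constructed space, jointly with the previously established limits. This transports back to joint convergence in distribution for the enlarged vector.

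It remains to identify $\Psi_v(\Y_{\pa(v)}, \varepsilon_v)$ with $\tilde\Psi_v$. When $\{v\} \cup \an(v)$ does not intersect $\mathcal{I}$, none of the parents of $v$ is intervened on or has an intervened ancestor, so by the induction hypothesis every component of $\Y_{\pa(v)}$ equals $-\infty$. The extended homogeneity $\Psi_v(\y + s\einsfun, e) = \Psi_v(\y, e) + s$, combined with the invariance $(-\infty, \ldots, -\infty) + s\einsfun = (-\infty, \ldots, -\infty)$, then forces $\Psi_v(-\infty\einsfun, \varepsilon_v) = -\infty$, matching the second case of~\eqref{tilde_Psi}. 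In all remaining cases at least one ancestor of $v$ belongs to $\mathcal{I}$, and one simply retains the observational extremal structure function $\Psi_v$, matching the third case.

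The main technical obstacle is the careful bookkeeping when the limiting parent vector is a mixture of finite coordinates (those indexed by ancestors affected by interventions) and $-\infty$ coordinates (those indexed by ancestors unaffected by interventions): continuity of $\Psi_v$ along such mixed sequences in the extended real line is precisely what the strengthened Assumption~\ref{ass_main} of the theorem supplies, and the extended homogeneity pins down the boundary values unambiguously. Assembling the limits across all $v$ in topological order yields the joint weak convergence in~\eqref{conv_dist}, with limiting vector $\Y$ generated by $\mathcal{M}_e^{\intervene(\Y_{\mathcal{I}} := \boldsymbol{\xi}_{\mathcal{I}})}$ as stated.
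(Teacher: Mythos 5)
Your proposal is correct and follows essentially the same route as the paper: induction along a topological order, the same three-way case split matching $\tilde\Psi_v$, and the same homogeneity argument forcing $\Psi_v(-\infty\einsfun,e)=-\infty$ when no ancestor is intervened on. The only difference is technical bookkeeping — you pass the limit through $f^*_v$ via Skorokhod representation and pathwise application of the extended Assumption~\ref{ass_main}, whereas the paper conditions out $\varepsilon_v$ and invokes the extended continuous mapping theorem; these are interchangeable devices for the same step.
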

The stronger version of Assumption~\ref{ass_main} in the above theorem can be easily checked to hold for Examples~\ref{ex:tail} and~\ref{ex_different_eSCM}, for instance. 
The intervened structure functions in~\eqref{tilde_Psi} are different from classical intervened SCMs, as we show in the next example.

\begin{figure}[tb!]
\begin{tikzpicture}[>={Stealth[length=3mm]}, node distance=1.5cm]
    \begin{scope}[local bounding box=graph1]
        \node[draw, circle] (1') {1};
        \node[draw, circle] (2') [below left of=1'] {2};
        \node[draw, circle] (3') [below right of=1'] {3};
        \node[draw, circle] (4') [below right of=2'] {4};

        \draw[->] (1') -- (2');
        \draw[->] (1') -- (3');
        \draw[->] (2') -- (4');
        \draw[->] (3') -- (4');

        \node[draw=none, fill=none, rectangle] at (1') [xshift=-.6cm, yshift=.25cm]  {\includegraphics[width=8mm]{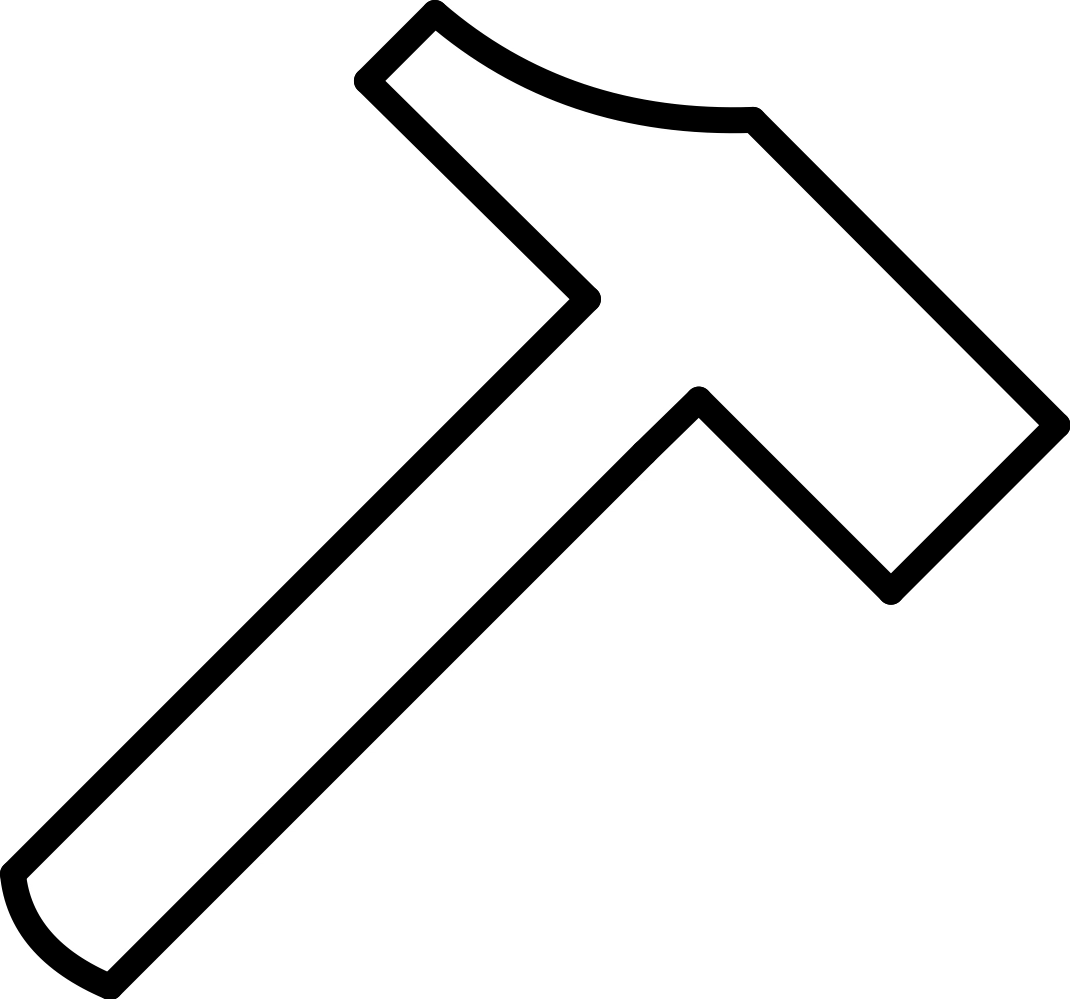}};

        \node [below of=4', yshift=.75cm] {\parbox{0.3\linewidth}{\subcaption{ $\intervene(Y_1 \coloneqq \xi_1)$}\label{subfig4:a}}};
    \end{scope}
    \begin{scope}[shift={(5,0)}, local bounding box=graph2]
        \node[draw, circle] (1') {1};
        \node[draw, circle] (2') [below left of=1'] {2};
        \node[draw, circle] (3') [below right of=1'] {3};
        \node[draw, circle] (4') [below right of=2'] {4};

        \draw[->] (1') -- (2');
        \draw[->] (1') -- (3');
        \draw[->] (2') -- (4');
        \draw[->] (3') -- (4');

        \node[draw=none, fill=none, rectangle] at (2') [xshift=-.6cm, yshift=.25cm]  {\includegraphics[width=8mm]{hammer.png}};

        \node[draw=none, fill=none, rectangle] at (3') [xshift=.6cm, yshift=.25cm]  {\includegraphics[width=8mm]{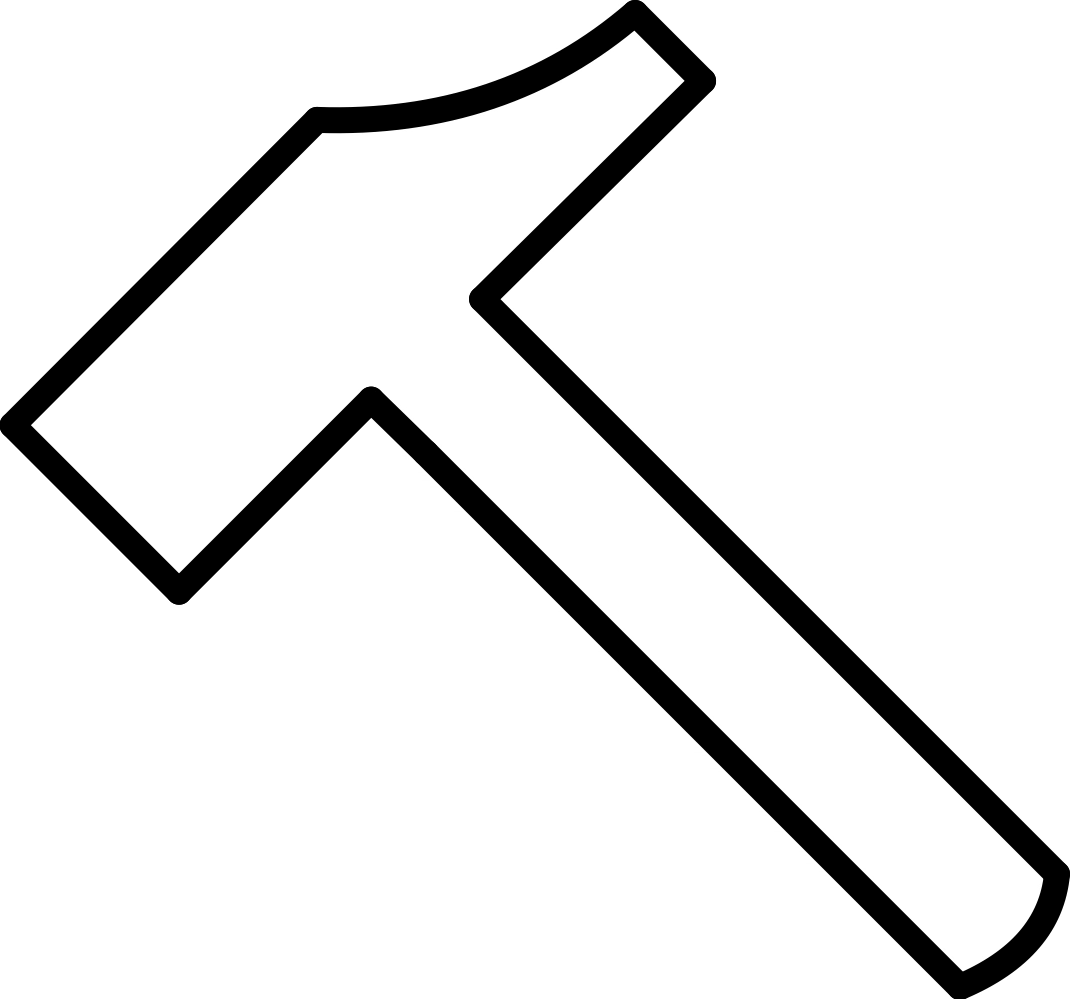}};
        
        \node [below of=4', yshift=.75cm] {\parbox{0.3\linewidth}{\subcaption{$\intervene(Y_2 \coloneqq \xi_2, Y_3 \coloneqq \xi_3)$}\label{subfig4:b}}};
    \end{scope}
	 \begin{scope}[shift={(10,0)}, local bounding box=graph2]
        \node[draw, circle] (1') {1};
        \node[draw, circle] (2') [below left of=1'] {2};
        \node[draw, circle] (3') [below right of=1'] {3};
        \node[draw, circle] (4') [below right of=2'] {4};

        \draw[->] (1') -- (2');
        \draw[->] (1') -- (3');
        \draw[->] (2') -- (4');
        \draw[->] (3') -- (4');

        \node[draw=none, fill=none, rectangle] at (3') [xshift=.6cm, yshift=.25cm]  {\includegraphics[width=8mm]{hammer-2.png}};
        
        \node [below of=4', yshift=.75cm] {\parbox{0.3\linewidth}{\subcaption{$\intervene(Y_3 \coloneqq \xi_3)$}\label{subfig4:c}}};
    \end{scope}
\end{tikzpicture}
\caption{Examples of three different configuration of extremal interventions on the extremal DAG $G_e$ in Figure~\ref{fig:examples}(b). The hammer indicates the variables on which interventions are performed.}
\end{figure}

\begin{example}
  Suppose the assumptions of Theorem~\ref{thm:SCM_limit_intervened} are satisfied for the SCM $\X$ with limit $\Y$ on the extremal DAG $G_e$ in Figure~\ref{fig:examples}(b). Recall that the first extremal structure functions are
  \begin{align*}
	    \Psi_1(e) = e, \quad 
	    \Psi_2(y_1, e) = \Psi_3(y_1, e) = y_1 + e,
	  \end{align*}
  and $\Psi_4(y_2, y_3, e)$ depends on the specific model.
  We consider different intervention scenarios.
  \begin{itemize}
	    \item[(i)] Consider the extremal intervention $\intervene(Y_1 \coloneqq \xi_1)$ in Figure~\ref{subfig4:a}, for $\xi_1 \in \RR$. The resulting intervened  structure functions are $ \tilde{\Psi}_1(e) = \xi_1$, and $\tilde\Psi_v = \Psi_v$ for $v=2,3,4$.
	    This agrees with the classical definition of intervened SCMs. All variables are extreme in this scenario as the intervention propagates from the root node.
	
	    \item[(ii)] Consider the extremal intervention $\intervene(Y_2 \coloneqq \xi_2, Y_3 \coloneqq \xi_3)$ in Figure~\ref{subfig4:b} for $\xi_2, \xi_3 \in \RR$. The resulting intervened  structure functions are
	    \begin{align*}
		      \tilde{\Psi}_1(e) = -\infty,\quad 
		      \tilde{\Psi}_2(y_1, e) = \xi_2, \quad
		      \tilde{\Psi}_3(y_1, e) = \xi_3, 
		    \end{align*}
	    and $\tilde{\Psi}_4 = {\Psi}_4$.
	    Unlike the classical definition of intervened SCMs, if neither $v$ or any ancestor $\an(v)$ have been intervened on, then $Y_v$ will take value $-\infty$; this indicates that in the extreme scenario produced by the intervention, these nodes are not extreme.
	
	    \item[(iii)] Under the extremal intervention $\intervene(Y_3 \coloneqq \xi_3)$ in Figure~\ref{subfig4:c}, for $\xi_3 \in \RR$, the intervened structure functions for the first three variables are
	    \begin{align*}
		      \tilde{\Psi}_1(e) = -\infty,\quad 
		      \tilde{\Psi}_2(y_1, e) = -\infty, \quad
		      \tilde{\Psi}_3(y_1, e) = \xi_3,\quad
		    \end{align*}
	    The explicit expression of $\tilde{\Psi}_4$ depends on the definition of $\Psi_4$.
	    For example, if $\Psi_4(y_2, y_3, e) = \min(y_2, y_3) + e$, then we have $\tilde{\Psi}_4(y_2, y_3, e) = -\infty$, or if $\Psi_4(y_2, y_3, e) = \max(y_1, y_2) + e$, then $\tilde{\Psi}_4(y_2, y_3, e) = \xi_3 + e$.
	    Like in classical intervened SCM, the structure functions of the descendants of an intervened node do not change; this follows from  the modularity principle \citep{Haavelmo1944}. However, evaluating such structure functions under extreme interventions might yield $-\infty$.
	  \end{itemize}
\end{example}

We can now connect the notion of direct extremal causes to the DAG  $G_e$ from the extremal SCM.
\begin{proposition}\label{prop:extr_cause_extr_scm}
  Consider an SCM $\mathcal M = (G, \{f_v: v\in V\}, \mathbb P_\varepsilon)$ and let $\X$ be the corresponding random vector. Suppose the assumptions of Theorem~\ref{thm:SCM_limit_intervened} are satisfied. Then, for $i,j\in V$, $X_i$ is a direct extremal cause  of $X_j$ as in Definition~\ref{def_extreme_cause} if and only if $i \in \pa_{G_e}(j)$, that is, $X_i$ is a parent of $X_j$ in the extremal DAG $G_e$.
\end{proposition}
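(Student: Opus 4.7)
The plan is to establish both directions of the equivalence by computing the interventional distribution of $Y_j$ explicitly via Theorem~\ref{thm:SCM_limit_intervened}, combined with the minimality that characterizes the extremal DAG $G_e$ (as described right after Theorem~\ref{thm:SCM_limit}).

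For the \emph{if} direction, suppose $i \in \pa_{G_e}(j)$. By the minimal construction of $G_e$, $\Psi_j$ must depend non-trivially on its $y_i$-argument: otherwise the edge $(i,j)$ could be removed from $E_e$ without changing the law of $\Y^1$, contradicting minimality. Consequently there exist values $\y^*_{\pa_{G_e}(j)\setminus\{i\}} \in \RR^{|\pa_{G_e}(j)|-1}$ and $\xi_i \neq \xi_i' \in \RR$ such that $\Psi_j(\y^*_{\pa_{G_e}(j)\setminus\{i\}}, \xi_i, \varepsilon_j)$ and $\Psi_j(\y^*_{\pa_{G_e}(j)\setminus\{i\}}, \xi_i', \varepsilon_j)$ have different distributions. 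Take $\mathcal I := \pa_{G_e}(j)\setminus\{i\}$ and $\boldsymbol\xi_{\mathcal I} := \y^*_{\mathcal I}$, so that $\mathcal I \cup \{i\} = \pa_{G_e}(j)$. Applying Theorem~\ref{thm:SCM_limit_intervened}, under $\intervene(\Y_{\mathcal I\cup\{i\}} := \boldsymbol\xi_{\mathcal I\cup\{i\}})$ each $\tilde Y_v$ with $v \in \pa_{G_e}(j)$ equals $\xi_v$; at node $j$, since $\pa_{G_e}(j) \subseteq \an(j) \cap (\mathcal I\cup\{i\})$ is non-empty, the third case of~\eqref{tilde_Psi} applies, yielding $\tilde Y_j = \Psi_j(\boldsymbol\xi_{\pa_{G_e}(j)}, \varepsilon_j)$. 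Its distribution differs between $\xi_i$ and $\xi_i'$, witnessing \eqref{Y_intervened}.

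For the \emph{only if} direction, I argue the contrapositive: suppose $i \notin \pa_{G_e}(j)$. Read in line with the Galles--Pearl notion of direct causal relevance referenced after Definition~\ref{def_extreme_cause}, ``direct'' requires the witnessing $\mathcal I$ to block all non-$i$ extremal paths into $j$; at the very least this forces $\pa_{G_e}(j)\setminus\{i\} \subseteq \mathcal I$ (so that every extremal parent of $j$ lies in $\mathcal I$, using $i\notin \pa_{G_e}(j)$). Under any such intervention, Theorem~\ref{thm:SCM_limit_intervened} gives $\tilde Y_v = \xi_v$ for every $v \in \pa_{G_e}(j)$, and then the third case of~\eqref{tilde_Psi} at $j$ produces $\tilde Y_j = \Psi_j(\boldsymbol\xi_{\pa_{G_e}(j)}, \varepsilon_j)$. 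This expression has no dependence on $\xi_i$, so the distributions of $Y_j$ under $\intervene(\Y_{\mathcal I\cup\{i\}} := \cdot)$ and $\intervene(\Y_{\mathcal I} := \cdot)$ coincide for every choice of intervention values, which is the negation of \eqref{Y_intervened}.

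The main obstacle is interpretive rather than computational: Definition~\ref{def_extreme_cause} taken in isolation, without the directness qualifier, would actually flag every extremal ancestor of $X_j$ in $G_e$ as a cause (for example, with $\mathcal I = \emptyset$, intervening on an ancestor already shifts $Y_j$ off the degenerate $\delta_{-\infty}$ limit that otherwise arises when no variable is intervened on). Pinning the equivalence down to \emph{parents} in $G_e$ therefore depends on formalising ``direct'' as the requirement that $\mathcal I$ screens off indirect influences by including $\pa_{G_e}(j)\setminus\{i\}$. Once this is fixed, both directions follow transparently from the structural form~\eqref{tilde_Psi} of the intervened extremal SCM and the minimal construction of $G_e$.
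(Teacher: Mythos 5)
Your proof is correct and rests on the same two pillars as the paper's: the explicit form~\eqref{tilde_Psi} of the intervened extremal SCM from Theorem~\ref{thm:SCM_limit_intervened}, and the minimality of the representation of $\Psi_j$ on $\pa_{G_e}(j)$. The execution differs in two ways worth recording. First, the paper fixes the conditioning set to $\mathcal I = V\setminus\{i,j\}$ in both directions (the Galles--Pearl convention of intervening on all remaining variables), and for the direction $i\in\pa_{G_e}(j)\Rightarrow$ cause it argues by contradiction: assuming equality of the two interventional laws for all $\boldsymbol\xi$, it integrates over the observational law $\mathbb P_{\Y^1_{V\setminus\{i,j\}}}$ to obtain $\Psi_j(\Y^1_{\pa_{G_e}(j)},\varepsilon_j)\stackrel{d}{=}\Psi_j(\Y^1_{\pa_{G_e}(j)\setminus\{i\}},\xi_i,\varepsilon_j)$, contradicting minimality; it also treats the case where $i$ is the root node separately, since there the non-intervened $Y_i$ degenerates to $-\infty$. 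You instead intervene only on $\mathcal I=\pa_{G_e}(j)\setminus\{i\}$ and extract the witnessing values directly from minimality, which is more economical, avoids the integration step, and sidesteps the root-node case because you only ever evaluate $\Psi_j$ at finite intervention values. One small patch is needed: \eqref{Y_intervened} compares $\intervene(\Y_{\mathcal I\cup\{i\}})$ with $\intervene(\Y_{\mathcal I})$ (no intervention on $i$ at all), whereas you exhibit two intervened values $\xi_i\neq\xi_i'$ with different laws; add the one-line observation that the two laws cannot both coincide with $\mathbb P^{\intervene(\Y_{\mathcal I}:=\boldsymbol\xi_{\mathcal I})}_{Y_j}$, so at least one choice witnesses~\eqref{Y_intervened}. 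Second, you are right that Definition~\ref{def_extreme_cause} as literally stated (an existential over $\mathcal I$) characterizes extremal ancestors rather than parents, and that the proposition only holds once ``direct'' is formalized by forcing $\pa_{G_e}(j)\setminus\{i\}\subseteq\mathcal I$; the paper handles this only implicitly by silently restricting to $\mathcal I=V\setminus\{i,j\}$, so your explicit treatment of this point is a genuine improvement in precision rather than a deviation.
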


\section{Directed extremal graphical models}\label{sec:DEGM}

Relying on a novel notion of extremal conditional independence, \cite{EH2020} introduced extremal graphical models on undirected graphs through the pairwise Markov property. Recent work has established various tools for inference and structure learning in this model class. In this section, we define directed extremal graphical models on a DAG $G=(V,E)$ that satisfy suitable directed Markov properties. We show that the limiting models arising from SCMs in Section~\ref{sec:eSCM} are directed extremal graphical models with respect to their extremal graph $G_e=(V,E_e)$ in Definition~\ref{def_extremal_SCM}.

To simplify notation in this section, we use a generic graph $G = (V,E)$ to define extremal graphical models and denote with $G_e$ only the graph arising from the extremes of an SCM. We also write $\pa(v)$ instead of $\pa_G(v)$ for the parents of a node $v\in V$ in the graph $G$.

\subsection{Directed extremal graphical models}\label{sec:dEGM}

The conditional independence notion $\perp_e$ for multivariate Pareto distributions in Definition~\ref{def:extremal_indep} can be used to define a directed version of extremal graphical models on a directed acyclic graph (DAG) $G=(V,E)$. We write $A\indep_{G} B \mid C$ when $C$ d-separates $A$ and $B$ in the DAG $G$; see Definition~\ref{def:d-sep} in the Supplementary Material.

\begin{definition}\label{dEGM}
	Let $G = (V,E)$ be a DAG. A multivariate Pareto distribution $\Y$ with positive and continuous exponent measure density is a directed extremal graphical model on $G$ if its distribution satisfies the extremal global Markov property relative to $G$, that is,
	\[ A\indep_{G} B \mid C \quad \Rightarrow \quad \Y_A\perp_e \Y_B \mid \Y_C, \]
	for all disjoint index sets $ A,B,C\subset V $.
\end{definition}

From the results in \cite{EH2020} we can directly derive a key property of the extremal graph structure $G$. Since a multivariate Pareto distribution $\Y$ with positive and continuous exponent measure density can not exhibit extremal independence $\Y_A \perp_e \Y_B$ for index sets $A,B\subset V$, the DAG $G$ must always be rooted, that is, there exists a node $v^*\in V$ such that for any other node $v\in V$ there is a directed path from $v^*$ to $v$. Indeed, otherwise there would be two nodes $v,w \in V$ that are d-separated by the empty set, that is, $v \indep_G w$, resulting in the contradiction $Y_v\perp_e Y_w$. 

In the classical theory of graphical models, under some conditions, the directed global Markov property is equivalent to other Markov properties such as the local Markov property or the Markov factorization property  \citep[e.g.,][Definition~6.21]{PJS2017}.
We introduce the corresponding extremal Markov properties and show that directed extremal graphical models can equivalently be defined through those conditions.

\begin{proposition}\label{prop:ext-markov-properties}
	Let $G=(V,E)$ be a DAG and let $\Y$ be a multivariate Pareto distribution with positive and continuous exponent measure density $\lambda$. Then the following are equivalent.
	\begin{enumerate}
		\item[(a)] The distribution of $\Y$ satisfies the extremal global Markov property relative to $G$.
		\item[(b)] The distribution of $\Y$ satisfies the extremal local Markov property relative to $G$, that is, 
		\[ Y_i \perp_e \Y_{\nd(i)\setminus \pa(i)} \mid \Y_{\pa(i)},\quad  \text{for all } i \in V.\]
		\item[(c)] The distribution of $\Y$ satisfies the extremal Markov factorization property relative to $G$, that is, the exponent measure density factorizes 
		\begin{align}\label{density_factorization}
			\lambda(\y) & = \prod_{v\in V} \lambda(y_v\mid \y_{\pa(v)}), \quad \forall \y \in \mathbb R^d,
		\end{align}
		where $\lambda(y_v\mid \y_{\pa(v)}):= \lambda(y_v,\y_{\pa(v)}) /\lambda(\y_{\pa(v)}) $ is the conditional exponent measure density.
	\end{enumerate}
\end{proposition}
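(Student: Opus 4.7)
The plan is to establish the cycle $(a) \Rightarrow (b) \Rightarrow (c) \Rightarrow (a)$. The implication $(a) \Rightarrow (b)$ is immediate: for each $i \in V$, it is a standard graph-theoretic fact that $\pa(i)$ d-separates $\{i\}$ from $\nd(i) \setminus \pa(i)$ in $G$, and applying the extremal global Markov property to this triple is exactly the extremal local Markov statement.

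For $(b) \Rightarrow (c)$, I would fix a topological ordering $v_1, \dots, v_d$ of $G$ with $v_1$ the root and write the chain rule for the positive density,
\[
\lambda(\y) \;=\; \lambda(y_{v_1}) \prod_{k=2}^{d} \lambda\bigl(y_{v_k} \mid y_{v_1}, \dots, y_{v_{k-1}}\bigr),
\]
where conditional exponent measure densities are defined as ratios of marginals. Since $\{v_1, \dots, v_{k-1}\} \setminus \pa(v_k) \subseteq \nd(v_k) \setminus \pa(v_k)$, combining the local Markov property with the decomposition axiom of the semi-graphoid $\perpe$ (cf.\ Section~\ref{sec:extremal_CI} and \citealp{REZ2021}) yields $Y_{v_k} \perpe \Y_{\{v_1,\dots,v_{k-1}\} \setminus \pa(v_k)} \mid \Y_{\pa(v_k)}$. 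Invoking the density characterization of $\perpe$ recalled in Section~\ref{sec:extremal_CI} then collapses each chain-rule factor to $\lambda(y_{v_k} \mid \y_{\pa(v_k)})$, which is~\eqref{density_factorization}.

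For $(c) \Rightarrow (a)$, I would deploy the classical moralization argument of \citet{lau1996}. Given $A \indep_{G} B \mid C$, set $M = \An(A \cup B \cup C)$ and form the moral graph $(G_M)^m$ of the induced subgraph on $M$; a standard property of d-separation guarantees that $C$ separates $A$ from $B$ in $(G_M)^m$. Since $M$ is ancestral, integrating~\eqref{density_factorization} over $V \setminus M$ in reverse topological order is legitimate: each conditional factor integrates to one in its free variable, leaving $\lambda(\y_M) = \prod_{v \in M} \lambda(y_v \mid \y_{\pa(v)})$. Each factor depends only on $\{v\} \cup \pa(v)$, which is a clique of $(G_M)^m$, so $\lambda(\y_M)$ factorizes over the cliques of the moral graph, and partitioning these cliques according to the separation by $C$ produces $\lambda(\y_M) = g(\y_{A \cup C})\, h(\y_{B \cup C})$. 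Marginalizing over $M \setminus (A \cup B \cup C)$ and identifying the resulting marginals yields $\lambda(\y_{A \cup B \cup C}) = \lambda(\y_{A \cup C}) \lambda(\y_{B \cup C}) / \lambda(\y_C)$, which by Section~\ref{sec:extremal_CI} is exactly $\Y_A \perpe \Y_B \mid \Y_C$.

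The principal obstacle is that $\lambda$ is the density of a homogeneous, infinite exponent measure rather than of a probability measure, so one must verify that the classical chain-rule, marginalization, and clique-factorization manipulations remain valid. This is ensured by the positivity and continuity of $\lambda$, combined with the key fact that conditional exponent measure densities $\lambda(y_v \mid \y_S)$ integrate to one in $y_v$, and by the factorization characterization of $\perpe$; together these permit the classical directed graphical model arguments of \citet{lau1996} to transfer verbatim to the extremal setting.
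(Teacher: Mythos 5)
Your proof is correct, but it follows a genuinely different route from the paper's. The paper dispatches the equivalence (a)$\,\Leftrightarrow\,$(b) in one line by noting that $\perp_e$ forms a semi-graphoid and invoking the known equivalence of local and global Markov properties for semi-graphoids \citep[Proposition 4]{Lauritzen1990}; it then proves (b)$\,\Rightarrow\,$(c) by repeatedly peeling off a terminal node $i$, passing to the auxiliary vector $\Y^k$ for some $k\in\pa(i)$ (whose density is $\lambda$ on $\mathcal L^k$), applying the \emph{classical} local Markov property there, and using homogeneity of $\lambda$ to extend the resulting factorization from $\mathcal L^k$ to all of $\mathbb R^d$; and it proves (c)$\,\Rightarrow\,$(b) by integrating out descendants and reading off the product form $h(\y_{\pa(i)\cup\{i\}})g(\y_{\nd(i)})$. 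You instead close the cycle (a)$\,\Rightarrow\,$(b)$\,\Rightarrow\,$(c)$\,\Rightarrow\,$(a): your (b)$\,\Rightarrow\,$(c) works directly with the chain rule for $\lambda$ on $\mathbb R^d$ and the factorization characterization of $\perp_e$, which neatly sidesteps the paper's detour through $\Y^k$ and the homogeneity extension, and your (c)$\,\Rightarrow\,$(a) is the full moralization argument, which the paper avoids entirely by routing the global Markov property through the semi-graphoid equivalence. Both routes are sound; yours is more self-contained at the level of densities but carries more graph-theoretic machinery (ancestral sets, moral graphs, clique partitioning), while the paper's leans on two external results (the semi-graphoid property of $\perp_e$ and Lauritzen's local/global equivalence) to keep the density manipulations minimal. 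Two small points you should make explicit: the factorization characterization $\lambda(\y_{A\cup B\cup C})=\lambda(\y_{A\cup C})\lambda(\y_{B\cup C})/\lambda(\y_C)$ degenerates when the conditioning set is empty, so your chain-rule step at $k\ge 2$ and your final identification in the moralization step both implicitly need $\pa(v_k)\neq\emptyset$ and $C\neq\emptyset$; this is automatic because a positive continuous exponent measure density forces the DAG to be rooted (so no node other than the root has empty parent set, and no two nodes are d-separated by $\emptyset$), but the justification deserves a sentence. Likewise, the finiteness of the partial integrals $\int g$ and $\int h$ in the identification step should be noted (it follows from finiteness of the marginal exponent measure densities via Tonelli).
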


The last equivalence is most relevant since it tells us that extremal directed graphical models have densities that factorize into lower-dimensional parts. This yields a sparse representation of the extremal dependence in these models. Interestingly, while $\lambda$ is the density of the infinite measure $\Lambda$, the conditional densities are probability densities.
This follows from the simple calculation
\[\int \lambda(y_v\mid \y_{\pa(v)}) \mathrm d y_v = \frac{1}{\lambda(\y_{\pa(v)})} \int \lambda(y_v,\y_{\pa(v)}) \mathrm d y_v  = 1,\]
where the last equation follows since the integration results in the marginal density $\lambda(\y_{\pa(v)})$.
Another important property of the conditional densities, which is inherited from the homogeneity of the exponent measure density, is the fact that $\lambda(y_v + t\mid \y_{\pa(v)}+ t\bs 1) = \lambda(y_v\mid \y_{\pa(v)})$ for all $\y\in \mathbb R^d$ and $t \in \mathbb R$.

The density factorization~\eqref{density_factorization} can be used as a principle for constructing a valid multivariate Pareto density in $d$ dimensions. Indeed, suppose that for all $v \in V$, the conditional density $\lambda(y_v\mid \y_{\pa(v)})$ arises from a valid multivariate Pareto density in dimension $|\pa(v)| + 1$. Then the density defined in~\eqref{density_factorization} is the density of a multivariate Pareto distribution. We give here a few examples of valid conditional densities; for derivations see the Supplementary Material.
\begin{example}\label{ex:logistic}
	Recall the exponent measure density of a $d$-dimensional extremal logistic distribution in Example~\ref{ex:logistic_prelim}.
	The corresponding logistic conditional density for node $v$ with parents $\pa(v)$ is
	\[ \lambda(y_v\mid \y_{\pa(v)}) = \left(1 + \frac{1}{\sum_{i \in \pa(v)} \exp\left\{ \frac{y_v - y_i}{\theta}\right\} } \right)^{\theta-|\pa(v)| - 1}	\frac{|\pa(v)|/ \theta-1}{\sum_{i \in \pa(v)} \exp\left\{ \frac{y_v-y_i}{\theta}\right\} }.   \]
\end{example}

\begin{example}\label{ex:dirichlet}
	Recall the exponent measure density the extremal Dirichlet distribution in Example~\ref{ex:dirichlet_prelim}. The corresponding Dirichlet  conditional density for node $v$ with parents $\pa(v)$ is
   \begin{align*} \lambda(y_v\mid \y_{\pa(v)}) &=  C\times   \left( 1 + \frac{1}{\sum_{j \in \pa(v)} \frac{\alpha_j}{\alpha_v} \exp\{y_j - y_v\}}\right)^{-1 -\sum_{i\in\{v\}\cup\pa(v) }\alpha_i} 
   \left( \sum_{j \in \pa(v)} \frac{\alpha_j}{\alpha_v} \exp\{y_j - y_v\} \right)^{-\alpha_v}
   \end{align*}
   for some constant $C$ depending on the $\alpha_i$. 
\end{example}

\begin{example}\label{ex:HR_cond}
	Recall the exponent measure density the \HR{} distribution in Example~\ref{ex:HR_cond_prelim}. The conditional density for node $v$ with parents $\pa(v)$ is
	 \begin{align*}
	 	\lambda(y_v\mid \y_{\pa(v)}) &=  \frac{1}{\sqrt{2\pi \vartheta_{vv}^{-1}}}\exp\left(-\frac{1}{2\vartheta_{vv}^{-1}} (y_v-\mu^*)^2 \right),
	 \end{align*}
	 where $\vartheta=\theta(\{v\}\cup \pa(v))$ and
	 \begin{align*}
	 	\mu^*=\begin{pmatrix}
	 		-\frac{1}{2}\Gamma_{v,\pa(v)}, 1\\
	 	\end{pmatrix}\begin{pmatrix}
	 	-\frac{1}{2}\Gamma_{\pa(v),\pa(v)} & \mathbf{1}\\
	 	\mathbf{1}&0\\
	 	\end{pmatrix}^{-1}\begin{pmatrix}
	 	\y_{\pa(v)}\\
	 	1\\
	 	\end{pmatrix}.
	 \end{align*}
\end{example}
Note that it is not required that the joint density is logistic or \HR{}, but one may use any combination of valid conditional densities for the different nodes $v \in V$ in~\eqref{density_factorization}.

\subsection{Connection to extremal SCMs}\label{sec:ESCM_connection}

In Section~\ref{sec:eSCM} we have shown that the extremes of structural causal models converge to a multivariate Pareto distribution with a, possibly different, extremal causal structure $G_e$. A natural question is whether this limit is a directed extremal graphical model in the sense of Definition~\ref{dEGM} on the DAG $G_e$. The answer is positive.
\begin{proposition}\label{prop:extr_scm_is_dgm}
	Let $\Y$ be a multivariate Pareto distribution, and let $G_e = (V,E_e)$ be a DAG. If $\Y$ is an extremal SCM on $G_e$ then it is also a directed extremal graphical model on $G_e$. 
\end{proposition}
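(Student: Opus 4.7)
My plan is to show that the extremal SCM implies the extremal global Markov property on $G_e$ by observing that the auxiliary vector $\Y^1$ is itself a classical probabilistic SCM on $G_e$, so that standard graphical model theory applies; the extremal conditional independence statements for $\Y$ are then recovered using the density-based shortcut recalled in Section~\ref{sec:extremal_CI}.

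The first step is to translate the extremal SCM representation $\Y^1 \stackrel{d}{=} R\bs 1 + \W^1$ into a classical SCM on $G_e$. Using the homogeneity $\Psi_v(\y + s\bs 1, e) = \Psi_v(\y, e) + s$, each component satisfies
\begin{align*}
  Y^1_1 = R, \qquad Y^1_v = R + \Psi_v(\W^1_{\pa_{G_e}(v)}, \varepsilon_v) = \Psi_v(\Y^1_{\pa_{G_e}(v)}, \varepsilon_v), \quad v \in V \setminus \{1\},
\end{align*}
so $\Y^1$ is itself an SCM on $G_e$ in the sense of Definition~\ref{def:SCM_intervention} with mutually independent noise variables $R, \varepsilon_2, \dots, \varepsilon_d$. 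Standard results on probabilistic directed graphical models (e.g.\ Proposition~6.31 in \cite{PJS2017}) then show that $\Y^1$ satisfies the classical directed global Markov property relative to $G_e$: for every disjoint triple $A, B, C \subset V$ with $A \indep_{G_e} B \mid C$, the probabilistic conditional independence $\Y^1_A \indep \Y^1_B \mid \Y^1_C$ holds.

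The second step is the passage from the probabilistic to the extremal Markov property. By the density-based characterization recalled after Definition~\ref{def:extremal_indep} (Lemma A.7 of \cite{eng_iva_kir}), as soon as the exponent measure has a Lebesgue density it suffices to verify~\eqref{CI_k} for a single value of $k$. Taking $k = 1$, the previous step yields $\Y_A \perp_e \Y_B \mid \Y_C$ for every d-separating triple in $G_e$, which is precisely the extremal global Markov property; Definition~\ref{dEGM} is then fulfilled and $\Y$ is a directed extremal graphical model on $G_e$.

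The main piece to handle carefully is the regularity required by Definition~\ref{dEGM}, namely that $\Lambda$ admits a positive continuous Lebesgue density. This should follow by induction along a topological ordering of $G_e$: assuming the $\varepsilon_v$ have positive continuous densities and that each map $e \mapsto \Psi_v(\y, e)$ is a sufficiently regular (e.g.\ $C^1$-diffeomorphic) transformation, a change-of-variables argument produces a positive continuous joint density for $\Y^1$ on $\LL^1$, which extends to $\lambda$ on $\RR^d$ via the homogeneity of $\Lambda$. This verification is the only technical obstacle; everything else reduces to structural manipulations.
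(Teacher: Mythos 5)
Your proposal is correct and follows essentially the same route as the paper: both arguments observe that the auxiliary vector $\Y^1$ is a classical SCM on $G_e$ with independent noises, deduce a classical directed Markov property for $\Y^1$, and transfer it to extremal conditional independence via the fact that checking a single index $k$ (here $k=1$) suffices when the exponent measure has a density. The only cosmetic difference is that the paper verifies the extremal \emph{local} Markov property and invokes the equivalence in Proposition~\ref{prop:ext-markov-properties}, whereas you pass directly to the global Markov property of the SCM; your closing remark on establishing positivity and continuity of the density is a reasonable extra precaution, but the paper treats this as part of the standing hypotheses of Definition~\ref{dEGM} rather than something to be proved.
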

To make this connection more explicit, we identify the extremal structure functions $\Psi_v$ corresponding to the examples of conditional exponent measure densities in Section~\ref{sec:dEGM}. This also provides additional examples for $\Psi_v$ that go beyond the linear and maximum functions encountered in Section~\ref{sec:eSCM}.
In fact, it will turn out that all of the examples below have the form of an additive noise models
\begin{align}\label{Phi_simple}
	  \Psi_v(\X_{\pa(v)}, \varepsilon_v) =  \Phi_v(\X_{\pa(v)}) +  \varepsilon_v, 
\end{align}
where $\Phi_v: \mathbb R^{|\pa(v)|} \to \mathbb R$ is a homogeneous structure function only acting on the parents of $v\in V$.
Note that here and for the remainder of the section, we abbreviate for convenience $\pa(v)=\pa_{G_e}(v)$.

\begin{example}
	Consider the model in Example~\ref{ex:logistic} where the conditional exponent measure density was constructed based on the logistic distribution. We then have an additive noise model as in~\eqref{Phi_simple} with
	\[ \Phi_v(\X_{\pa(v)}) = -\theta\log \sum_{i \in {\pa(v)}} \exp(-X_i / \theta), \]
	and the independent noise variable $\varepsilon_v$ has distribution function $F_{\varepsilon_v}(y) = (1 + e^{-y/\theta})^{\theta - |\pa(v)|}$, $y \in \mathbb R$.
\end{example}

\begin{example}
	Consider the model in Example~\ref{ex:dirichlet} where the conditional exponent measure density was constructed based on the Dirichlet distribution. We then have an additive noise model as in~\eqref{Phi_simple} with
	\[ \Phi_v(\X_{\pa(v)}) = \log \sum_{i \in {\pa(v)}}\frac{\alpha_i}{\alpha_v} \exp(X_i ), \]
	and the independent noise variable $\varepsilon_v$ has density function $f_{\varepsilon_v}(y) = C (1 + e^{y})^{-1 - \sum_{i\in\pa(v)\cup c} \alpha_i} e^{\alpha_v y}$, $y \in \mathbb R$.
\end{example}

\begin{example}\label{ex:HR_struct}
	Consider a H\"usler--Reiss vector $\Y$.
	From Example~\ref{ex:HR_cond} we obtain
	\begin{align*}
		\Phi_v(\X_{\pa(v)})&=\left(-\frac{1}{2}\Gamma_{v,\pa(v)}\theta(\pa(v))+\mathbf{p}^{\top}(\pa(v))\right)\X_{\pa(v)},
	\end{align*}
where the noise variables $\varepsilon_{v}$ are univariate Gaussians with mean $-\frac{1}{2}\Gamma_{v,\pa(v)}\mathbf{p}(\pa(v))+\sigma^2(\pa(v))$ and variance $\vartheta_{vv}^{-1}$. Note that the functions $\Phi_v$ are linear combinations of $\X_{\pa(v)}$.
\end{example}

While not all extremal SCMs have an additive noise representation~\eqref{Phi_simple}, it seems to cover the most popular examples used in the literature. We also use this representation as a starting point to define new extremal SCMs, and therefore also directed extremal graphical models. 

\begin{proposition}\label{prop:ESCM}
	Let $G_e = (V,E_e)$ be a DAG with one root node, say node 1. Define a $d$-dimensional random vector $\g X$ by 
	\begin{align}\label{additive_noise}
		X_1 := R,\qquad	 X_v := \Phi_v(\X_{\pa(v)}) +  \varepsilon_v, \quad v\in V\setminus \{1\},
	\end{align}	
	where $R$ is a standard exponential variable, $\varepsilon_v$ are independent noise variables that admit densities and $\Phi_v:\mathbb R^{|\pa(v)|} \to \mathbb R$ are homogeneous  functions satisfying $\Phi_v(\x + t\mathbf{1}) = \Phi_v(\x) + t$ for any $x \in \mathbb R^{|\pa(v)|}$ and $t\in\mathbb R$.	
	Assume the normalization condition
	\begin{align}\label{norm_constraint}
		\EE\{\exp(\varepsilon_v)\}=\frac{1}{\EE[\exp\{\Phi_v(\X_{\pa(v)}-X_1\mathbf{1})\}]}.
	\end{align}
	Then, we obtain a multivariate Pareto vector $\Y$ by choosing the auxiliary vector $\Y^1\stackrel{}{=}\X$, which is a directed extremal graphical models on $G_e$. 
\end{proposition}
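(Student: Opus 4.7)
The plan is to (i) verify that $\X$ qualifies as a first auxiliary vector $\Y^1$ in the sense of Section~\ref{sec:extremal_CI}, which then uniquely determines a multivariate Pareto distribution $\Y$, and (ii) establish the extremal global Markov property of $\Y$ relative to $G_e$. For step (i), I would set $\W := \X - X_1 \mathbf{1}$. The shift-equivariance of $\Phi_v$ immediately rewrites the structural assignments as $W_1 = 0$ and, for $v \neq 1$,
\[ W_v = \Phi_v(\X_{\pa(v)}) + \varepsilon_v - X_1 = \Phi_v(\X_{\pa(v)} - X_1 \mathbf{1}) + \varepsilon_v = \Phi_v(\W_{\pa(v)}) + \varepsilon_v. \]
Walking down a topological order of $G_e$ then shows that $\W_{\setminus 1}$ is a measurable function of $(\varepsilon_v : v \neq 1)$ alone (since $W_1 = 0$ is constant in $\varepsilon_1$), so $\W$ is independent of $R = X_1$, yielding the representation $\X = R \mathbf{1} + \W$ of the form required by~\eqref{eq:stoch_rep}.

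The moment normalization $\EE \exp(W_v) = 1$ then follows for $v \neq 1$ by independence of $\varepsilon_v$ from $\W_{\pa(v)} = \X_{\pa(v)} - X_1 \mathbf{1}$ together with~\eqref{norm_constraint}; for $v = 1$ it is trivial. Consequently $\X$ is a valid auxiliary vector and induces a unique multivariate Pareto distribution $\Y$ whose exponent measure density on $\LL^1$ equals the density of $\X$,
\[\lambda(\y) = e^{-y_1} \prod_{v \neq 1} f_{\varepsilon_v}\bigl(y_v - \Phi_v(\y_{\pa(v)})\bigr).\]
This formula extends consistently to $\RR^d$ by homogeneity of degree $-1$: under $\y \mapsto \y + t \mathbf{1}$, the $e^{-y_1}$ factor scales by $e^{-t}$ while each $f_{\varepsilon_v}$ factor is shift-invariant by the homogeneity of $\Phi_v$, matching $\lambda(\y + t \mathbf{1}) = e^{-t} \lambda(\y)$.

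For step (ii), the key observation is that $\X$ is by construction a classical additive-noise SCM on $G_e$, so its distribution satisfies the classical directed global Markov property on $G_e$ \citep{PJS2017}. Since $\lambda$ is positive and continuous (inheriting these properties from the noise densities), the Lemma A.7 of \cite{eng_iva_kir} quoted after Definition~\ref{def:extremal_indep} reduces any extremal conditional independence $\Y_A \perpe \Y_B \mid \Y_C$ to a single classical statement $\Y^k_A \indep \Y^k_B \mid \Y^k_C$ for some $k$. Choosing $k = 1$ and using $\Y^1 \stackrel{d}{=} \X$, d-separation of $A$ and $B$ by $C$ in $G_e$ gives $\X_A \indep \X_B \mid \X_C$ and hence the required extremal statement, so $\Y$ satisfies the extremal global Markov property relative to $G_e$ and is thus a directed extremal graphical model.

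The main obstacle is arranging the moment condition $\EE \exp(W_v) = 1$: the shift-equivariance of $\Phi_v$ is precisely what allows $\X_{\pa(v)} - X_1 \mathbf{1}$ appearing in~\eqref{norm_constraint} to play the role of $\W_{\pa(v)}$ independently of $\varepsilon_v$, and the normalization~\eqref{norm_constraint} is then calibrated so that the product of the two resulting expectations equals 1. Beyond this, the argument reduces to standard auxiliary-vector algebra and the classical Markov theory of DAGs, with no structurally new ingredients required.
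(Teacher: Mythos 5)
Your proposal is correct, and it reaches the conclusion by a route that is parallel to but packaged differently from the paper's. The paper works entirely at the density level: it writes down the joint density $f(\x)=e^{-x_1}\prod_{v\neq 1} f_{\varepsilon_v}(x_v-\Phi_v(\x_{\pa(v)}))$, identifies it with $\lambda$ on $\mathcal L^1$, checks that each conditional $\lambda(y_v\mid\y_{\pa(v)})=f_{\varepsilon_v}(y_v-\Phi_v(\y_{\pa(v)}))$ is a homogeneous probability density, and then verifies the marginal constraints $\int_{y_v>0}\lambda(\y)\,\d\y=1$ via the substitution $\y\mapsto\y+(y_1-y_v)\mathbf 1$, which turns the integral into $\EE[\exp\{\Phi_v(\X_{\pa(v)}-X_1\mathbf 1)+\varepsilon_v\}]$ and lets the normalization~\eqref{norm_constraint} close the argument; the graphical-model claim then follows from the factorization characterization in Proposition~\ref{prop:ext-markov-properties}. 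You instead work at the level of random vectors: you exhibit the stochastic representation $\X=R\mathbf 1+\W$ of~\eqref{eq:stoch_rep} by showing $\W=\X-X_1\mathbf 1$ satisfies $W_1=0$, is driven only by $(\varepsilon_v:v\neq 1)$ (hence is independent of $R$), and has $\EE\exp(W_v)=\EE\exp\{\Phi_v(\W_{\pa(v)})\}\,\EE\exp(\varepsilon_v)=1$ by~\eqref{norm_constraint} — which is exactly the paper's marginal-constraint computation in probabilistic clothing — and you then obtain the global Markov property from the classical Markov property of the additive-noise SCM combined with the single-$k$ reduction of extremal conditional independence. Both approaches lean on the same external facts (the characterization of auxiliary vectors via the stochastic representation, respectively the density-construction principle), so neither is more elementary; yours makes the independence of $R$ and $\W^1$ and the role of homogeneity more transparent, while the paper's makes the link to the factorization~\eqref{density_factorization} immediate. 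One shared loose end: Definition~\ref{dEGM} asks for a positive and continuous exponent measure density, which would require positivity and continuity of the noise densities and continuity of the $\Phi_v$; you assert this is inherited and the paper's proof does not address it either, so this is not a gap specific to your argument.
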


\subsection{Linear extremal SCMs}

An important sub-class of SCMs are those corresponding to linear structural assignments introduced in Example~\ref{eq:linSCM}. In this section, we discuss the corresponding assignments for extremal SCMs. Let $G_e = (V,E_e)$ be a rooted DAG with root node $1$. Following the construction principle in Proposition~\ref{prop:ESCM}, let 
\begin{align}\label{linear_eSCM}
	X_1 &:= R, \quad  X_v :=\sum_{i \in {\pa(v)}} b_{iv} X_i + \varepsilon_v, \quad v\in V\setminus \{1\},
\end{align}
where $R$ is a standard exponential variable, $b_{iv}\in\mathbb R$ are the causal coefficients and the independent noise variables $\varepsilon_v$ satisfy normalization constraint~\eqref{norm_constraint}. Since the extremal structure function $\Phi_v(\x_{\pa(v)})=\sum_{i \in {\pa(v)}} b_{iv} x_i$ has to be homogeneous, it follows that 
\begin{align}\label{zerosum}
	\sum_{i \in {\pa(v)}}b_{iv}=1, \quad v\in V\setminus\{1\}.
\end{align}
By Proposition~\ref{prop:ESCM} it follows that for $\Y^1\stackrel{d}{=}\X$ the multivariate Pareto distribution $\Y$ is a directed extremal graphical model on $G_e$.
Let $B$ be the $d\times d$ coefficient matrix with entries $B_{iv} = b_{iv}$ for $v\in V$ and $i\in \pa(v)$, and zero otherwise. This allows a more compact representation of the structural assignments as
\begin{align}
	\Y^1=B^{\top} \Y^1 + \mathbf N^1,\label{eq:SEMin1}
\end{align}
where $\mathbf N^1 = (R,\varepsilon_2,\ldots,\varepsilon_d)^{\top}$ is the noise vector. 
With $\mathfrak{L}(G_e):=(I-B^{\top})_{\setminus 1,V}$ we can rewrite~\eqref{eq:SEMin1} as
\begin{align}
	\Y^1=\begin{pmatrix}
		e_1^{\top}\\
		\mathfrak{L}(G_e)\\
	\end{pmatrix}^{-1}\N^1,\label{eq:ESCM_Kirchhoff}
\end{align}
where $e_1$ is the first unit vector. Note that $\mathfrak{L}(G_e)\bs 1 = \bs 0$ by the constraint~\eqref{zerosum}.

In Example~\ref{ex:HR_struct}, we have seen that for the \HR{} model, $\Y^1$ can be constructed as a structural causal model with linear structure functions $\Phi_v$ and independent additive Gaussian noise. 
We find a construction method for \HR{} extremal SCMs in the following proposition.
\begin{proposition}\label{prop:HR_SCM_construction}
	Let $G_e=(V,E_e)$ be a DAG rooted in 1. We endow $G_e$ with real edge weights $b_{ij}$, where $b_{ij}=0$ when $ij\not\in E_e$. Let $B$ be the $d\times d$-matrix with $B_{ij}=b_{ij}$ and assume that it satisfies $\mathfrak{L}(G_e)\bs 1 = \bs 0$. For $\nu_2^2,\ldots,\nu_{d}^2 >0$, let 	$(\varepsilon_2,\ldots,\varepsilon_d)\sim N_{d-1}\left(\mu_\varepsilon,D_\varepsilon\right)$ with
	\begin{align*}
		\mu_\varepsilon&= -\frac{1}{2}\mathfrak{L}(G_e)\Gamma_{V,1}, \qquad D_\varepsilon=\diag\left(\nu_2^2,\ldots,\nu_d^2\right),
	\end{align*} 
	where the matrix $\Theta = \mathfrak{L}(G_e)^{\top} D_\varepsilon^{-1}\mathfrak{L}(G_e)$ is a positive semidefinite signed Laplacian matrix with corresponding variogram matrix $\Gamma$. Then, the structural equation model 
	\begin{align*}
		\Y^1:=B^{\top}\Y^1+\N^1
	\end{align*}
	with $\N^1=(R,\varepsilon_2,\ldots,\varepsilon_d)$ and $R\sim \text{Exp}(1)$, where $R$ is independent of $(\varepsilon_2,\ldots,\varepsilon_d)$, yields a \HR{} extremal SCM on $G_e$ with precision matrix~$\Theta$.
\end{proposition}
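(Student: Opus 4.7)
The plan is to invoke Proposition~\ref{prop:ESCM} on the linear additive-noise SCM $\Y^1=B^\top\Y^1+\N^1$, and then to identify the resulting extremal function $\W^1_{\setminus 1}=\Y^1_{\setminus 1}-R\bs 1$ with the Gaussian distribution that characterises a \HR{} extremal SCM (Example~\ref{ex:CI_test}). The hypothesis $\mathfrak{L}(G_e)\bs 1=\bs 0$ means that each row $v\in V\setminus\{1\}$ of $B$ satisfies $\sum_{i\in\pa(v)}b_{iv}=1$, so that the linear structure functions $\Phi_v(\x_{\pa(v)})=\sum_{i\in\pa(v)}b_{iv}x_i$ are homogeneous and fit the template of Proposition~\ref{prop:ESCM}.

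The first step is to describe the law of $\W^1_{\setminus 1}$ explicitly. Using $\mathfrak{L}(G_e)\bs 1=\bs 0$, the SCM is equivalent to $\mathfrak{L}(G_e)\,\W^1=(\varepsilon_2,\ldots,\varepsilon_d)^\top$; since $G_e$ is a DAG rooted at $1$, a topological ordering makes $\mathfrak{L}(G_e)_{\setminus 1,\setminus 1}$ lower triangular with unit diagonal, hence invertible. Therefore $\W^1_{\setminus 1}=[\mathfrak{L}(G_e)_{\setminus 1,\setminus 1}]^{-1}(\varepsilon_2,\ldots,\varepsilon_d)^\top$ is multivariate Gaussian. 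A direct calculation exploiting $\Gamma_{1,1}=0$ gives mean $-\tfrac12[\mathfrak{L}(G_e)_{\setminus 1,\setminus 1}]^{-1}\mathfrak{L}(G_e)\Gamma_{V,1}=-\tfrac12\Gamma_{\setminus 1,1}$ and precision matrix $[\mathfrak{L}(G_e)_{\setminus 1,\setminus 1}]^\top D_\varepsilon^{-1}\mathfrak{L}(G_e)_{\setminus 1,\setminus 1}=\Theta_{\setminus 1,\setminus 1}$, which is precisely the law of the first extremal function of a \HR{} distribution with parameter matrix $\Gamma$ (equivalently, precision $\Theta$) as recalled in Example~\ref{ex:CI_test}.

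To legitimately apply Proposition~\ref{prop:ESCM}, the remaining step is to verify~\eqref{norm_constraint}. From the SCM, $W^1_v=\sum_{i\in\pa(v)}b_{iv}W^1_i+\varepsilon_v$, with $\varepsilon_v$ independent of $\W^1_{\pa(v)}$; taking variances in the Gaussian representation above yields $\mathrm{Var}(W^1_v)=\nu_v^2+\sum_{i,j\in\pa(v)}b_{iv}b_{jv}\,[\Theta^{(1)-1}]_{ij}$. Combining this with the \HR{} identity $[\Theta^{(1)-1}]_{vv}=\Gamma_{v,1}$, derived from the Fiedler--Bapat block inverse~\eqref{eq:Fiedler-Bapat_margin} together with the unit mean-generating constraint $\EE\exp(W^1_v)=1$, one checks that $\EE\exp(\varepsilon_v)$ is exactly the reciprocal of $\EE\exp\{\Phi_v(\X_{\pa(v)}-X_1\bs 1)\}$. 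Proposition~\ref{prop:ESCM} then delivers a valid multivariate Pareto vector $\Y$ with $\Y^1\stackrel{d}{=}\X$, and the Gaussian identification of $\W^1_{\setminus 1}$ pins it down to be \HR{} with precision matrix $\Theta$. The main obstacle is the self-referential character of this verification: $\Gamma$ is built from $\Theta$, which is built from the free parameters $B$ and $D_\varepsilon$, and the circle is closed only through the standard \HR{} identity $\Gamma_{v,1}=[\Theta^{(1)-1}]_{vv}$.
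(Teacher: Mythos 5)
Your proposal is correct and follows essentially the same route as the paper's proof: verify homogeneity of the linear $\Phi_v$ from $\mathfrak{L}(G_e)\bs 1=\bs 0$, solve the triangular system to identify $\W^1_{\setminus 1}$ as Gaussian with mean $-\tfrac12\Gamma_{\setminus 1,1}$ and precision $\Theta_{\setminus 1,\setminus 1}$, and check the normalization~\eqref{norm_constraint} via the identity $\Sigma^{(1)}_{vv}=\Gamma_{v,1}$ and the Gaussian moment generating function. The ``self-referential'' worry you raise is not an actual gap: $\Gamma$ is a well-defined function of $\Theta$ (hence of $B$ and $D_\varepsilon$), and $\Gamma_{v,1}=\Sigma^{(1)}_{vv}$ is a property of that correspondence rather than a circular assumption, which is exactly how the paper closes the argument.
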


We give an explicit example to illustrate this construction in Supplementary Material~\ref{app:ex-HR-SEM}.

\section{Structure learning for extremal DAGs}\label{sec:structure_learning}
We propose two algorithms to learn the extremal causal structure from data, both relying on a test for extremal conditional independence presented in Section~\ref{sec:ex_cond_ind_test}. The first algorithm, shown in Section~\ref{sec:ext-pc}, adapts the PC-algorithm \citep{spirtes2000causation} to recover the extremal Markov equivalence class.
The second algorithm, called extremal pruning, recovers the extremal DAG from Definition~\ref{dEGM} and is presented in Section~\ref{sec:ext-pruning}.

\subsection{Extremal conditional independence test}\label{sec:ex_cond_ind_test}

Let $\Y$ be a multivariate Pareto vector following a \HR{} model  with parameter matrix $\Gamma$ and precision matrix $\Theta$.
For two indices $i, j \in V$ and a conditioning set $S \subseteq V \setminus \{i, j\}$, we would like to perform the hypothesis test
\begin{align}\label{eq:hyp-test}
H_0: Y_i\perp_{e}Y_j \mid \Y_S
\quad\text{versus}\quad
H_1: Y_i \not\perp_{e}Y_j \mid \Y_S.
\end{align}
Following~\eqref{eq:CIviaGamma} in Example~\ref{ex:CI_test} and Lemma~\ref{lem:precision_CI} in the Supplementary Material, for \HR{} distributions the null hypothesis $H_0$ can be equivalently expressed in terms of a vanishing extremal partial correlation coefficient, that is,
\begin{align}
	\rho_{ij \mid S} \coloneqq -\theta(\{i,j\}\cup S)_{ij} / \sqrt{\theta(\{i,j\}\cup S)_{ii} \theta(\{i,j\}\cup S)_{jj}}=0.\label{eq:partial_cor}
\end{align}\label{eq:hyp-test-2}
We can thus restate~\eqref{eq:hyp-test} as
\begin{align}\label{eq:hyp-test-3}
H_0 : \rho_{ij \mid S} = 0
\quad\text{versus}\quad
H_1: \rho_{ij \mid S} \neq 0.
\end{align}
Recall from Example~\ref{ex:CI_test} that the extremal function $\g W^k_{\setminus k} = (\Y^k-Y_k\einsfun)_{\setminus k}$ follows a $(d-1)$-dimensional normal distribution with precision matrix $\Theta^{(k)}$. For $k\in S$, Lemma~\ref{lem:partial_correlation} in the Supplementary Material shows that the corresponding partial correlation $\rho_{ij \mid S}^{(k)}$ is equal to the extremal partial correlation $\rho_{ij \mid S}$.
To obtain an estimator $\hat \rho_{ij \mid S}$ of the extremal partial correlation of $\Y$, we can therefore use an estimator $\hat \rho_{ij \mid S}^{(k)}$ of the partial correlation of the extremal function $\g W^k_{\setminus k}$ for some $k\in S$. For $n_k$ i.i.d.~observations of this normal random vector, we can then define the Fisher z-transform
\begin{align}\label{eq:z-transform}
  Z_{ij\mid S}^{(k)} = \frac{1}{2} \log\left(\frac{1 + \hat\rho_{ij \mid S}^{(k)}}{1 - \hat\rho_{ij \mid S}^{(k)}}\right),
\end{align}
which is approximately normally distributed with standard deviation $1/\sqrt{n_k - (|S|-1) - 3}$ \citep{kalisch2007estimating}. 
Note that the size of the conditioning set is $|S|-1$ as we have removed the $k$th component of the extremal function.

Using this z-transform yields a valid test for the null hypothesis in~\eqref{eq:hyp-test}. To improve the power of this test, in practice, we would like to use observations of $\Y$ on the whole support, rather than only from the half-space support of some $\Y^k$. To this end, we use the empirical estimator of the extremal variogram $\widehat \Gamma$ introduced in \cite{eng_vol_2022}. This estimator is consistent under mild assumptions, and the continuous mapping theorem implies that the plug-in estimator $\hat \rho_{ij \mid S}$ in~\eqref{eq:partial_cor} is consistent. 
While concentration bounds for $\widehat \Gamma$ are available \citep{engelke2022a}, its asymptotic distribution has not yet been established and is beyond the scope of this paper. We conjecture however that the z-transform $Z_{ij\mid S}$ based on $\hat \rho_{ij \mid S}$ is asymptotically normal with standard deviation $1/\sqrt{n - |S| - 3}$, where $n$ is the sample size of the random vector $\Y$. We provide empirical evidence for this conjecture with the following simulation study.

We fix the DAG $G = (V, E)$ and the extremal DAG $G_e = (V, E_e)$ as in Figure~\ref{subfig:b} and Figure~\ref{subfig:c}, respectively, and repeat 50 times the following experiment.
The first data set (a) are samples from an exact limiting multivariate Pareto distribution $\Y$ factorizing according to $G_e$. The other two data sets are samples from a random vector $\X$ in the domain of attraction of $\Y$ satisfying~\eqref{eq:MPD}. For (b) we sample from the corresponding max-stable distribution \citep[e.g.,]{EH2020} and for (c) from an SCM on $G$ and extremal limit $\Y$ on $G_e$. 
For (b) and (c) we use $n=1000$ samples of the respective models. To obtain approximate samples of the multivariate Pareto distribution, we empirically normalize the data to standard exponential margins as in~\eqref{X_star}, and threshold the data as in~\eqref{eq:MPD} using for $u$ the $\tau$th empirical quantile of $\max_{i=1,\dots d} X^*_i$ for $\tau \in \{0.9, 0.95, 0.975\}$, resulting in $m = \lfloor n(1-\tau)\rfloor $ effective samples. To have a fair comparison, we use $m$ samples of the exact multivariate Pareto distribution in (a). 

Given the generated data sets, for all  pairs $(i, j) \in V \times V$, with $i < j$, and for all subsets $S \subseteq V \setminus \{i, j\}$ we test~\eqref{eq:hyp-test-2} with the following two approaches:
\begin{itemize}
  \item random: the z-transform $Z_{ij\mid S}^{(k)}$ as defined in~\eqref{eq:z-transform}, by choosing one index $k \in S$ randomly;
  \item  average: the z-transform  $Z_{ij\mid S}$ based on $\hat \rho_{ij \mid S}$ obtained by the empirical estimator of the extremal variogram $\hat{\Gamma}$.
\end{itemize}
Figure~\ref{fig:ci-test} shows the proportion of p-values below a significance level $\alpha \in (0, 1)$ under the null (blue) and alternative (orange) hypothesis using the random (dashed line) and average (solid line) tests.
For data from the limiting model in (a), the level is always correct, but the power depends on the sample size.
For data in the domain of attraction in (b) and (c), the power 
improves if the threshold is large enough, since then the threshold exceedances converge to $\Y$ in distribution. Overall, the average approach that makes more efficient use of the data has higher power and still a good calibration of  Type-I error, especially for large thresholds~$\tau$.

\begin{figure}[!ht]
  \centering
  \includegraphics[scale=.75]{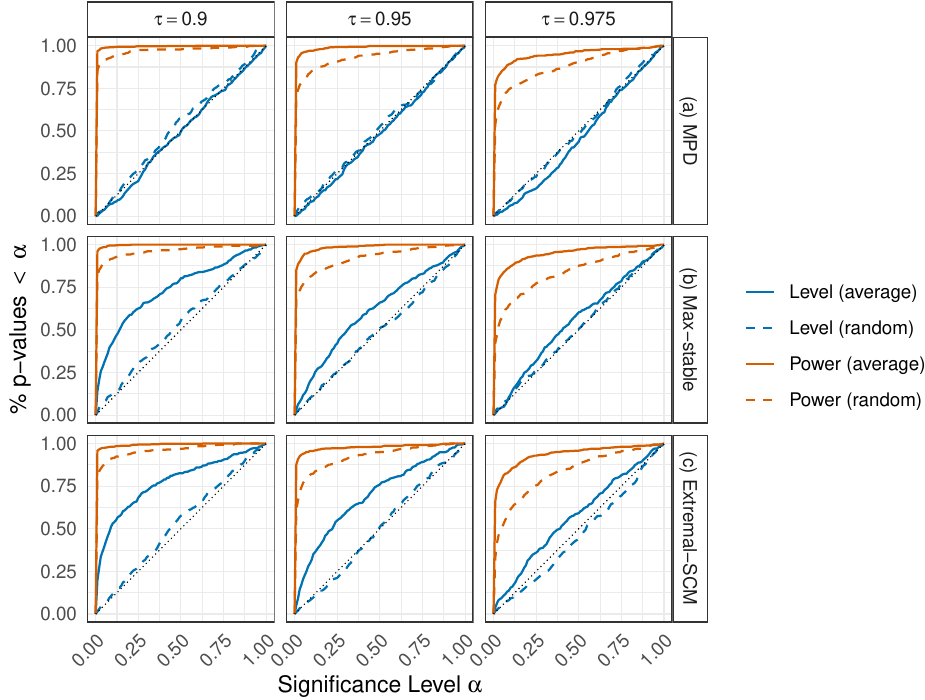}
  \caption{Percentage of estimated p-values below a significance level $\alpha \in (0, 1)$ under the null hypothesis $H_0 : \rho_{i, j \mid S} = 0$ (blue curves) and the alternative $H_1: \rho_{i, j \mid S} \neq 0$ (orange curves) for different methods and threshold level $\tau \in \{0.9, 0.95, 0.975\}$ using both the average (solid line) and random (dashed line) tests, as described in the main text.
  }
  \label{fig:ci-test}
\end{figure}

\subsection{Extremal PC-algorithm}\label{sec:ext-pc}

For classical directed graphical models, the PC-algorithm of \cite{spirtes2000causation} recovers the completed partial DAG (CPDAG) from conditional independence statements.
In this section we show that most of the concepts from the classical PC-algorithm literature translate to extremal settings.

Let $ \mathcal{P} $ denote the set of multivariate Pareto distributions and let $G_e$ be a rooted DAG and define the set
\begin{align}
  \mathcal{M}(G_e):=\{P\in\mathcal{P}:P\text{ is extremal global Markov to }G_e  \}.\label{eq:emec}
\end{align}
Two rooted DAGs $ G_e$ and $H_e $ are called extremal Markov equivalent if $\mathcal{M}(G_e)=\mathcal{M}(H_e)$. The extremal Markov equivalence class of $ G_e$ is the set of all rooted DAGs that are extremal Markov equivalent to $G_e$.
In classical graphical models, deciding on whether two DAGs are Markov equivalent is greatly simplified by the result from \citet{verma1990causal} which states that two DAGs are Markov equivalent if and only if they share the same skeleton and v-structures; for definition of skeleton and v-structures see Supplementary Material~\ref{app:d-separation}.  We extend this result to directed extremal graphical models.
\begin{corollary}\label{cor:skel_v-struct}
	Rooted DAGs that have the same skeleton graph and the same v-structures encode the same extremal conditional independence statements, that is, they are extremal Markov equivalent.
\end{corollary}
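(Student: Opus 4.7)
The plan is to reduce the statement to the classical Verma--Pearl theorem by exploiting the fact that extremal Markov equivalence, as defined via the extremal global Markov property in Definition~\ref{dEGM}, depends on the DAG only through the set of $d$-separation statements it encodes.

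More concretely, I would first recall the classical result of \citet{verma1990causal}: two DAGs $G_e=(V,E_e)$ and $H_e=(V,E'_e)$ on the same node set encode the same $d$-separation relations, that is, $A\indep_{G_e} B\mid C \Longleftrightarrow A\indep_{H_e} B\mid C$ for all disjoint $A,B,C\subseteq V$, if and only if they share the same skeleton and the same v-structures. This is a purely graph-theoretic statement, and since our extremal DAGs are just DAGs with the extra property of being rooted, the theorem applies without modification.

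Next I would observe that, by Definition~\ref{dEGM} together with the definition~\eqref{eq:emec} of $\mathcal{M}(G_e)$, a multivariate Pareto distribution $P$ belongs to $\mathcal{M}(G_e)$ precisely when every $d$-separation in $G_e$ translates to an extremal conditional independence in $P$. Hence the set $\mathcal{M}(G_e)$ is entirely determined by the collection
\begin{align*}
\mathcal{D}(G_e):=\bigl\{(A,B,C): A\indep_{G_e} B\mid C,\ A,B,C\subseteq V \text{ disjoint}\bigr\}.
\end{align*}
Applying the previous step, if $G_e$ and $H_e$ share the same skeleton and v-structures, then $\mathcal{D}(G_e)=\mathcal{D}(H_e)$, and consequently $\mathcal{M}(G_e)=\mathcal{M}(H_e)$, which is the required extremal Markov equivalence.

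The main thing to be careful about is that the extremal global Markov property is a one-way implication (d-separation $\Rightarrow$ extremal CI), so the argument only uses one direction, and there is no need to invoke any faithfulness-type converse or the extremal factorization in Proposition~\ref{prop:ext-markov-properties}. Because of this, no genuine analytical obstacle arises; the corollary is essentially a direct consequence of Verma--Pearl combined with the definition of $\mathcal{M}(G_e)$, and the only minor verification is that restricting attention to rooted DAGs does not break the graph-theoretic equivalence, which it does not since the Verma--Pearl characterization is intrinsic to the DAG and agnostic to whether a single root node is present.
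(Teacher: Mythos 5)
Your proposal is correct, and it reaches the conclusion by a slightly different route than the paper. The paper's proof unfolds the definition of extremal conditional independence: it notes that $\mathbb P\in\mathcal M(G_e)$ means each auxiliary vector $\Y^k$ is globally Markov to $G_e$ in the classical sense, then invokes the distributional form of the Verma--Pearl result (Markov-equivalent DAGs are compatible with the same set of distributions, \citealp[Theorem~1.2.8]{Pearl2009}) for each $\Y^k$, and translates back to the extremal global Markov property for $H_e$. You instead stay entirely at the graph-theoretic level: you use the form of Verma--Pearl which says that same skeleton and same v-structures is equivalent to identical $d$-separation relations, and observe that $\mathcal M(G_e)$ is, by Definition~\ref{dEGM} and~\eqref{eq:emec}, a function of the set $\mathcal D(G_e)$ of $d$-separations alone, so $\mathcal D(G_e)=\mathcal D(H_e)$ immediately gives $\mathcal M(G_e)=\mathcal M(H_e)$. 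Your version is arguably the more economical one, since it never needs the auxiliary vectors or any distributional statement; the paper's version makes explicit how the claim descends from the classical theory of the $\Y^k$, which is in the spirit of how extremal conditional independence is defined in Section~\ref{sec:extremal_CI}. Your closing remarks --- that only the one-way implication in the global Markov property is used, so no faithfulness is needed, and that rootedness plays no role in the graph-theoretic equivalence --- are both correct and worth stating.
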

As in the standard graphical model case, we can represent the extremal Markov equivalence class of $G_e$ as a completed partial DAG (CPDAG).
The CPDAG of $\mathcal{M}(G_e)$ contains a directed edge between nodes $i$ and $j$ when there exists a DAG $H_e$ with $\mathcal{M}(G_e)=\mathcal{M}(H_e)$ such that $(i, j) \in E(H_e)$.
When the CPDAG contains directed edges from $i$ to $j$ and from $j$ to $i$, this usually is represented by an undirected edge between $i$ and $j$.
Based on the extremal Markov properties, we can deduce a set of conditional independence statements in $\Y$ by reading off d-separation statements in the graph $G_e$. In structure learning we usually need the converse implication, which corresponds to a notion extremal faithfulness.
\begin{definition}
	A multivariate Pareto vector $ \Y $ is extremal faithful with respect to a DAG $ G_e=(V,E_e) $ if
	\[\Y_A\perp_e \Y_B\mid \Y_C \Rightarrow \Y_A\indep_{G_e} \Y_B\mid \Y_C\]
	for all disjoint $ A,B,C\subset V $.\label{def:faithful}
\end{definition}
By construction, for a faithful extremal directed graphical model there is a one-to-one correspondence between conditional independence and d-separation statements. This motivates our extremal PC-algorithm to learn the CPDAG of $G_e$ in two steps. Step (i) modifies the classical PC-algorithm \citep{spirtes2000causation} to learn the skeleton of $G_e$ and is detailed in Algorithm~\ref{alg:ext-pc}. The key idea is to leverage the fact that the resulting graph must be rooted. Given this additional information, we avoid testing unconditional independencies; in other words we consider candidate separating sets $S$ with cardinality $|S| = \ell$ with $\ell \geq 1$. 
Step (ii) follows the same procedure as the non-extremal version of the algorithm: it extends the skeleton to a CPDAG by first orienting v-structures and then applying Meek's rules [\citealp[p. 50]{Pearl2009}, \citealp{meek1995}] to orient as many remaining edges as possible.
\begin{algorithm}[tb!]
  \caption{Extremal PC-skeleton learning (oracle version)}
  \label{alg:ext-pc}

  \begin{algorithmic}[1]
    \Require{Vertex nodes $V$ and extremal conditional independence information among all variables in~$\Y$.} 
    \Ensure{Learned skeleton $C$ and collection of separation sets $\mathcal{S}$.}

  \State Initialize the complete undirected graph $C \gets (V, E)$ and collection of separating sets $\mathcal{S} \gets \varnothing$.

  \For{each size of separating set $\ell \in \{1, \dots, d\}$}
  \For{each edge $(i, j) \in \E$}\label{alg:pcalg-line-2}
  \If {the edge cannot be removed, i.e., $|\mathrm{adj}_{C}(i)\setminus \{j\}|< \ell$,}
    \State \textbf{continue} the \textbf{for loop} at line~\ref{alg:pcalg-line-2}
  \EndIf
  
  \For{each separating set $S \subseteq \mathrm{adj}_C(i) \setminus \{j\}$ with $|S| = \ell$}\label{alg:pcalg-line-3}

  \If{$Y_i \perp_e Y_j \mid \Y_S$,}\label{alg:ci-statement-pc}
    \State Update the edge set $\E \gets \E \setminus \{(i, j)\}$ and undirected graph $C \gets (V, E)$.
    \State Append the separating set and edge pair $\mathcal{S} \gets \mathrm{append}([S, (i, j)], \mathcal{S})$.
    \State \textbf{continue} the \textbf{for loop} at line~\ref{alg:pcalg-line-2}
  \EndIf

\EndFor
\EndFor
\EndFor
\end{algorithmic}
\end{algorithm}
Building on results from the non-extremal setting, we can show that the extremal PC-algorithm correctly recovers the extremal CPDAG from a faithful directed extremal graphical model.
\begin{corollary}\label{cor:cpdag-recovery}
  Let $\Y$ be a faithful directed extremal graphical model on the rooted DAG $G_e = (V, \E_e)$.
  If the input for the extremal PC-algorithm is the oracle of true extremal conditional independence relations for $\Y$, then it outputs the unique CPDAG that represents the extremal Markov equivalence class of $G_e$. 
\end{corollary}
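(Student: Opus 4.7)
The plan is to reduce the problem to two subtasks mirroring the structure of the algorithm and then invoke Corollary~\ref{cor:skel_v-struct}. Since any two extremally Markov equivalent rooted DAGs share both skeleton and v-structures, it suffices to show that (i) Algorithm~\ref{alg:ext-pc} outputs the skeleton of $G_e$, and (ii) the subsequent v-structure orientation together with Meek's rules produces the CPDAG of the equivalence class $\mathcal{M}(G_e)$. Throughout, I will use the biconditional that, by Definition~\ref{dEGM} (extremal global Markov property, which holds because $\Y$ is a directed extremal graphical model on $G_e$) and Definition~\ref{def:faithful} (faithfulness), the oracle statement $Y_i \perp_e Y_j \mid \Y_S$ is equivalent to the d-separation $i \indep_{G_e} j \mid S$.

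For (i), I adapt the standard PC invariant: at every iteration of $\ell$, the current neighborhood satisfies $\pa_{G_e}(i)\cup \pa_{G_e}(j) \subseteq \mathrm{adj}_C(i) \cup \mathrm{adj}_C(j)$ for every edge $(i,j)$ still in $C$. From this invariant, two things follow. If $(i,j)\in E_e$, then no $S\subseteq V\setminus\{i,j\}$ d-separates $i$ from $j$, so by faithfulness the test on line~\ref{alg:ci-statement-pc} never triggers and $(i,j)$ is preserved. Conversely, if $(i,j)\notin E_e$, then either $\pa_{G_e}(i)$ or $\pa_{G_e}(j)$ d-separates $i$ and $j$, both candidate sets lie in the current adjacency due to the invariant, so when $\ell$ reaches the corresponding cardinality the separating set is enumerated, the extremal independence is detected, and the edge is removed with a valid separator stored in $\mathcal{S}$.

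The principal obstacle, and the point that distinguishes the extremal variant from the classical PC-algorithm, is justifying that the outer loop may start at $\ell=1$ rather than $\ell=0$. The paragraph following Definition~\ref{dEGM} shows that any DAG carrying a multivariate Pareto distribution with positive and continuous exponent measure density must be rooted, since otherwise two nodes would be d-separated by the empty set and hence extremally independent, contradicting positivity of $\lambda$. Because $G_e$ is rooted, for every pair $i,j$ the undirected path through the root contains only chain nodes together with a single fork at the root, so it is active under $\varnothing$. Thus $i\not\indep_{G_e} j\mid \varnothing$, and by faithfulness $Y_i \not\perp_e Y_j$. Hence an $\ell=0$ test could never remove any edge to begin with, so skipping it is harmless and the skeleton is recovered exactly.

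For (ii), given the correct skeleton and the collection $\mathcal{S}$ of separators, the classical orientation argument applies with only notational changes: an unshielded triple $(i,k,j)$ in the skeleton is a v-structure in $G_e$ if and only if $k$ does not lie in any (equivalently, in the stored) separating set of $(i,j)$, which follows from faithfulness and the definition of d-separation at colliders. Meek's rules then orient every additional edge forced by acyclicity and by the constraint of not creating new v-structures. By Corollary~\ref{cor:skel_v-struct}, the resulting partially directed graph is precisely the CPDAG representing~$\mathcal{M}(G_e)$, completing the argument.
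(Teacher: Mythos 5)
Your proposal is correct and follows essentially the same route as the paper, which proves this corollary by appealing to the classical PC-algorithm correctness arguments of Spirtes et al.\ and Meek under the d-separation/extremal-conditional-independence correspondence given by the Markov property and faithfulness. Your additional observation that rootedness of $G_e$ makes the $\ell=0$ tests vacuous (so the loop may start at $\ell=1$) is exactly the ``key idea'' the paper invokes, and you have spelled it out correctly.
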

The proof of Corollary~\ref{cor:cpdag-recovery} rests upon the  assumption of extremal faithfulness and follows the same arguments as Theorem 5.1 in \citet{spirtes2000causation} for the skeleton and \citet{meek1995} for the orientation. Similar arguments have been used in the high-dimensional non-extremal setting by \citet{kalisch2007estimating}. 
In applications, we do not have perfect knowledge of extremal conditional independencies among the variables $\Y$. For $n$ independent copies of $\Y$ we obtain a sample version of Algorithm~\ref{alg:ext-pc} by replacing the conditional independence statement on Line~\ref{alg:ci-statement-pc} with $\sqrt{n - |S| - 3}\ |Z_{ij\mid S}| \leq \Phi^{-1}(1 - \alpha / 2)$, as discussed in Section~\ref{sec:ex_cond_ind_test}.

In a numerical experiment, we evaluate the performance of the sample version of the extremal PC-algorithm, where we set the significance level $\alpha = 0.01$ to the default value for the classical PC-algorithm (see also the discussion in \citealt[][Section 4.2]{kalisch2007estimating}).
We consider the same three types of data generating processes as in the simulation of Section~\ref{sec:ex_cond_ind_test}, where  (a) corresponds to multivariate Pareto, (b) to max-stable, and (c) to extremal-SCM. 
For each combination of dimension $d \in \{5, 10, 15\}$ and expected neighborhood size $E_N \in \{2, 3.5, 5\}$, we generate 20 independent random rooted DAGs. For each DAG we randomly prune up to $d$ edges from $G$ such that the resulting extremal DAG $G_e$ remains rooted. Given $G$ and $G_e$, we generate $n=10000$ samples from the three models as in Section~\ref{sec:ex_cond_ind_test} and use different empirical quantile levels $\tau \in \{0.5, 0.7, 0.9, \dots, 0.995\}$ resulting in effective sample sizes $m = \lfloor n(1-\tau)\rfloor$; for the multivariate Pareto distribution in (a) we only generate $m$ samples. The results are shown in Figure~\ref{fig:pc-simul} in terms of the structural Hamming distance (SHD) (defined in Section~\ref{app:d-separation} of the Supplementary Material) between estimated and true graphs \citep{tsamardinos2006max}.
In all combinations of dimension and neighbor size, the data from the multivariate Pareto distribution does not have any pre-asymptotic bias and therefore lower thresholds are always better. For the data (b) and (c) in the domain of attraction we see that there is a bias-variance trade-off, and the threshold has to be sufficiently high to make sure the models are close enough to the limit~$\Y$. 
As expected, the learning difficulty scales with the graph dimensions $d$  and the expected neighbor size $E_N$.
The range of the SHD is driven by $d$ while $E_N$ controls the minimal SHD achieved by the data (b) and (c) in the domain of attraction.
For small to moderate $d$ and $E_N$, on  the data from (a), the extremal PC-algorithm maintains relatively stable performance across threshold levels, only degrading at extreme thresholds. 
The highest SHD occurs on data from (c) at low thresholds, suggesting the presence of additional pre-asymptotic bias in this setting compared to max-stable data.

\begin{figure}[tb]
  \centering
  \includegraphics[scale=0.65]{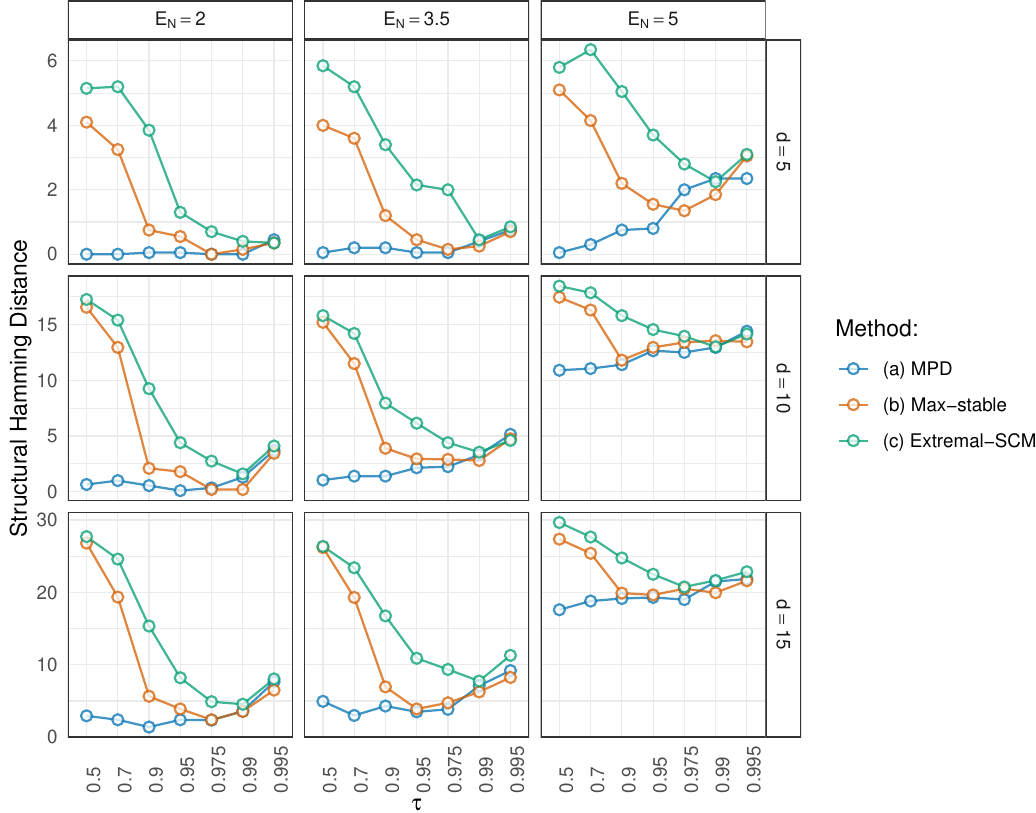}
  \caption{Performance of the extremal PC-algorithm (Algorithm~\ref{alg:ext-pc}) in terms of the average structural Hamming distance over 20 repetitions for different dimensions $d \in \{5, 10, 15\}$, expected neighborhood sizes $E_N \in \{2, 3.5, 5\}$ and empirical quantile levels $\tau \in \{0.5, 0.7, 0.9, \dots, 0.995\}$.
  }
  \label{fig:pc-simul}
\end{figure}

\subsection{Extremal pruning}\label{sec:ext-pruning}
The idea of extremal pruning is that the causal structure of the whole distribution might differ from the one in the tail. 
Since the non-extremal DAG $G$ corresponding to the whole distribution is always a supergraph of the extremal DAG $G_e$, we propose an algorithm that uses the extremal conditional independence information in $\Y$ to prune the edges of $G$ to get $G_e$.
The strength of this approach is that we can use classical structure learning algorithm from causal discovery to obtain an estimate of the graph $G$. Learning of the extremal graph $G_e$ is then reduced to the much simpler task of removing suitable edges from this given graph. In this way, we can use the whole sample size $n$ of some data set for the hard problem, and the extreme events (exceedances) with effective sample size $m<n$ for the simpler sub-task. 

The extremal pruning algorithm, detailed in Algorithm~\ref{alg:ext_pruning_v2},  begins with a rooted DAG $G = (V, \E)$ and extremal conditional independence information for the variables in $\Y$. It then iterates over the set of prunable edges $P$, which are edges that can be removed from $G$ while maintaining the rootedness of the resulting DAG. For each prunable edge $(i, j)$, the algorithm temporarily removes the edge and examines whether there exists a separating set $S$ such that $Y_i$ and $Y_j$ are extremally conditionally independent given $\Y_S$. If such a separating set is found, the edge is permanently removed from the graph. This process is repeated until no more edges can be pruned, yielding the final pruned DAG $G^*$.
The extremal pruning Algorithm~\ref{alg:ext_pruning_v2}, by construction, always returns a rooted DAG $G^*$, since it keeps track of the set of edges which can be pruned at each iteration (see Line~\ref{alg:line-3}). 
Moreover, if the conditional independence information corresponds to the extremal DAG $G_e$ of the input DAG $G$, then Algorithm~\ref{alg:ext_pruning_v2} returns $G_e$. 

\begin{proposition}\label{prop:algo-oracle}
  Let $\Y$ be a faithful directed extremal graphical model on the rooted DAG $G_e = (V, \E_e)$. For any rooted DAG $G = (V, \E)$ with $\E_e \subseteq \E$, Algorithm~\ref{alg:ext_pruning_v2} with input $G$ and the extremal conditional independence information of $\Y$ returns~$G_e$.

\end{proposition}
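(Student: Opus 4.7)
The plan is to establish two complementary properties: (soundness) no edge of $E_e$ is ever pruned, and (completeness) every edge of $E \setminus E_e$ is eventually pruned. Combining these with the fact that Algorithm~\ref{alg:ext_pruning_v2} only deletes edges yields $G^* = G_e$. Throughout the argument I would maintain the invariant $E_e \subseteq E^{(t)} \subseteq E$ for the current edge set $E^{(t)}$. Because any DAG on $V$ that contains a rooted DAG as a subgraph is itself rooted, this invariant ensures that every edge in $E^{(t)} \setminus E_e$ is prunable under the rootedness constraint, so the algorithm is free to reach and test each such edge.

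For soundness, take $(i,j) \in E_e$. Since $i$ and $j$ are adjacent in $G_e$, they are not $d$-separated in $G_e$ by any subset $S \subseteq V \setminus \{i,j\}$. The contrapositive of extremal faithfulness (Definition~\ref{def:faithful}) then precludes any extremal conditional independence of the form $Y_i \perp_e Y_j \mid \Y_S$, so the oracle test in the algorithm never succeeds on $(i,j)$ and the edge survives. For completeness, take $(i,j) \in E \setminus E_e$. Since $G_e$ is a DAG in which $i,j$ are non-adjacent, I would pick a topological order of $G_e$ in which, without loss of generality, $j$ precedes $i$; then $j \in \nd_{G_e}(i) \setminus \pa_{G_e}(i)$, and a standard DAG argument shows that $S := \pa_{G_e}(i)$ $d$-separates $i$ from $j$ in $G_e$. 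The extremal global Markov property of $\Y$ relative to $G_e$ (Definition~\ref{dEGM}) then delivers $Y_i \perp_e Y_j \mid \Y_S$, providing a witness separating set for the non-edge.

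The main obstacle will be verifying that the algorithm's search strategy is rich enough to actually locate such a separating set $S$. If Algorithm~\ref{alg:ext_pruning_v2} ranges over all subsets of $V \setminus \{i,j\}$, the witnessing $S$ is immediately within the search space. If instead, in the spirit of the PC-algorithm, it restricts to subsets of the adjacency of $i$ or $j$ in the current graph $G^{(t)}$, I would invoke the invariant $E_e \subseteq E^{(t)}$ to argue that $\pa_{G_e}(i) \subseteq \mathrm{adj}_{G^{(t)}}(i)$, so $S$ is still considered by the search. Either way the oracle CI test fires on $(i,j)$, the edge is pruned, and iterating over all edges in $E \setminus E_e$ shows that the algorithm terminates with edge set exactly~$E_e$.
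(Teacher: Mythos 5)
Your soundness direction (faithfulness protects every edge of $\E_e$) is exactly the paper's argument, and your overall structure (maintain the invariant $\E_e \subseteq \E^{(t)} \subseteq \E$, prune everything else) mirrors the paper's strong induction over the prunable edges. The gap is in the completeness direction, and it sits precisely at the point you flagged as "the main obstacle" — but the resolution you sketch does not address the actual search space of Algorithm~\ref{alg:ext_pruning_v2}. The algorithm does not range over all subsets of $V\setminus\{i,j\}$, nor over adjacency sets as in the PC-algorithm; Line~\ref{alg:bottleneck} restricts the candidates to $\mathcal{S}_{i,j}$, the sets that d-separate $i$ and $j$ in the \emph{current pruned supergraph} $\Gstar_{i,j}$. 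Your witness $S=\pa_{G_e}(j)$ (or $\pa_{G_e}(i)$) is a d-separator in $G_e$ and hence certifies $Y_i\perp_e Y_j\mid\Y_S$ via the global Markov property, but d-separation does not lift from a subgraph to a supergraph: $\Gstar_{i,j}$ may still contain not-yet-pruned edges $k\to j$ with $k\notin\pa_{G_e}(j)$, and a path reaching $j$ through such a $k$ is not blocked by $\pa_{G_e}(j)$. So your witness need not belong to $\mathcal{S}_{i,j}$ at all, and the conditional on Line~\ref{alg:ci-statement} is not shown to fire.

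The fix is to choose the witness the other way around, as the paper does: take $S=\pa_{\Gstar_{i,j}}(j)\setminus\{i\}$, the current parents of $j$ after deleting $(i,j)$. This set is nonempty by the prunability check, d-separates $i$ from $j$ in $\Gstar_{i,j}$ (since $i$ is a non-descendant of $j$), and hence lies in $\mathcal{S}_{i,j}$; and because $\E_e\subseteq \E^{(t)}\setminus\{(i,j)\}$, every d-separator in the supergraph $\Gstar_{i,j}$ is also a d-separator in the subgraph $G_e$ (same paths with fewer of them, and descendant sets only shrink), so the global Markov property then yields $Y_i\perp_e Y_j\mid\Y_S$. The implication you need goes from separation in the larger graph down to the smaller one, not the reverse. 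Your "WLOG $j$ precedes $i$" is also slightly off — since $(i,j)\in\E$, node $i$ is necessarily a non-descendant of $j$ in $G_e$, so the roles are forced rather than chosen — but that is cosmetic compared to the search-space issue.
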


Similarly as for the PC-algorithm, we can obtain a sample version of the extremal pruning algorithm, and show that it is consistent.

\begin{algorithm}[tb!]
  \caption{Extremal pruning (oracle version)}
  \label{alg:ext_pruning_v2}

  \begin{algorithmic}[1]
    \Require{A rooted DAG $G = (V, \E)$ and extremal conditional independence information among all variables in $\Y$.} 
    \Ensure{A pruned rooted DAG $\Gstar = (V, \Estar)$.}

  \State Initialize the set of prunable edges as $P \gets \{(i, j) \in \E \colon |pa_{G}(j)| \geq 2\}$.

  \State Initialize the edge set $\Estar \gets \E$ and DAG $\Gstar \gets (V, \Estar)$.

\For{each prunable edge $(i, j) \in P$}\label{alg:line-2}
  \If {the edge cannot be pruned, i.e., $|pa_{\Gstar}(j)|< 2$,} \label{alg:line-3}
    \State \textbf{continue} the \textbf{for loop} at line~\ref{alg:line-2}
  \EndIf

  \State Define the DAG $\Gstar_{i, j} \gets (V, \Estar \setminus \{(i, j)\})$ where we drop directed edge $(i, j)$.
  \label{alg:pruned-dag}

  \State Compute the collection of separating sets as $\mathcal{S}_{i, j} = \{S \subseteq V \setminus\{i, j\} \colon i \indep_{\Gstar_{i, j}} j \mid S\}$. 
  \label{alg:bottleneck}

  \If{there exists a separating set $S \in \mathcal{S}_{i, j}$ such that $Y_i \perp_e Y_j \mid \Y_S$,}\label{alg:ci-statement}
    \State Update $\Gstar \gets \Gstar_{i, j}$.
  \EndIf
\EndFor
\end{algorithmic}
\end{algorithm}

\begin{proposition}\label{prop:algo-consistency}
  Denote by $\hat{G}_e(\alpha_n)$ the estimate from the sample version of Algorithm~\ref{alg:ext_pruning_v2}, which uses $\sqrt{n - |S| - 3}\ |Z_{ij\mid S}| \leq \Phi^{-1}(1 - \alpha_n / 2)$ in place of the oracle independence test $Y_i \perpe Y_j \mid \Y_S$.
  Let $G_e$ denote the true extremal DAG. 
  Assume that for all $i, j \in V$ and $S \subseteq V \setminus \{i, j\}$ the estimator $\hat{\rho}_{ij \mid S}$ of the partial correlation converges in probability to $\rho_{ij \mid S}$.
  Then, as $n \to \infty$, there exists a sequence of significance levels as  $\alpha_n \to 0$ such that
  \begin{align*}
    \PP\left( \hat{G}_e(\alpha_n) \neq G_e \right) \to 0.
  \end{align*}
\end{proposition}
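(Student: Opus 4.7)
The plan is to reduce consistency of the sample version to that of the oracle version via Proposition~\ref{prop:algo-oracle}. Algorithm~\ref{alg:ext_pruning_v2} is deterministic and sequential, and the only randomness in the sample version enters through the test on Line~\ref{alg:ci-statement}. The total number of distinct triples $(i,j,S)$ that can ever arise across the iterations is finite (bounded by $|\E|\cdot 2^{|V|}$), so it suffices to show that each test decision coincides with the oracle's decision with probability tending to one, and then invoke the union bound together with an inductive argument: if the test decisions agree at every iteration up to step $k$, the sample and oracle intermediate pruned graphs agree at step $k+1$, and in particular the next candidate separating sets $\mathcal{S}_{i,j}$ coincide.

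For a fixed triple $(i,j,S)$, the Fisher z-transform is continuous on $(-1,1)$, so the continuous mapping theorem together with the assumption $\hat\rho_{ij\mid S}\to\rho_{ij\mid S}$ in probability gives that $Z_{ij\mid S}$ converges in probability to $z^{\ast}_{ij\mid S}:=\tfrac12\log\{(1+\rho_{ij\mid S})/(1-\rho_{ij\mid S})\}$, and by~\eqref{eq:partial_cor} we have $z^{\ast}_{ij\mid S}=0$ if and only if $Y_i\perpe Y_j\mid \Y_S$. Writing $\delta_n(s):=\Phi^{-1}(1-\alpha_n/2)/\sqrt{n-s-3}$, the sample test accepts the null exactly when $|Z_{ij\mid S}|\le\delta_n(|S|)$. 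For any choice of $\alpha_n$ with $\delta_n(s)\to 0$ for each relevant $s$, every false-null triple ($z^{\ast}_{ij\mid S}\ne 0$) satisfies $\PP(|Z_{ij\mid S}|\le\delta_n(|S|))\to 0$ immediately, because $|Z_{ij\mid S}|$ converges in probability to a positive constant. For each true-null triple, convergence in probability of $Z_{ij\mid S}$ to $0$ yields, via a standard diagonal construction, a sequence $\eta_n^{(i,j,S)}\to 0$ with $\PP(|Z_{ij\mid S}|>\eta_n^{(i,j,S)})\to 0$; taking the pointwise maximum of these finitely many sequences produces a single $\eta_n\to 0$ controlling every null triple simultaneously.

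To recover $\alpha_n$, set $\delta_n(s^\ast)=\eta_n$ for the largest conditioning-set size $s^\ast$ appearing in the algorithm, which gives $\alpha_n=2\{1-\Phi(\eta_n\sqrt{n-s^\ast-3})\}$. By replacing $\eta_n$ with a slower-decaying sequence if necessary (this can only improve the null-side control), we may assume $\eta_n\sqrt{n}\to\infty$, so that $\alpha_n\to 0$, while simultaneously $\delta_n(s)\le\eta_n\to 0$ for every $s\le s^\ast$. Combining the per-triple analysis through the finite union bound with the inductive matching of intermediate graphs then yields $\PP(\hat G_e(\alpha_n)\ne G_e)\to 0$.

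The main obstacle is precisely the diagonal/rate-tuning step above. Since the hypothesis provides only convergence in probability of $\hat\rho_{ij\mid S}$ without any quantitative rate, the sequence $\alpha_n$ cannot be written in closed form; one must argue existence of a compatible $\eta_n$ abstractly and then verify that a single choice simultaneously controls the Type-I error across all null triples while forcing $\alpha_n\to 0$. The finiteness of the collection of tests, guaranteed by the fact that Algorithm~\ref{alg:ext_pruning_v2} explores subsets of the bounded vertex set $V$, is essential here.
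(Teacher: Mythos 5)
Your proof is correct in substance but handles the crucial step --- the choice of $\alpha_n$ --- quite differently from the paper. Both arguments share the same skeleton: finitely many tests, a union bound over Type~I and Type~II error events, and the observation that if every test decision matches the oracle then Proposition~\ref{prop:algo-oracle} forces $\hat G_e(\alpha_n)=G_e$. The divergence is in how the threshold $\delta_n(|S|)=\Phi^{-1}(1-\alpha_n/2)/\sqrt{n-|S|-3}$ is calibrated. You drive $\delta_n\to 0$, which makes the alternative side trivial but leaves the null side unprotected in the absence of a convergence rate for $\hat\rho_{ij\mid S}$; you then repair this with an abstract diagonalization producing $\eta_n\to 0$ with $\PP(|Z_{ij\mid S}|>\eta_n)\to 0$. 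The paper instead exploits the separation constant $m^*:=\min\{|z_{ij\mid S}|: z_{ij\mid S}\neq 0\}>0$ and sets $\alpha_n=2\{1-\Phi(n^{1/2}m^*/2)\}$ explicitly, so that the threshold converges to the fixed positive level $m^*/2$ strictly between $0$ and $m^*$; then bare consistency of $Z_{ij\mid S}$ kills the Type~I error (statistic tends to $0$, threshold stays bounded away from $0$) and the Type~II error (statistic tends to a limit of modulus at least $m^*$, threshold eventually below $\tfrac34 m^*$). The paper's route buys an explicit, closed-form $\alpha_n$ and avoids any existence argument; yours is slightly more general in spirit (it would survive even without a uniform gap $m^*$ on the alternative side, provided the alternatives are handled separately) but is less constructive. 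One small wrinkle in your version: you calibrate $\eta_n=\delta_n(s^*)$ at the largest conditioning-set size $s^*$, so for a null triple with $|S|=s<s^*$ the actual threshold is $\delta_n(s)=\eta_n\sqrt{(n-s^*-3)/(n-s-3)}\le\eta_n$, and your control of $\PP(|Z_{ij\mid S}|>\eta_n)$ does not directly bound $\PP(|Z_{ij\mid S}|>\delta_n(s))$. Since the ratio of the two thresholds tends to $1$, this is repaired by choosing the diagonal sequence so that $\PP(|Z_{ij\mid S}|>\eta_n/2)\to 0$, but it should be stated.
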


Algorithm~\ref{alg:ext_pruning_v2} has a computational bottleneck when it comes to computing the collection of separating sets $\mathcal{S}_{i, j}$ for the pair $\{i, j\}$ (see Line~\ref{alg:bottleneck}). This computation requires evaluating whether each of the $2^{d -2}$ subsets $S \subseteq V \setminus\{i, j\}$ separates $i$ from $j$ in the pruned graph $\Gstar_{i, j}$.
To speed up the computation, in place of $\mathcal{S}_{i, j}$, we consider the single separating set 
$S^* \coloneqq S^{\mathrm{MB}, \Gstar_{i, j}}_j \cup \{v^*\}$, where $S^{\mathrm{MB}, \Gstar_{i, j}}_j$ denotes the Markov blanket of $j$ in $\Gstar_{i, j}$ (see Section~\ref{app:d-separation} in the Supplementary Material for a definition) and 
$v^*$ denotes the root node of $\Gstar_{i, j}$.
The idea is that the union of the Markov blanket of $j$ with the root node (which is an ancestor of all nodes in the graph) is a set that guarantees d-separation while being `local' with respect to $j$. 

In the following numerical experiment, we evaluate the performance of the sample version of the extremal pruning algorithm where we set the significance level to $\alpha = 0.01$, and consider the same setup as in Section~\ref{sec:ext-pc}.
Figure~\ref{fig:sim-d-aggregated} shows the average SHD over 20 independent draws of random DAGs. The threshold $\tau$ affects the extremal pruning algorithm via the power and level of the underlying  extremal conditional independence test introduced in Section~\ref{sec:ex_cond_ind_test}. As shown in Figure~\ref{fig:ci-test}, at low $\tau$, the errors entail under-pruning: p-values under the null hypothesis can be too low causing excess retention of edges. At high $\tau$, the test controls Type I error well, but the smaller sample size reduces the test power leading to over-pruning errors. 
For data from (b) and (c) (domain of attraction cases), we observe both under and over-pruning errors across $\tau$, highlighting a tradeoff between the pre-asymptotic bias and the variance.
In contrast, for (a), errors predominantly appear at high $\tau$ where the reduced sample size reduces the power of the underlying test.
Pruning accuracy degrades, unsurprisingly, with increasing dimensionality $d$, leading to higher SHD across all settings. While $d$ controls the overall magnitude of SHD, the graph density $E_N$ affects the minimum SHD achievable in the domain of attraction data (b) and (c). 
For (a), when sufficient data is available, i.e., low $\tau$, pruning is nearly perfect across all $d$ and $E_N$, with errors only at high $\tau$ due to increased variance. For (b) and (c), there is an optimal determined by a bias-variance tradeoff. 
\begin{figure}[tb!]
	\centering
	\includegraphics[scale=.65]{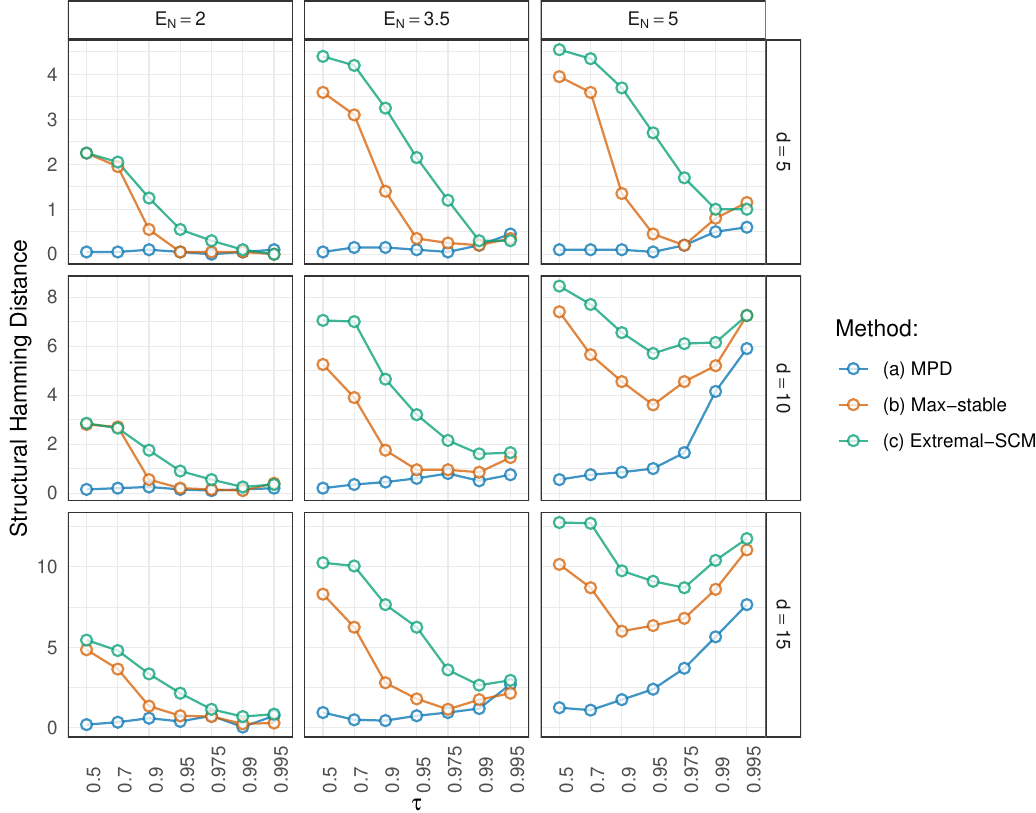}
	\caption{Performance of the extremal pruning algorithm (Algorithm~\ref{alg:ext_pruning_v2}) 
		in terms of the average structural Hamming distance over 20 repetitions for different dimensions $d \in \{5, 10, 15\}$, expected neighborhood sizes $E_N \in \{2, 3.5, 5\}$ and empirical quantile levels $\tau \in \{0.5, 0.7, 0.9, \dots, 0.995\}$. 
	}
	\label{fig:sim-d-aggregated}
\end{figure}
For small to moderate density of the graph, i.e., $E_N < 5$, the performance of extremal pruning under (b) and (c) approaches that under (a).

\section{Application}\label{sec:application}

We apply the extremal pruning algorithm to a river network dataset originally introduced by \citet{asadi2015extremes} and made publicly available by the Bavarian Environmental Agency (\url{https://www.gkd.bayern.de}). The dataset comprises average daily discharge measurements collected across the upper Danube basin between 1960 and 2009. Following the preprocessing pipeline established by \cite{GMPE2019}, we retain $n = 4,600$ observations for the summer months June, July, and August.

This dataset presents an ideal testbed for our method due to its well-defined causal structure, which is rare in real-world data. Specifically, we consider
six different branches of the river basin with up to 12 gauging stations. 
We assume that the ground truth of causal connections is given by the natural flow of water through these branches. Each branch can be represented as a rooted DAG $G_e$ satisfying the theoretical assumptions supporting our methods. Our goal is to assess the ability of the extremal pruning algorithm to recover a sparse, interpretable causal structure by progressively pruning edges from an initial larger graph $G$. This initial, non-extremal graph can be obtained by any structure learning method from the causal discovery literature using all observations. As we focus on learning causal structure in the extremes, we assume that the causal order is known and provided by a previous analysis. We assume the fully connected graph based on this causal order as the initial DAG $G = (V, E)$ as input for extremal pruning.
For instance, if a river branch consists of 12 stations with a known causal order $\pi = (12, 11, \dots , 1)$, then $(i, j) \in E$ if $\pi(i) < \pi(j)$. The graphical representation of the six river branches is shown in Figure~\ref{fig:river-maps} in the Supplementary Material.

We apply our extremal pruning algorithm presented in Algorithm~\ref{alg:ext_pruning_v2} with the computational speed-up discussed in Section~\ref{sec:ext-pruning} and the extremal conditional independence test introduced in Section~\ref{sec:ex_cond_ind_test}.
To compare the estimated extremal graph to estimates of non-extremal graphs, we adapt Algorithm~\ref{alg:ext_pruning_v2} by replacing the test in Line~\ref{alg:bottleneck} by different conditional independence tests that consider the whole distribution rather than only the extremes. In this way, we make sure that we can indeed attribute differences in the estimated causal structure to whether the focus is on extremes or not, rather than differences in the algorithm. 

The first conditional independence test is a kernel-based joint independence test using the Hilbert--Schmidt criterion, denoted as dHSIC \citep{pfister2018kernel}. The second is the projected covariance measure (PCM) \citep{lundborg2024}. While our proposed extremal conditional independence test can be applied directly to the extremal pruning algorithm, adapting these alternative methods requires additional preprocessing. 
By following a similar approach as in \citet{peters2014causal}, given a pair of nodes $(i, j) \in E$ and a subset
$S \subseteq V \setminus \{i, j\}$, we first estimate the conditional expectation $\mathbb{E}[Y_j \mid \Y_S]$ by regressing $Y_j$ onto $\Y_S$. We then define residuals $r = Y_j - \mathbb{E}[Y_j \mid \Y_S]$ and we test (i)  $r \indep (\Y_S, Y_i)$ with dHSIC and (ii) $\mathbb{E}[r \mid \Y_S, Y_i] \equiv 0$ with PCM.
All methods are evaluated at a significance level of $\alpha = 0.05$. For the regression step, we use random forests \citep{breiman2001random} with $B = 100$ trees and by setting the parameter \texttt{mtry}$=d$,
as implemented in \citet{wright2017ranger}. The HSIC test is conducted using a bootstrap procedure with $B' = 100$ repetitions and a Gaussian kernel, with the median heuristic for bandwidth selection \citep{pfister2017dhsic}.

The evaluation proceeds as follows. Each river branch is treated as an independent dataset, denoted \texttt{dataset-i} for $i = 1, \dots, 6$. We conduct $n_{\text{rep}} = 50$ repeated experiments, where in each run, we randomly subsample 25\% of the available observations ($n_{\text{sub}} = 1,150$). Given a fully connected DAG $G_i$ 
derived from the ground-truth order $\pi_i$, we apply extremal pruning with our proposed method using quantile thresholds $\tau \in \{0.9, 0.95, 0.975\}$, yielding estimated DAGs $\hat{G}_{\text{EP}, \tau}$. We also run extremal pruning using dHSIC and PCM, producing estimated graphs $\hat{G}_{\text{HSIC}}$ and $\hat{G}_{\text{PCM}}$. Finally, we compute the structural Hamming distance between each estimated DAG and the ground-truth graph given by the flow connections.

Figure~\ref{fig:river-results} presents the boxplots of the structural Hamming distance across 50 experimental repetitions for each of the six datasets; see also Figure~\ref{fig:river-maps} in the Supplementary Material for structure estimates of the different models for a fixed repetition.  The results demonstrate that employing extremal conditional independence tests improves the recovery of the river network compared to alternative methods, with relatively stable performance across different threshold levels $\tau$. The most significant advantage is observed relative to dHSIC, which tends to underperform in this setting. This can be attributed to the stringent requirement that dHSIC prunes an edge $(i, j)$ only if the residuals from regression are fully independent of the random vector $(Y_i, Y_S)$. This condition is challenging to satisfy, as nearby stations often exhibit dependence due to hidden confounders such as regional rainfall events.
\begin{figure}[tb]
	\centering
	\includegraphics[width=\textwidth]{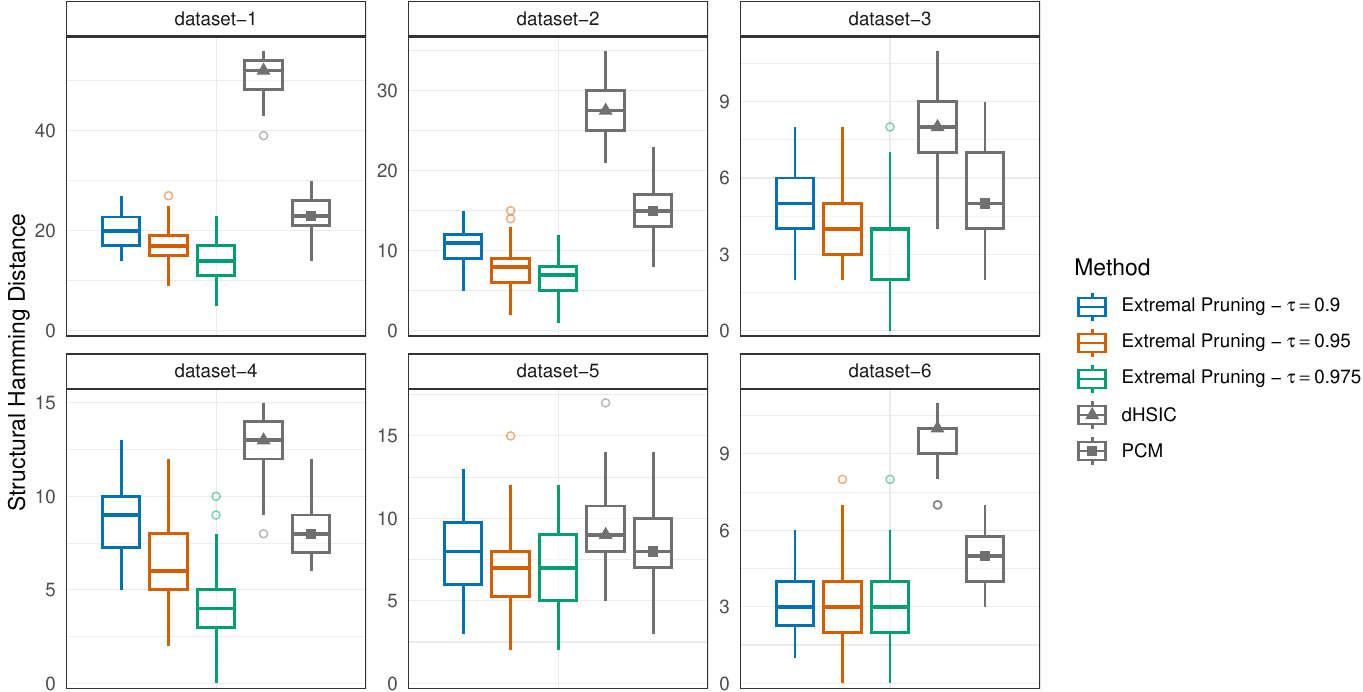}
	\caption{Structural Hamming distances over $n_{\text{reps}}=50$ subsample repetitions and six datasets for the three methods extremal pruning, HSIC, and PCM. Each dataset represents a branch of the upper Danube basin river network. 
	}
	\label{fig:river-results}
\end{figure}
In contrast, PCM demonstrates an improvement over dHSIC by testing the milder condition of mean independence of the residuals with respect to $(Y_i, Y_S)$. While our extremal conditional independence test and PCM maintain a consistent performance across river branches with respect to each other, dHSIC exhibits greater variability. One possible explanation is that confounding effects are more pronounced in the bulk of the distribution, and that the underlying causal structure becomes more apparent in extreme scenarios. Additionally, due to the extreme nature of river discharge data, nonparametric methods such as random forests may struggle to learn dependencies in the tails, leaving residual dependence unaccounted for.

\section*{Acknowledgments}
Sebastian Engelke and Frank R\"ottger were supported by the Swiss National Science Foundation (Grant 186858). 
Nicola Gnecco was supported by the Swiss National Science Foundation (Grant 210976). 
The authors thank Leonard Henckel and Jonas Peters for helpful discussions.

\bibliographystyle{chicago-custom}
\bibliography{bibliography}



\newpage
\begin{appendix}

  \section{Background on directed graphical models}\label{app:d-separation}

  Let $ \G=(V,\E) $ be a directed acyclic graph (DAG) with edge weight matrix $ B $.
  We define $d$-separation as in \cite{PJS2017}.
  \begin{definition}[d-separation]\label{def:d-sep}
      A path between nodes $ i_1 $ and $ i_m $ is \textit{blocked by a set $ S $}, whenever there is a node $ i_k $ such that one of the following holds:
      \begin{enumerate}
          \item\, $ i_k\in S $ and
          \begin{align*}
              i_{k-1}             & \to i_k \to i_{k+1}               \\
              \text{or}~~i_{k-1}  & \leftarrow i_k \leftarrow i_{k+1} \\
              \text{or}~~ i_{k-1} & \leftarrow i_k \to i_{k+1}
            \end{align*}
          \item \, neither $ i_k $ or any of its descendants is in $ S $ and
          \begin{align*}
              i_{k-1}\to i_k \leftarrow i_{k+1}.
            \end{align*}
        \end{enumerate}
      Two disjoint sets $ A,B\subset V $ are \textit{d-separated} by a disjoint set $ S\subset V $ when each path between nodes in $ A $ and $ B $ is blocked by $ S $. This is denoted as
      \[A\indep_{G}B\mid S.\]
  \end{definition}
  A collection of three nodes $i,j,k$ that satisfies $ i\to j \leftarrow k $ is called a collider.
  If the nodes $i,k$ are not connected by an edge, a collider is called a v-structure.
  For some DAG $G$, the undirected graph obtained from ignoring directions in $G$ is denoted as the skeleton graph of $G$.
  The Markov blanket of some node $i\in V$ is the smallest set $M\subset V$ such that $i\indep_{G}V\setminus(\{i\}\cup M) \mid M$.
   
  Let $\X$ be a $d$-variate random vector and $\G=(V,\E)$ some DAG with vertex set $V=[d]$ and edge set $\E\subset V\times V$.
  We say that $\X$ satisfies the global Markov property with respect to $\G$ when
  \[A\indep_{G}B\mid S \Rightarrow \X_A\indep \X_B\mid \X_S\]
  for any disjoint subsets $A,B,S\subset V$.
  The global Markov property connects $d$-separation with conditional independence, and we call any random vector $\X$ that satisfies the global Markov property with respect to $\G$ a directed graphical model with respect to $\G$.
  
  We now define the structural Hamming distance as in~\citet{peters2015structural}.
  \begin{definition}[Structural Hamming distance]\label{def:shd}
    Given two DAGs $G = (V, E_G)$ and $H = (V, E_H)$, the structural Hamming distance is defined as:
    \begin{align*}
      \text{SHD}(G, H) = \# \{ (i, j) \in V \times V \mid G \text{ and } H \text{ do not have the same edge type between } i \text{ and } j \}.
    \end{align*}
    Here, an edge type refers to the presence, absence, or orientation of an edge between two nodes. In particular,
    \begin{itemize}
        \item If $(i, j) \in E_G$ and $(i, j) \notin E_H$, this counts as a discrepancy.
        \item If $(i, j) \in E_H$ but $(i, j) \notin E_G$, this counts as a discrepancy.
        \item If $(i, j) \in E_G$ and $(j, i) \in E_H$ (or vice versa), this counts as a discrepancy.
    \end{itemize}
    \end{definition}
  The SHD measures the number of edge modifications (addition, deletion, or reversal) required to transform $G$ into $H$.

  \section{Proofs}\label{sec:app_proofs}
  \subsection{Proof of Theorem \ref{thm:SCM_limit}}
  \begin{proof}
   The idea of the proof is inspired by the proof of \cite[Theorem 2.3]{segers2007multivariate} for times series.	
     Let us first establish the homogeneity of $\Psi_v$. For any $s\in \mathbb R$, this follows from the definition of $\Psi_v$ that 
     \begin{align*}
       \Psi_v(\x + s \einsfun, e) &= \lim_{t\to \infty} f^*_v(\x + (s+t)\einsfun, e) - (s+t) + s\\
       & = \lim_{u\to \infty} f^*_v(\x + u\einsfun, e) - u + s\\
       & = \Psi_v(\x , e) +s,
     \end{align*}
    where we replaced $s+t$ by $u$.
     
    Instead of statement~\eqref{MRV1}, we prove the equivalent statement	
    \begin{align}\label{MRV2}
      \lim_{t \to \infty} \mathbb P(X^*_1 - t > y, \X^* - X^*_1 \einsfun \in A \mid  X^*_1 > t ) = \exp(-y) \mathbb P(\W^{1} \in A),
    \end{align}
    for any Borel subset $A\subset \mathbb R^d$,
    where the only difference is that we now normalize with the first component of $\X^*$ instead of the deterministic threshold $t$. 
  
    Let $\pi:\{1,\dots, d\} \to \{1,\dots, d\}$ be a causal (or topological) ordering of the DAG $\G$, that is, for all $i \in \an(j)$ it holds that $\pi(i) < \pi(j)$, $i,j\in V$. Since the (only) root is node $X^*_1$, we have $\pi(1) = 1$.
    We prove~\eqref{eSCM} and~\eqref{MRV2} for the set of variables $(\pi(1), \dots, \pi(m))$, $m\in \{1,\dots ,d\}$, and then use induction over $m$. For $m=1$, the statement follows from the assumption that $X^*_1$ is normalized to exponential margins, and therefore $X_1^* - t \mid  X^*_1 > t$ is again an exponential  distribution because of the memorylessness property. The second part of the probability on the left-hand side of~\eqref{MRV2} becomes $X^*_{\pi(1)}- X^*_1 = 0 \in A$, which yields that $W^{1}_1 = 0$ as stated in~\eqref{eSCM}.
    
    Now suppose that the induction hypothesis holds for $m-1$, and recall that 
    \[ \pa(\pi(m)) \subseteq \{ \pi(1), \dots, \pi(m-1) \},\]
    by the definition of a causal ordering.	
    Using Portmanteau theorem, we can equivalently show that for any bounded and continuous function $g: \mathbb R^{m+1} \to \mathbb R$ we have 
    \begin{align}\label{claim1}
      \lim_{t \to \infty}  \mathbb E\left[ g\left(X^*_1 - t, X^*_{\pi(1)} - X^*_1, \dots, X^*_{\pi(m)} - X^*_1 \right) \middle| X^*_1 > t \right] = \mathbb E\left[ g\left( R, W^{1}_{\pi(1)}, \dots, W^{1}_{\pi(m)} \right) \right] 
    \end{align}
    For this we rewrite the left-hand side as
    \[   \mathbb E\left[ \tilde g_t \left(X^*_1 - t, X^*_{\pi(1)} - X^*_1, \dots, X^*_{\pi(m-1)} - X^*_1 \right) \middle| X^*_1 > t \right]  \]
    with a function $\tilde g_t: \mathbb R^{m} \to \mathbb R$ defined as
    \[ \tilde g_t(y, x_{\pi(1)}, \dots, x_{\pi(m-1)}) = \mathbb E_{\varepsilon_{\pi(m)}} \left[  g\left(y, x_{\pi(1)}, \dots, x_{\pi(m-1)}, f^*_{\pi(m)}( \x_{\text{pa}(\pi(m))} + (y + t)\mathbf{1}, \varepsilon_{\pi(m)}) - y - t \right) \right],  \]
    where the expectation is taken over the noise variable as indicated.
    Moreover, define 
    \[ \tilde g(y, x_{\pi(1)}, \dots, x_{\pi(m-1)}) = \mathbb E_{\varepsilon_{\pi(m)}} \left[  g\left(y,x_{\pi(1)}, \dots, x_{\pi(m-1)}, \Psi_{\pi(m)}( \x_{\text{pa}(\pi(m))}, \varepsilon_{\pi(m)})  \right) \right].  \]	
    With these definitions, claim~\eqref{claim1} is equivalent to showing that 
    \begin{align}\label{eq:weak}
      \lim_{t \to \infty}  \mathbb E\left[ \tilde g_t (X^*_1 - t, X^*_{\pi(1)} - X^*_1, \dots, X^*_{\pi(m-1)} - X^*_1 ) \mid X^*_1 > t \right] = \mathbb E\left[ \tilde g( R, W^{(1)}_{\pi(1)}, \dots, W^{(1)}_{\pi(m-1)} ) \right].  
    \end{align}
    Since $\tilde g_t$ and $\tilde g$ are bounded functions, it suffices to show that the random variables inside the expectation converge weakly. 
    To this end, note that for any sequences $y(t) \to y$ and $x_{\pi(i)}(t) \to x_{\pi(i)}$, $i = 1,\dots , m-1$, we have 
    \[ \lim_{t \to \infty} \tilde g_t(y, x_{\pi(1)}(t), \dots, x_{\pi(m-1)}(t)) = \tilde g(y, x_{\pi(1)}, \dots, x_{\pi(m-1)}),\]
    by the dominated convergence theorem and since 
    \[ \lim_{t \to \infty} f^*_{\pi(m)}( \x_{\pa(\pi(m))}(t) + y+ t, e) - y - t  = \Psi_{\pi(m)}( \x_{\pa(\pi(m))},e),\]
    where the latter follows from the definition of $\Psi_{\pi(m)}$ in~\eqref{limit_cond} and Assumption~\ref{ass_main}. We can now apply the extended continuous mapping theorem \citep[Theorem 18.11]{vaart_1998} to conclude weak convergence of the random variables inside the expectation in~\eqref{eq:weak}, where we use the induction assumption for $m-1$. This establishes~\eqref{MRV2} and~\eqref{eSCM}.
     
    From \citet[Theorem 2]{segers_2020} it follows that~\eqref{MRV2} implies~\eqref{MRV1} and, equivalently, the multivariate regular variation of $\X^*$. Moreover, we note that since $\Psi_v$ is a function with values in $\mathbb R$, we have that $W_v^{(1)} > -\infty$ almost surely for all $v \in V$. Therefore, \citet[Corollary 3]{segers_2020} implies that $\mathbb E \exp\{ W^{(1)}_v \} = 1$; for details see also Proposition 1 in the Supplementary Material of \cite{eng_vol_2022}.
  \end{proof}
  
  \subsection{Proof of Example \ref{ex:tail}}
  \begin{proof}
    We need to show that the first term has approximately an exponential distribution. For this we define the approximate maximum as 
    \[ \widetilde{\max}(x_2,x_3) = \frac12\left\{x_2 + x_3 + \sqrt{ (x_2 - x_3)^2 + 1/(1+x_2^2 + x_3^2)}\right\} \] 
    and note that 
    \[ 2\{\widetilde{\max}(x_2,x_3) - \max(x_2,x_3)\}  = \sqrt{ (x_2 - x_3)^2 + 1/(1+x_2^2 + x_3^2)} - \sqrt{ (x_2 - x_3)^2} \leq C |1+x_2^2 + x_3^2|^{-1/2},\]
    because of the H\"older continuity of the square root and where $C>0$ is a constant.
    Since $ \widetilde{\max}(x_2,x_3) \leq \max(x_2,x_3) + 2C$ for any $x_2,x_3 \in \mathbb R$, we obtain  
    \begin{align*}
      \mathbb P(\widetilde{\max}(X_2,X_3) > x ) &= \mathbb P(\widetilde{\max}(X_2,X_3) > x, \max(X_2,X_3) > x - 2C)\\
      & \leq \mathbb P(\max(X_2,X_3) + 2C| X_2^2 + X_3^2|^{-1/2}  > x, \max(X_2,X_3) > x - 2C) \\
      & \leq \mathbb P(\max(X_2,X_3) + 2C(x-2C)^{-1}  > x, \max(X_2,X_3) > x - 2C) \\
      & = \mathbb P(\max(X_2,X_3) > x - 2C(x-2C)^{-1}) \\
      & \leq \mathbb P(\max(X_2,X_3) + \eta > x) 
    \end{align*}
    where for a fixed but arbitrary $\eta > 0$ we have chosen $x>0$ large enough such that $2C(x-2C)^{-1} < \eta$. 
    It follows that 
    \begin{align*}
      0 \leq \lim_{q \to 1} F^{-1}_{\widetilde{\max}} (q) -  F^{-1}_{\max} (q) \leq \eta, 
    \end{align*}
    where $F^{-1}_{\max}$ and $ F^{-1}_{\widetilde{\max}}$ denote the the quantile functions of the random variables $\max(X_2,X_3)$ and $\widetilde{\max}(X_2,X_3)$, respectively. Note that the first inequality above follows trivially from the fact that  $ \max(x_2,x_3) \leq \widetilde{\max}(x_2,x_3)$. Since $\eta>0$ was arbitrary, we can replace it by $0$ in the above inequality. 
    
    We now consider the quantile function of the maximum.
    Note that $\max(X_2,X_3) =  \varepsilon_1 +  \max(\varepsilon_2, \varepsilon_3)$.
    Since $\max(\varepsilon_2, \varepsilon_3) \leq \max(0,\varepsilon_2) + \max(0,\varepsilon_3)$ we have that $\mathbb E \exp \{ (1+\delta) \max(\varepsilon_2, \varepsilon_3) \} < \infty$. Consequently, $\mathbb E \exp \{ \max(\varepsilon_2, \varepsilon_3)\} = K < \infty $ and since $\varepsilon_1$ is exponentially distributed, we have
    \begin{align*}
      \lim_{q \to 1} F^{-1}_{\max} (q) - G^{-1}(q) = \log K,
    \end{align*}
    by \citet[Lemma 2]{zhang2023extremal}. Putting things together, we find that 
    \begin{align*}
      \lim_{q \to 1} F^{-1}_{\widetilde{\max}} (q) - G^{-1}(q) = \log K,
    \end{align*}
    that is, $\widetilde{\max}(X_2,X_3)$ has an exponential tail up to a constant shift.
    This implies that in the limit, we can replace the quantile function $F^{-1}_{\widetilde{\max}}$ by $F_4^{-1}$, since the latter is defined as $X_4 = \widetilde{\max}(X_2,X_3) + \varepsilon_4$ and thus 
    \begin{align}\label{F4_tail}
      \lim_{q \to 1} G^{-1}(q) - F^{-1}_4(q) = 0,
    \end{align}
    with the same argument as in~\eqref{exp_tail}.
    Consequently, the quantile function of $X_4$ satisfies
    \[ \lim_{q\to 1} F_4^{-1}(q) - G^{-1}(q)  = \lim_{q\to 1} F_4^{-1}(q) - F^{-1}_{\widetilde{\max}} (q) + F^{-1}_{\widetilde{\max}} (q) - G^{-1}(q) = \log K, \]  
    and therefore
    \[ \lim_{u\to \infty} G^{-1}\circ F_4(u) - u  = -\log K. \]  
  \end{proof}

  \subsection{Proof of Example~\ref{ex:exp}}\label{app:ex-exp}
  \begin{proof}
    We first observe that under no intervention, i.e., the observational distribution, we have
    \begin{align*}
       \PP^{\intervene(\emptyset)}(Y_2\le x)&= \lim_{t \to \infty}\PP(X_
    2^*-t \leq x) = 1, \quad \forall x\in(-\infty,\infty),
    \end{align*}
    since $X_2^* = H^{-1}\circ F_2(X_2)$ is standard exponential. Therefore, $\PP_{Y_2}^{\intervene(\emptyset)}=\delta_{-\infty}$.
  
    For intervention on $Y_1$, we note that the survival function of $X_2$ satisfies
    \[\bar F_2(x) =  \PP(X_1 + \varepsilon_2 > x) = \PP(X_1 + \varepsilon_2 - \sigma^2/2 > x -\sigma^2/2) = \bar F_1(x - \sigma^2/2)\{1+ o(1)\}, \quad x\to\infty,\]
    since $\mathbb E \exp(\varepsilon_2 - \sigma^2/2) = 1$ and therefore by \citet[Lemma 2]{zhang2023extremal}, $X_1 + \varepsilon_2 - \sigma^2/2$ has approximately a standard exponential tail.
    Thus, we have 
    \[H^{-1}\circ F_2(x) = -\log\{\bar F_2(x)\}  = -\log[ \bar F_1(x - \sigma^2/2)\{1+ o(1)\} ]  = x - \sigma^2/2 + o(1),\quad x\to\infty,\] 
    and consequently
    \begin{align*}
        \PP_{Y_2}^{\intervene(Y_1:=\xi_1)}(A)&=\lim_{t \to \infty}\PP^{\intervene(X_1:=t+\xi_1)}(H^{-1}\circ F_2(X_2)-t\in A)\\
        &=\lim_{t \to \infty}\PP(H^{-1}\circ F_2(t+\xi_1+\varepsilon_2)-t\in A)\\
        &=\lim_{t \to \infty}\PP(\varepsilon_2 + \xi_1- \sigma^2/2+ o(1)\in A)\\
        &=\PP(\varepsilon_2 + \xi_1- \sigma^2/2\in A),
    \end{align*}
    where the last equation follows from dominated convergence.
    Therefore, we have that $\PP_{Y_2}^{\intervene(Y_1:=\xi_1)} = N(\xi_1-\sigma^2/2, \sigma^2)$.
  \end{proof}
    
  \subsection{Proof of Theorem~\ref{thm:SCM_limit_intervened}}\label{proof:thm-SCM_limit_intervened}
  \begin{proof}
     As in the proof of Theorem~\ref{thm:SCM_limit}, let $\pi:\{1,\dots, d\} \to \{1,\dots, d\}$ be a causal ordering of the DAG $\G$, that is, for all $i \in \an(j)$ it holds that $\pi(i) < \pi(j)$, $i,j\in V$. Since the only root node is $X^*_1$, we have $\pi(1) = 1$.
     We prove the convergence in~\eqref{conv_dist} for the set of variables $(\pi(1), \dots, \pi(m))$, $m\in \{1,\dots ,d\}$, and then use induction over $m$. 	
     For $m=1$, we note that $X_1 = f_1(\varepsilon_1)$ since there are no parents. If $1 \in \mathcal I$, that is, the root has been intervened on, then  we have
     \[\lim_{t\to\infty} \mathbb P^{\intervene(X^*_1:= t+\xi_1)}(X^*_1 - t \in A) = \mathbb P(\xi_1 \in A)  = \mathbb P(\tilde \Psi_1(\varepsilon_1) \in A),\]
     by the definition of $\tilde\Psi_1$ in~\eqref{tilde_Psi}, and for any continuity set $A\subset [-\infty, \infty)$ of $\tilde \Psi_1(\varepsilon_1)$.
     Otherwise, if $1 \notin \mathcal I$, then we have the convergence 
     \[\lim_{t\to\infty} \mathbb P(X^*_1 - t \in A) = \lim_{t\to\infty} \mathbb P(H^{-1}\circ F_1 \circ f_1(\varepsilon_1) -t \in A)  = \mathbb P(-\infty \in A) = \mathbb P(\tilde \Psi_1(\varepsilon_1) \in A),\]
     since $H^{-1}\circ F_1 \circ f_1(\varepsilon_1)$ is a random variable with values in $\mathbb R$.
     
     Now suppose that the induction hypothesis holds for $m-1$, and recall that 
     \[ \pa(\pi(m)) \subseteq \{ \pi(1), \dots, \pi(m-1) \},\]
     by the definition of a causal ordering. 
     First, consider the case where $\pi(m) \in\mathcal I$, that is, it is a node that has been intervened on. In this case, the definition of the intervention and the transformation to the exponential scale imply the $\mathbb P^{\intervene(\X^*_{\mathcal I} := t+\boldsymbol{\xi}_{\mathcal I})}$-almost sure equality
     \[X^*_{\pi(m)} - t= H^{-1}\circ F_{\pi(m)} \circ F_{\pi(m)}^{-1}\circ H(t + \xi_{\pi(m)}) - t = \xi_{\pi(m)} = \tilde \Psi_{\pi(m)}({\Y}_{\pa(\pi(m))}, \varepsilon_{\pi(m)}),\]
     and therefore the converge in distribution~\eqref{conv_dist} follows.
      
     Now suppose that $\pi(m)\notin \mathcal I$.
     Using Portmanteau theorem, our claim is equivalent to showing that for any bounded and continuous function $g: [-\infty,\infty)^{m} \to \mathbb R$ we have 
     \begin{align}\label{claim1_intervened}
       \notag\lim_{t \to \infty}  &\mathbb E^{\intervene(\X^*_{\mathcal I} := t + \boldsymbol{\xi}_{\mathcal I})}\left[ g\left(X^*_{\pi(1)} - t, \dots, X^*_{\pi(m)} - t \right) \right] \\
       &= \mathbb E\left[ g\left( \tilde \Psi_{\pi(1)}({\Y}_{\pa({\pi(1)})}, \varepsilon_{\pi(1)}), \dots, \tilde \Psi_{\pi(m)}({\Y}_{\pa({\pi(m)})}, \varepsilon_{\pi(m)}) \right) \right]. 
     \end{align}
     For this we rewrite the left-hand side as
     \[  \mathbb E^{\intervene(\X^*_{\mathcal I} := t + \boldsymbol{\xi}_{\mathcal I})}\left[ \tilde g_t \left(X^*_{\pi(1)} - t, \dots, X^*_{\pi(m-1)} - t \right) \right]  \]
     with a function $\tilde g_t: [-\infty,\infty)^{m-1} \to \mathbb R$ defined as
     \[ \tilde g_t(x_{\pi(1)}, \dots, x_{\pi(m-1)}) = \mathbb E_{\varepsilon_{\pi(m)}} \left[  g\left(x_{\pi(1)}, \dots, x_{\pi(m-1)}, f^*_{\pi(m)}( \x_{\text{pa}(\pi(m))} + t\mathbf{1}, \varepsilon_{\pi(m)}) - t \right) \right],  \]
     where the expectation is taken over the noise variable as indicated.
     Moreover, define 
     \[ \tilde g(x_{\pi(1)}, \dots, x_{\pi(m-1)}) = \mathbb E_{\varepsilon_{\pi(m)}} \left[  g\left(x_{\pi(1)}, \dots, x_{\pi(m-1)}, \Psi_{\pi(m)}( \x_{\text{pa}(\pi(m))}, \varepsilon_{\pi(m)})  \right) \right].  \]	
     With these definitions, claim~\eqref{claim1_intervened} is equivalent to showing that 
     \begin{align}\label{eq:weak_intervened}
       \notag \lim_{t \to \infty}  &\mathbb E^{\intervene(\X^*_{\mathcal I} := t + \boldsymbol{\xi}_{\mathcal I})}\left[ \tilde g_t (X^*_{\pi(1)} - t, \dots, X^*_{\pi(m-1)} - t)\right] \\
       & = \mathbb E \left[ \tilde g\left(  \tilde \Psi_{\pi(1)}({\Y}_{\pa({\pi(1)})}, \varepsilon_{\pi(1)}), \dots, \tilde \Psi_{\pi(m-1)}({\Y}_{\pa({\pi(m-1)})}, \varepsilon_{\pi(m-1)})  \right) \right].   
     \end{align}
     Since $\tilde g_t$ and $\tilde g$ are bounded functions, it suffices to show that the random variables inside the expectation converge weakly. 
     To this end, note that for any sequences $x_{\pi(i)}(t) \to x_{\pi(i)}\in[-\infty,\infty)$, $i = 1,\dots , m-1$, we have 
     \[ \lim_{t \to \infty} \tilde g_t(x_{\pi(1)}(t), \dots, x_{\pi(m-1)}(t)) = \tilde g(x_{\pi(1)}, \dots, x_{\pi(m-1)}),\]
     by the dominated convergence theorem and since 
     \[ \lim_{t \to \infty} f^*_{\pi(m)}( \x_{\pa(\pi(m))}(t) + t, e) - t  = \Psi_{\pi(m)}( \x_{\pa(\pi(m))},e),\]
     where the latter follows from the stronger version of Assumption~\ref{ass_main} from the theorem statement that also applies to sequences with possible values and limits $-\infty$. We can now apply the extended continuous mapping theorem \citep[Theorem 18.11]{vaart_1998} to conclude weak convergence of the random variables inside the expectation in~\eqref{eq:weak_intervened}, where we use the induction assumption for $m-1$.
     Note that this also includes the case where $\{\pi(m) \cup \an(\pi(m))\}\cap  \mathcal I = \emptyset$, that is, neither $\pi(m)$ nor any node upstream of $\pi(m)$ has been intervened on. Intuitively, this means that node $\pi(m)$ is in normal state since no extreme intervention is propagated to this node. In this case, according to the induction hypothesis, 
     \[ \tilde \Psi_{i}({\Y}_{\pa(i)}, \varepsilon_{i}) = -\infty, \quad \forall i \in \pa(\pi(m)), \]
     and therefore $\Psi_{\pi(m)}( \Y_{\text{pa}(\pi(m))}, \varepsilon_{\pi(m)}) = -\infty$ because of homogeneity of the function $\Psi_{\pi(m)}$.
     This establishes~\eqref{conv_dist}.
   \end{proof}
   
  \subsection{Proof of Proposition~\ref{prop:extr_cause_extr_scm}}\label{proof:prop-extr_cause_extr_scm}
  \begin{proof}		
  We first note that by Theorem~\ref{thm:SCM_limit_intervened}, for any extremal interventions $\boldsymbol{\xi}_{\mathcal V\cup \{i,j\}}$, the limit in~\eqref{limit_Y} under extremal interventions exist. Moreover,
  \[\mathbb P^{\intervene(\Y_{ V\setminus\{j\}} := \boldsymbol{\xi}_{V\cup\{j\}})}_{Y_j}(A) = \mathbb P(\Psi_j(\boldsymbol{\xi}_{\pa_{G_e}(j)}, \varepsilon_j) \in A)\]
  for any Borel set $A\subset \mathbb R$.
  If $i\notin \pa_{G_e}(j)$, then we are still intervening on all of the parents of $j$ in the extremal graph $G_e$ and therefore 
  \[ \mathbb P^{\intervene(\Y_{V \cup\{i,j\}} := \boldsymbol{\xi}_{V\cup\{i,j\}})}_{Y_j}(A) = \mathbb P(\Psi_j(\boldsymbol{\xi}_{\pa_{G_e}(j)}, \varepsilon_j) \in A) = \mathbb P^{\intervene(\Y_{ V\setminus\{j\}} := \boldsymbol{\xi}_{V\cup\{j\}})}_{Y_j}(A).\]
  Therefore, in this case, $X_i$ is not a direct extremal cause of $X_j$, or, equivalently, if $X_i$ is a direct extremal cause of $X_j$, then $i\in \pa_{G_e}(j)$.

  Let now $i\in \pa_{G_e}(j)$ and suppose that $X_i$ is not a direct extremal cause of $X_j$. This means that for all interventions $\boldsymbol{\xi}_{\mathcal V\cup \{i,j\}}$ we have
  \begin{align}\label{intervention_equality}
    \mathbb P^{\intervene(\Y_{V \cup\{i,j\}} := \boldsymbol{\xi}_{V\cup\{i,j\}})}_{Y_j}(A) =  \mathbb P^{\intervene(\Y_{ V\setminus\{j\}} := \boldsymbol{\xi}_{V\cup\{j\}})}_{Y_j}(A),
  \end{align}
  for all Borel subsets $A\subset \mathbb R$. First assume that $X_i$ is not the root node. We then have that 
  \[  \mathbb P^{\intervene(\Y_{V \cup\{i,j\}} := \boldsymbol{\xi}_{V\cup\{i,j\}})}_{Y_j}(A) =
  \mathbb P(\Psi_j(\boldsymbol{\xi}_{\pa_{G_e}(j)\setminus\{i\}}, \Psi_i(\boldsymbol{\xi}_{\pa_{G_e}(i)}, \varepsilon_i), \varepsilon_j) \in A).\]
  Similarly, we obtain
  \[ \mathbb P^{\intervene(\Y_{ V\setminus\{j\}} := \boldsymbol{\xi}_{V\cup\{j\}})}_{Y_j}(A) = \mathbb P(\Psi_j(\boldsymbol{\xi}_{\pa_{G_e}(j)}, \varepsilon_j) \in A) .\]
  We now fix a $\xi_i \in \mathbb R$ and integrate over the observational distribution of the extremal SCM over the nodes $V\setminus \{i,j\}$, that is,
  \begin{align*}
    \int \mathbb P^{\intervene(\Y_{ V\setminus\{j\}} := \boldsymbol{\xi}_{V\cup\{j\}})}_{Y_j}(A) \mathbb P_{\Y_{V\setminus \{i,j\}}^1}(\mathrm d\boldsymbol{\xi}_{V\setminus \{i,j\}}) &= \int \mathbb P(\Psi_j(\boldsymbol{\xi}_{\pa_{G_e}(j)}, \varepsilon_j) \in A) \mathbb P_{\Y_{V\setminus \{i,j\}}^1}(\mathrm d\boldsymbol{\xi}_{V\setminus \{i,j\}})\\
    & = \mathbb P(\Psi_j(\Y^1_{\pa_{G_e}(j)\setminus\{i\}}, {\xi}_{i}, \varepsilon_j) \in A)
  \end{align*}
  On the other hand, we have 
  \begin{align*}
    \int \mathbb P^{\intervene(\Y_{V \cup\{i,j\}} := \boldsymbol{\xi}_{V\cup\{i,j\}})}_{Y_j}(A)  \mathbb P_{\Y_{V\setminus \{i,j\}}^1}(\mathrm d\boldsymbol{\xi}_{V\setminus \{i,j\}}) 
    &= \int  \mathbb P( \Psi_j(\boldsymbol{\xi}_{\pa_{G_e}(j)\setminus\{i\}}, \Psi_i(\boldsymbol{\xi}_{\pa_{G_e}(i)}, \varepsilon_i))\in A) \mathbb P_{\Y_{V\setminus \{i,j\}}^1}(\mathrm d\boldsymbol{\xi}_{V\setminus \{i,j\}})\\
    & = \mathbb P(\Psi_j(\Y^1_{\pa_{G_e}(j)}, \varepsilon_j) \in A)
  \end{align*}
  By the equality in~\eqref{intervention_equality}, this would mean that we have the equality in distribution
  \[ \Psi_j(\Y^1_{\pa_{G_e}(j)}, \varepsilon_j) \stackrel{(d)}{=} \Psi_j(\Y^1_{\pa_{G_e}(j)\setminus\{i\}}, {\xi}_{i}, \varepsilon_j),\]
  which contradicts the minimality of the extremal SCM as discussed after~\eqref{eSCM2}. Indeed, otherwise we could choose $\tilde \Psi_j : \mathbb R^{|\pa_{G_e}(j)| -1}\times \mathbb R \to \mathbb R$ as $\tilde \Psi(x, e) = \Psi(x_{\pa_{G_e}(j)\setminus\{i\}}, \xi_i, e)$ without changing the distribution of the extremal SCM. Therefore, $X_i$ is a direct extremal cause of $X_j$. 
  
  If $i\in\pa_{G_e}(j)$ is the root node, then by Theorem~\ref{thm:SCM_limit_intervened} we have 
  \[  \mathbb P^{\intervene(\Y_{V \cup\{i,j\}} := \boldsymbol{\xi}_{V\cup\{i,j\}})}_{Y_j}(A) =
  \mathbb P(\Psi_j(\boldsymbol{\xi}_{\pa_{G_e}(j)\setminus\{i\}}, -\infty, \varepsilon_j) \in A).\]
  If $X_i$ was not a direct extremal cause of $X_j$, with similar arguments as above but now integrating over nodes $V\setminus\{j\}$, we could then conclude that 
  \[ \Psi_j(\Y^1_{\pa_{G_e}(j)}, \varepsilon_j) \stackrel{(d)}{=} \Psi_j(\Y^1_{\pa_{G_e}(j)\setminus\{i\}}, -\infty, \varepsilon_j),\]
  which contradicts again the minimality assumption of the function $\Psi_j$. 	
  \end{proof}
   
  \subsection{Proof of Proposition~\ref{prop:ext-markov-properties}}\label{proof:prop-ext-markov-properties}
  \begin{proof}
  Since the extremal conditional independence forms a semi-graphoid \citep[Theorem 5.3]{eng_iva_kir}, it follows along the lines of \citet[Proposition 4]{Lauritzen1990} that the extremal local and global Markov properties are equivalent.
  
  Consider a terminal node $i\in V$ of the DAG $G$, that is, $i$ has no descendants in $G$. Let $k\in\pa(i)$ and consider the auxiliary random vector $\Y^k$ defined in Section~\ref{sec:MPD} and denote by $f^k = \lambda$ its density on $\mathcal L^k$. By the definition of extremal conditional independence, the extremal local Markov property implies the classical local Markov property of $\Y^k$ on the same graph $G$. Therefore,
  \begin{align*}
    \lambda(\y)&= f^k(\y) \\
    &= f^k(y_i \mid \y_{\pa(i)}) f^k(\y_{\setminus i})\\
    &=\lambda(y_i\mid \y_{\pa(i)})\lambda(\y_{\setminus i}),\quad \y \in \mathcal L^k,
  \end{align*}
  where the second equation follow by the local 
  Markov property since $\nd(i) = V \setminus\{i\}$, and the last equation follows from the fact that the $I$-th marginal density of $\Y^k$ is equal to $\lambda(\y_I)$ whenever $k\in I$. 
  To conclude that this factorization also hold on the whole space $\mathbb R^d$, we use homogeneity of $\lambda$ and its marginals. Indeed, let $\y \in \mathbb R^d$ and choose $t \in\mathbb R$ such that $t\einsfun + \y \in \mathcal L^k$. Then
  \[ \lambda(\y) = e^{\top}\lambda(t\einsfun + \y) = e^{\top}\lambda(t + y_i\mid t\einsfun + \y_{\pa(i)})\lambda( t\einsfun + \y_{\setminus i}) = \lambda(y_i\mid \y_{\pa(i)})\lambda(\y_{\setminus i}),\]
  where we used that $\lambda(t + y_i\mid t\einsfun + \y_{\pa(i)})\lambda( t\einsfun + \y_{\setminus i}) =  \lambda(y_i\mid t\einsfun + \y_{\pa(i)})\lambda(\y_{\setminus i})$, again by homogeneity of the marginal exponent measure densities.
  
  We can now consider the reduced exponent measure density $\lambda(\y_{\setminus i})$ on the rooted DAG $G_{\setminus i} = (V\setminus \{i\} , E_{\setminus i})$, where $E_{\setminus i}$ are all edges in $E$ not involving node $i$. This is again a directed extremal graphical model, and we can iteratively apply the same arguments as before to obtain~\eqref{density_factorization}.
  
  On the other hand, if we start from the factorization~\eqref{density_factorization}, for any $i\in V$, we first integrate out the descendants of $i$ to obtain the density $\lambda(\y_{\nd(i)\cup i})$ on the sub-graph corresponding to $G$ restricted to the nodes $\nd(i)\cup \{i\}$. Because of the factorization, we can write this density as 
  \begin{align}\label{proof_fact}	
    \lambda(\y_{\nd(i)\cup \{i\}}) = \prod_{v\in \nd(i)\cup \{i\}} \lambda(y_v\mid \y_{\pa(v)})  = h(\y_{\pa(i)\cup \{i\}})g(\y_{\nd(i)}), \quad \y \in \mathcal{E},
  \end{align}
  for suitable functions $h$ and $g$. In order to check the extremal local Markov property for $\Y$, we need to check that for any $i,k\in V$ we have  
  \[ Y^k_i \indep \Y^k_{\nd(i)\setminus \pa(i)} \mid \Y^k_{\pa(i)}.\]
  Since the the density of $\Y^k$ is equal to $\lambda$ on $\mathcal L^k$, this follow immediately from~\eqref{proof_fact} and classical equivalent conditions for conditional independence \citep[Chapter 3.1]{lau1996}.
  \end{proof}
   
   \subsection{Proof of Example~\ref{ex:logistic}}
   For $m=|\pa(v)|$ it is
   \begin{align*}
     \lambda(y_v \mid \y_{\pa(v)})&=\frac{ \left(\sum_{i\in \{v\}\cup\pa(v)} \exp\left\{ -\frac{y_i}{\theta} \right\}\right)^{\theta-(m+1)} \exp\left\{-\frac{1}{\theta}\sum_{i\in \{v\}\cup\pa(v)} y_i\right\}\prod_{i=1}^{m}\left(\frac{i}{\theta}-1\right)}{ \left(\sum_{i\in\pa(v)} \exp\left\{ -\frac{y_i}{\theta} \right\}\right)^{\theta-m} \exp\left\{-\frac{1}{\theta}\sum_{i\in\pa(v)} y_i\right\}\prod_{i=1}^{m-1}\left(\frac{i}{\theta}-1\right)}\\
     &=\frac{\left(\exp\left\{ -\frac{y_v}{\theta} \right\}+\sum_{i\in \pa(v)} \exp\left\{ -\frac{y_i}{\theta} \right\}\right)^{\theta-m-1} \left(\frac{m}{\theta}-1\right)}{ \left(\sum_{i\in\pa(v)} \exp\left\{ -\frac{y_i}{\theta} \right\}\right)^{\theta-m} \exp\left\{\frac{y_m}{\theta} \right\}}\\
     &=\left(1 + \frac{1}{\sum_{i \in \pa(v)} \exp\left\{ \frac{y_v - y_i}{\theta}\right\} } \right)^{\theta-|\pa(v)| - 1}	\frac{|\pa(v)|/ \theta-1}{\sum_{i \in \pa(v)} \exp\left\{ \frac{y_v-y_i}{\theta}\right\} }.
   \end{align*}
   
   \subsection{Proof of Example~\ref{ex:dirichlet}}
    For $m=|\pa(v)|$ we have
  \begin{align*}
    \lambda(y_v \mid \y_{\pa(v)})&=\frac{\frac{1}{m+1} \frac{\Gamma(1+\sum_{i\in \{v\}\cup\pa(v)} \alpha_i) \exp\left\{\sum_{i\in \{v\}\cup\pa(v)} y_i\right\}}{(\sum_{i\in \{v\}\cup\pa(v)} \alpha_i \exp\{y_i\})^{m+2}} \prod_{i\in \{v\}\cup\pa(v)} \frac{\alpha_i}{\Gamma(\alpha_i)}  
      \left(\frac{\alpha_i \exp\{y_i\}}{\sum_{j\in \{v\}\cup\pa(v)} \alpha_j \exp\{y_j\}}\right)^{\alpha_i-1}}{ \frac 1 m \frac{\Gamma(1+\sum_{i\in\pa(v)} \alpha_i) \exp\left\{\sum_{i\in\pa(v)} y_i\right\}}{(\sum_{i\in\pa(v)} \alpha_i \exp\{y_i\})^{m+1}} \prod_{i\in\pa(v)} \frac{\alpha_i}{\Gamma(\alpha_i)}  
      \left(\frac{\alpha_i \exp\{y_i\}}{\sum_{j\in\pa(v)} \alpha_j \exp\{y_j\}}\right)^{\alpha_i-1}}\\
    &=\frac{m\Gamma(1+\sum_{i\in \{v\}\cup\pa(v)} \alpha_i)}{(m+1)\Gamma(1+\sum_{i\in\pa(v)} \alpha_i)\Gamma(\alpha_v)}\times\\ 
    &\quad \frac{\exp(y_v)(\sum_{i\in\pa(v)} \alpha_i \exp\{y_i\})^{m+1}}{(\sum_{i\in \{v\}\cup\pa(v)} \alpha_i \exp\{y_i\})^{m+2}}\frac{(\alpha_v \exp(y_v))^{\alpha_v-1}\left(\sum_{j\in\pa(v)} \alpha_j \exp\{y_j\}\right)^{\sum_{i \in \pa(v)}\alpha_i-m}}{\left(\sum_{j\in \{v\}\cup\pa(v)} \alpha_j \exp\{y_j\}\right)^{\sum_{i\in \{v\}\cup\pa(v)} \alpha_i -(m+1)}}\\
    &=C\times \frac{(\alpha_v \exp(y_v))^{\alpha_v}\left(\sum_{j\in\pa(v)} \alpha_j \exp\{y_j\}\right)^{1+\sum_{i \in \pa(v)}\alpha_i}}{\left(\sum_{j\in \{v\}\cup\pa(v)} \alpha_j \exp\{y_j\}\right)^{1+\sum_{i\in \{v\}\cup\pa(v)} \alpha_i}}\\
    &=C\times (\alpha_v \exp(y_v))^{\alpha_v}\left(\sum_{j\in\pa(v)} \alpha_j \exp\{y_j\}\right)^{-\alpha_v}\left(\frac{\sum_{j\in\pa(v)} \alpha_j \exp\{y_j\}}{\sum_{j\in \{v\}\cup\pa(v)} \alpha_j \exp\{y_j\}}\right)^{1+\sum_{i\in \{v\}\cup\pa(v)} \alpha_i}\\
    &= C\times   \left( 1 + \frac{1}{\sum_{j \in \pa(v)} \frac{\alpha_j}{\alpha_v} \exp\{y_j - y_v\}}\right)^{-1 -\sum_{i\in\{v\}\cup\pa(v) }\alpha_i} 
    \left( \sum_{j \in \pa(v)} \frac{\alpha_j}{\alpha_v} \exp\{y_j - y_v\} \right)^{-\alpha_v},
  \end{align*}
  for some constant $C>0$.  
   
   \subsection{Proof of Proposition~\ref{prop:extr_scm_is_dgm}}\label{proof:prop-extr_scm_is_dgm}
   \begin{proof}
     First note that the DAG $G_e$ is necessarily rooted.
     Suppose that $\Y$ is an extremal SCM on $G_e$. It is enough to check the local Markov property for $\Y^1$, which satisfies the structural causal model~\eqref{eSCM2} by assumption. For an arbitrary fixed $i\in V$, we observe that the only randomness in the conditional distribution $(Y^{1}_i  \mid \Y^1_{\pa(i)} = \y_{\pa(i)}) = \Psi_i(\y_{\pa(i)}, \varepsilon_i)$ comes from the independent noise variable $\varepsilon_i$, and therefore 
     $Y^1_i \perp_e \Y^1_{\nd(i)\setminus \pa(i)} \mid \Y^1_{\pa(i)}$.
   \end{proof}
   
   \subsection{Proof of Proposition~\ref{prop:ESCM}}\label{proof:prop-ESCM}
   \begin{proof}
     Denote by $f$ the density of $\X$, and by $f_{\varepsilon_v}$ the density of $\varepsilon_v$. 
     Thus,
     \[f(\x)=e^{-x_1}\prod_{v=2}^{|V|} f(x_v\mid \x_{\pa(v)})=e^{-x_1}\prod_{v=2}^{|V|} f_{\varepsilon_v}(x_v - \Phi_v(\x_{\pa(v)})). \]
     If we select $\Y^1\stackrel{d}{=}\X$, we have $\lambda(\y)=f(\y)$ and
     \begin{align*}
       \lambda(y_v\mid \y_{\pa(v)}) = f_{\varepsilon_v}(y_v - \Phi_v(\y_{\pa(v)})).
     \end{align*}
     We need to check that the sufficient conditions for valid conditional exponent measure densities are satisfied.
     First, we observe that $\lambda( \cdot \mid \y_{\pa(v)})$ is a probability density and satisfies
     \begin{align*}
       \lambda(y_v + t\mid \y_{\pa(v)}+ t\einsfun) = f_{\varepsilon_v}(y_v + t - \Phi_v(\y_{\pa(v)} + t\einsfun)) = f_{\varepsilon_v}(y_v - \Phi_v(\y_{\pa(v)})) = \lambda(y_v\mid \y_{\pa(v)}).
     \end{align*}
     Therefore, $\lambda$ satisfies homogeneity.
     Concerning the marginal constraints
     \begin{align*}
       \int_{y_v>0}\lambda(\y)d\y=1,
     \end{align*}
     we observe that this is given for $v=1$. Now, the transformation $\varphi(\y)=\y+(y_1-y_v)\mathbf{1}$ maps the set $\{y_1>0\}$ to $\{y_v>0\}$. By substitution we obtain  
     \begin{align*}
       \int_{y_v>0}\lambda(\y)d\y&= \int_{y_1>0}\lambda(\y+(y_1-y_v)\mathbf{1})d\y\\
       &= \int_{y_1>0} e^{y_v - y_1}\lambda(\y)d\y\\
       &=\EE\left(\exp(\Phi_v(\X_{\pa(v)}-X_1\mathbf{1})+\varepsilon_v)\right),
     \end{align*}
     such that by independence of the noise variables, the marginal constraints follow from the moment assumptions.
   \end{proof}
   
   \subsection{Proof of Proposition~\ref{prop:HR_SCM_construction}}
   \begin{lemma}\label{lem:LDL_psdLaplacian}
     Let $\nu_2^2,\ldots,\nu_{d}^2$ be positive scalars and $G_e$ a DAG rooted in 1 with some weighted adjacency matrix $B$ which satisfies $B^{\top}\einsfun=\mathbf{0}$.
     Then, the matrix
     \begin{align*}
       \Theta = \mathfrak{L}(G_e)^{\top} \diag(\nu_2^2,\ldots,\nu_d^2)^{-1}\mathfrak{L}(G_e).
     \end{align*}
     is a positive semidefinite signed Laplacian matrix. 
   \end{lemma}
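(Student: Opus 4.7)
The plan is straightforward: I would verify the three defining properties of a positive semidefinite signed Laplacian in turn, namely (i) symmetry, (ii) zero row sums (equivalently, $\Theta\mathbf{1}=\mathbf{0}$), and (iii) positive semidefiniteness. All three follow from the factored form $\Theta = \mathfrak{L}(G_e)^{\top} D^{-1}\mathfrak{L}(G_e)$ with $D = \diag(\nu_2^2,\ldots,\nu_d^2)$ positive definite.

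First, symmetry is immediate: since $D^{-1}$ is a diagonal (hence symmetric) matrix, $\Theta^{\top} = \mathfrak{L}(G_e)^{\top} (D^{-1})^{\top} \mathfrak{L}(G_e) = \Theta$. Second, positive semidefiniteness follows by writing, for any $x\in\mathbb{R}^d$,
\begin{equation*}
    x^{\top} \Theta x = (\mathfrak{L}(G_e) x)^{\top} D^{-1} (\mathfrak{L}(G_e) x) \geq 0,
\end{equation*}
since $D^{-1}$ is positive definite (all $\nu_v^2 > 0$).

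Third, I need $\Theta\mathbf{1}=\mathbf{0}$. This is precisely the place where the hypothesis on $B$ enters: as noted in the main text after equation \eqref{zerosum}, the constraint on the causal coefficients gives $\mathfrak{L}(G_e)\mathbf{1}=\mathbf{0}$. Therefore
\begin{equation*}
    \Theta \mathbf{1} = \mathfrak{L}(G_e)^{\top} D^{-1} \mathfrak{L}(G_e) \mathbf{1} = \mathfrak{L}(G_e)^{\top} D^{-1}\mathbf{0} = \mathbf{0},
\end{equation*}
so the row sums of $\Theta$ vanish. Combined with the symmetry, this is the defining property of a (signed) Laplacian matrix, where the adjective \emph{signed} simply permits off-diagonal entries of either sign (whereas an ordinary graph Laplacian requires non-positive off-diagonals). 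There is no further structural obstacle. The only subtle point, and the one worth double-checking when writing the formal proof, is the precise hypothesis on $B$ that guarantees $\mathfrak{L}(G_e)\mathbf{1}=\mathbf{0}$; once this reduction is made, the remainder is pure linear algebra.
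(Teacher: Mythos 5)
Your proposal is correct and follows essentially the same route as the paper's proof: the quadratic-form argument $\x^{\top}\Theta\x=(\mathfrak{L}(G_e)\x)^{\top}D^{-1}(\mathfrak{L}(G_e)\x)\ge 0$ for positive semidefiniteness and the identity $\mathfrak{L}(G_e)\mathbf{1}=\mathbf{0}$ for the signed-Laplacian property. Your extra remarks on symmetry and on pinning down exactly which hypothesis on $B$ yields $\mathfrak{L}(G_e)\mathbf{1}=\mathbf{0}$ are sensible but do not change the argument.
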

   \begin{proof}
     $\Theta$ is positive semidefinite, as for any $\x\in\RR^d\setminus\{\mathbf{0}\}$ it is $\x^{\top}\Theta\x=\tilde{\x}^{\top}\diag(\nu_2^2,\ldots,\nu_d^2)^{-1}\tilde{\x}\ge 0$, where $\tilde{\x}=\mathfrak{L}(G_e)\x$. With $\mathfrak{L}(G_e)\einsfun=\mathbf{0}$, this yields that $\Theta$ is a signed Laplacian matrix.
   \end{proof}
   
   \begin{proof}[Proof of Proposition~\ref{prop:HR_SCM_construction}]
     According to Lemma~\ref{lem:LDL_psdLaplacian}, the matrix $\Theta$ is a positive semidefinite signed Laplacian.
     The structural assignments $\Phi_v(\x_{\pa(v)})=\sum_{i \in {\pa(v)}} b_{iv} x_i$ are homogeneous.
     Thus, by Proposition~\ref{prop:ESCM}, $\X$ gives rise to an extremal SCM when the moment condition
     \begin{align*}
       \EE\{\exp(\varepsilon_v)\}=\frac{1}{\EE[\exp\{\Phi_v(\X_{\pa(v)}-X_1\mathbf{1})\}]}.
     \end{align*}
     is satisfied. Let $\X=(I-B^{\top})^{-1}\N^1$. As $G_e$ is rooted in 1 and $\mathfrak{L}(G_e)\bs 1 = \bs 0$, the first column of $(I-B^{\top})^{-1}$ is equal to $\einsfun$.  Let $A=(I-B^{\top}_{\setminus 1,\setminus 1})^{-1}$ and observe
     \begin{align*}
       \begin{pmatrix}
         1&\mathbf{0}^{\top}\\
         -B_{1,V}^{\top}& (I-B^{\top}_{\setminus 1,\setminus 1})\\
       \end{pmatrix}
       \begin{pmatrix}
         1&\mathbf{0}^{\top}\\
         \einsfun &A\\
       \end{pmatrix}=I,
     \end{align*}
     such that the matrix on the right is $(I-B^{\top})^{-1}$.
     Hence,
     \[\X=R\einsfun + (I-B^{\top})^{-1} (0,\varepsilon_2,\ldots,\varepsilon_d)^{\top}, \]
     and in particular $\X-X_1\einsfun = (0, A(\varepsilon_2,\ldots,\varepsilon_d)^{\top})$.
     The moment condition for $v\neq 1$ simplifies to 
     \begin{align*}
       1=\EE(\exp(\varepsilon_v+\phi_v(\X_{\pa(i)})-X_1))&=\EE(\exp(X_v-X_1))
     \end{align*}
     As $(\varepsilon_2,\ldots,\varepsilon_d)^{\top}$ is multivariate Gaussian with mean $-\frac{1}{2}(I-B_{\setminus 1, \setminus 1}^{\top})\Gamma_{\setminus 1 ,1}$ and covariance matrix $D_{\varepsilon}$, it follows that $A (\varepsilon_2,\ldots,\varepsilon_d)^{\top}$ is multivariate Gaussian with mean $-\frac{1}{2}\Gamma_{V,1}$ and covariance matrix $AD_{\varepsilon}A^{\top}$.
     From the shape of $\Theta$ it follows that $\Theta^{(1)}=\Theta_{\setminus 1,\setminus 1}= (I-B_{\setminus 1,\setminus 1})D_\varepsilon^{-1}(I-B^{\top}_{\setminus 1,\setminus 1}).$
     Thus, $\Sigma^{(1)}=(\Theta^{(1)})^{-1}=AD_\varepsilon A^{\top}$.
     With $\Gamma_{v,1}=\Sigma^{(1)}_{vv}$, the moment condition holds as
     \[\EE(\exp(A_{v,V\setminus \{1\}}(\varepsilon_2,\ldots,\varepsilon_d)^{\top}))=\exp\left(-\frac{1}{2}\Sigma^{(1)}_{vv}+\frac{1}{2}\Sigma^{(1)}_{vv}\right)=1.\]
     If we identify $\Y^1\stackrel{d}{=}\X$, the extremal function $\W^1_{\setminus 1}=A(\varepsilon_2,\ldots,\varepsilon_d)^{\top}$ is Gaussian. This gives rise to a \HR{} extremal SCM with respect to $G_e$.
   \end{proof}
   
  \subsection{Proof of Corollary~\ref{cor:skel_v-struct}}
  
  \begin{proof}
    Let $ G_e $ and $ H_e $ be two rooted DAGs that have the same skeleton graph and the same v-structures.
    Let $\mathbb P\in \mathcal M(G_e)$ and let $\Y\sim \mathbb P$ be the corresponding multivariate Pareto random vector. We need to show that $\Y$ also satisfies the extremal global Markov property with respect to $H_e$. To this end, note that by the definition of extremal conditional independence, we have that the auxiliary vector $ \Y^k $ is Markov to $ G_e $ in the usual sense for all $k\in V$. Since $G_e$ and $H_e$ 
    are Markov equivalent in the classical sense as they have the same skeleton and v-structures~\citep[Theorem~1.2.8]{Pearl2009}, we conclude that $\Y^k$ is Markov to $H_e$ for all $k\in V$. This implies that $\Y$ satisfies the extremal global Markov property with respect to $H_e$. 
  \end{proof}
   
  \subsection{Proof of Proposition~\ref{prop:algo-oracle}}
  \begin{proof}\label{proof:prop-algo-oracle}
    For each edge $(i, j) \in \E$, define the collection of sets separating $i$ and $j$ in $G_e$ as
    $$\mathcal{S}^e_{i, j} \coloneqq \{S \subseteq V \setminus\{i, j\} \colon i \indep_{G_e} j \mid S\} = \{S \subseteq V \setminus\{i, j\} \colon Y_i \perp_e Y_j \mid \Y_S\},$$
    where the second equality follows since the distribution of $\Y$ is faithful to $G_e$. Moreover, we have 
  \begin{align}\label{eq:oracle-implication}
    (i, j) \in \E_e  \quad \Leftrightarrow \quad \mathcal S_{i,j}^e = \emptyset.
  \end{align}
  Define the set of prunable edges as $P \coloneqq \{(i, j) \in \E \colon |pa_{G}(j)| \geq 2\}$ 
  and enumerate them as $P = \{e_1, \dots, e_{|P|}\}$. 
  We will prove by strong induction that for all $ \ell \in \{1, \dots, |P|\}$, the following holds: 
  \begin{align}\label{eq:no-mistake}
  \text{Algorithm~\ref{alg:ext_pruning_v2} removes edge } e_\ell  \iff e_\ell \notin \E_e.
  \end{align} 
  
  We start with the base case ($\ell = 1$). Let $e_1 \coloneqq (i, j) \in P$. 
  Define the DAG $G_{i, j} \coloneqq (V, \E \setminus \{(i, j)\})$, and the collection of sets separating $i$ and $j$ in $G_{i, j}$ as $\mathcal{S}^{(1)}_{i, j} \coloneqq \{S \subseteq V \setminus\{i, j\} \colon i \indep_{G_{i, j}} j \mid S\}$.
  First, we show that $\mathcal{S}^{(1)}_{i, j}$ is non-empty.
  Define $S \coloneqq pa_{G_{i, j}}(j) \setminus \{i\}$. Since $|pa_G(j)| \geq 2$, after removing $(i, j)$, $|pa_{G_{i, j}}(j)| \geq 1$.
  Any path from $i$ to $j$ in $G_{i, j}$ would have to go through $S$ or via some collider. Therefore, $S$ blocks all paths from $i$ to $j$, and thus $S \in \mathcal{S}^{(1)}_{i, j}$.
  
  We now consider two cases.
  
  \begin{enumerate}
    \item\, Suppose $(i, j) \notin \E_e$. Then, by the bi-implication in~\eqref{eq:oracle-implication}, for all $S \in \mathcal{S}^e_{i, j}$ it holds $Y_i \perp_{e} Y_j \mid \Y_S$.
    Moreover, since $\E_e \subseteq \E \setminus \{(i, j)\}$, it holds that $\mathcal{S}^{(1)}_{i, j} \subseteq \mathcal{S}^{e}_{i, j}$.
    Since $\varnothing \neq \mathcal{S}^{(1)}_{i, j} \subseteq \mathcal{S}^{e}_{i, j}$, then there exists a separating set $S \in \mathcal{S}^{(1)}_{i, j}$ such that $Y_i \perp_{e} Y_j \mid \Y_S$. 
    Therefore, the conditional statement on Line~\ref{alg:ci-statement} is fulfilled, and we remove the edge $e_1 \in P$. 
    Define the DAG $G_1 \coloneqq (V, \E_1)$ with $\E_1 = \E  \setminus \{e_1\}$ and note that $G_1$ is still rooted since $e_1 \in P$. 
  
    \item\, Suppose $(i, j) \in \E_e$. 
    Then, by the bi-implication in~\eqref{eq:oracle-implication}, for all $S \in V \setminus \{i, j\}$ it holds $Y_i \not\perp_{e} Y_j \mid \Y_S$. 
    Therefore, the conditional statement on Line~\ref{alg:ci-statement} is not fulfilled, and we do not remove $e_1 \in P$.
    Define the DAG $G_1 \coloneqq (V, \E_1)$ with $\E_1 = E$.
  \end{enumerate}
  Putting the cases (1) and~(2) together, we have shown that
  \begin{align*}
    \text{Algorithm~\ref{alg:ext_pruning_v2} removes edge } e_1  \iff e_1\notin \E_e.
  \end{align*}
  
  We now show the {induction step $(\ell = m + 1)$}. Fix $1 \leq m \leq |P|$
  and suppose that the following implication holds
  \begin{align}\label{eq:no-mistake-induction}
    \forall \ell \in \{1, \dots, m\}\ 
    (\text{Algorithm~\ref{alg:ext_pruning_v2} removes edge } e_\ell  \iff e_\ell \notin \E_e\big).
  \end{align}
  By the induction step~\eqref{eq:no-mistake-induction}, we have that the DAG $G_m = (V, \E_m)$ satisfies  $\E_e \subseteq \E_m \subseteq E$. 
  Let $e_{m + 1} \coloneqq (i, j) \in P$. 
  First, note that $j$ is not the unique root node of $G$, and thus neither of $G_e$.
  
  Consider now two cases.
  
  \begin{enumerate}   
    \item\, If $|pa_{G_m}(j)| < 2$, 
    Algorithm~\ref{alg:ext_pruning_v2} skips $e_{m+1}$
    (Line~\ref{alg:line-3}), so $e_{m + 1}$ remains in $\E_m$.
    Since $E_e \subseteq E_m$, and $j$ must have at least one parent in $G_e$ (since $j$ is not the unique root node of $G_e$), $e_{m + 1} \in \E_e$. 
    Therefore, we have shown the implication $
      e_{m + 1} \notin \E_e \implies \text{Algorithm~\ref{alg:ext_pruning_v2} removes edge } e_{m+1}$.
    The converse holds vacuously because of $|pa_{G_m}(j)| = 1$ and Line~\ref{alg:line-3} of Algorithm~\ref{alg:ext_pruning_v2}. Therefore we have shown that 
    \begin{align*}
      e_{m + 1} \notin \E_e \iff \text{Algorithm~\ref{alg:ext_pruning_v2} removes edge } e_{m+1}.
    \end{align*} 
  
    \item\, If $|pa_{G_m}(j)| \geq 2$, define the DAG $G_{i, j} \coloneqq (V, \E_m \setminus \{(i, j)\})$ and the collection of sets separating $i$ and $j$ in $G_{i, j}$ as $\mathcal{S}^{(m+1)}_{i, j} \coloneqq \{S \subseteq V \setminus\{i, j\} \colon i \indep_{G_{i, j}} j \mid S\}$. 
    By defining $S \coloneqq pa_{G_{i, j}}(j) \setminus \{i\}$, and since $|pa_{G_m}(j)| \geq 2$, using the same argument as in the base case, the collection of separating sets $\mathcal{S}^{(m+1)}_{i, j} \neq \varnothing$.
  
    We now consider two cases.
  
    \begin{enumerate}
      \item\, Suppose $(i, j) \notin \E_e$. Then, $\E_e \subseteq \E_m \setminus \{(i, j)\}$ and therefore $\mathcal{S}^{(m+1)}_{i, j} \subseteq \mathcal{S}^{e}_{i, j}$. By the same argument as in the base case, the conditional statement on Line~\ref{alg:ci-statement} is fulfilled, and we remove the edge $e_{m+1} \in P$. 
      \item\, Suppose $(i, j) \in \E_e$. Then, by the bi-implication in~\eqref{eq:oracle-implication}, the conditional statement on Line~\ref{alg:ci-statement} is not fulfilled, and we do not remove $e_{m+1} \in P$. 
    \end{enumerate}
  
    Putting the cases (a) and~(b) together, we have shown that
    
    \begin{align*}
    \text{Algorithm~\ref{alg:ext_pruning_v2} removes edge } e_{m + 1}  \iff e_{m + 1}\notin \E_e.
    \end{align*}
  \end{enumerate}
  \end{proof}
   
  \subsection{Proof of Proposition~\ref{prop:algo-consistency}}
  \begin{proof}\label{proof:prop-algo-consistency}
    Some ideas of this proof come from~\citet{kalisch2007estimating}.
    For all $i, j \in V$ and $S \subseteq V \setminus \{i, j\}$, define the population and sample Fisher z-transforms
    \begin{align}
      z_{ij\mid S} \coloneqq \frac{1}{2} \log\left(\frac{1 + \rho_{ij \mid S}}{1 - \rho_{ij \mid S}}\right),\quad
      Z_{ij\mid S} \coloneqq \ \frac{1}{2} \log\left(\frac{1 + \hat\rho_{ij \mid S}}{1 - \hat\rho_{ij \mid S}}\right).
    \end{align}
    For all prunable edges $(i, j) \in P$ and $S \subseteq V \setminus\{i, j\}$, define the type I and type II errors
    \begin{align*}
      M_{ij\mid S}^{I} \coloneqq&\  \left\{ \sqrt{n - |S| - 3}\ |Z_{ij\mid S}| > \Phi^{-1}(1 - \alpha_n / 2) \right\}\quad \text{if}\quad z_{ij\mid S} = 0,\\
      M_{ij\mid S}^{II} \coloneqq&\ \left\{ \sqrt{n - |S| - 3}\ |Z_{ij\mid S}| \leq \Phi^{-1}(1 - \alpha_n / 2)  \right\}\quad  \text{if} \quad z_{ij\mid S} \neq 0.
    \end{align*}
    We have 
    \begin{align*}
      \left(\forall (i, j) \in P\ \forall S \subseteq V \setminus \{i, j\} \colon (M^{I}_{ij\mid S} \cup M^{II}_{ij \mid S})^c\right) \implies \hat{G}_e(\alpha_n) = G_e.
    \end{align*}
    Therefore,
    \begin{align}\label{eq:algo-mistake-union-bound}
    \begin{split}
      \PP\left(\hat{G}_e(\alpha_n) \neq G_e\right) 
      \leq &\
      \PP\left(\bigcup_{(i, j) \in P} \bigcup_{S \subseteq V \setminus \{i, j\}} (M_{ij\mid S}^{I} \cup M_{ij\mid S}^{II})
      \right) \\
        \leq &\
        2^{d} |P|
        \left(
        \max_{\substack{(i, j) \in P,\\ S \in V \setminus \{i, j\}}}
          \PP(M_{ij\mid S}^{I})
        +
        \max_{\substack{(i, j) \in P,\\ S \in V \setminus \{i, j\}}}
          \PP(M_{ij\mid S}^{II})
        \right).
    \end{split}
    \end{align}
    Let  
    \begin{align*}
      m^* \coloneqq \min\left\{|z_{ij \mid S}| \colon (i, j) \in P, S \subseteq V \setminus \{i, j\}, z_{ij \mid S} \neq 0\right\} > 0,
    \end{align*}  
    and
    choose the sequence $\alpha_n = 2(1 - \Phi(n^{1/2} m^{*}/2)) \to 0$ as $n \to \infty$.
    Consider now the type I error event $M_{ij\mid S}^{I}$. If $z_{ij\mid S} = 0$, we have
    \begin{align*}
      \PP\left( M_{ij \mid S}^{I}  \right)
      =&\
      \PP\left( 
        \left| Z_{ij\mid S}\right| > \sqrt{\frac{n}{n - |S| - 3}} \frac{m^{*}}{2} 
        \right)
      \leq
      \PP\left( 
        \left| Z_{ij\mid S} - z_{ij\mid S}\right| > \frac{m^{*}}{2} 
      \right).
    \end{align*}
    By the continuous mapping and the assumption of consistency of $\hat\rho_{ij \mid S}$, it follows that $Z_{ij\mid S}$ is consistent for $Z_{ij\mid S}$. Thus, $\PP(M_{ij \mid S}^{I}) \to 0$ as $n \to \infty$. 
    Consider now the type II error event $M_{ij \mid S}^{II}$. 
    There exists an $n_0 \geq 1$ such that for all $n \geq n_0$ it holds that
    \begin{align*}
      \sqrt{\frac{n}{n - |S| - 3}} < \frac{3}{2},
      \text{ and thus }
      \sqrt{\frac{n}{n - |S| - 3}} \frac{m^{*}}{2} < \frac{3}{4} m^{*}.
    \end{align*}
    If $z_{ij\mid S} \neq 0$ we have $\left|z_{ij\mid S} \right| \geq m^{*}$ by definition.
    Therefore, for all $n \geq n_0$ it holds that
    \begin{align*}
      M_{ij \mid S}^{II}
      =&\ 
      \left\{ 
        \left| Z_{ij\mid S}\right| \leq \sqrt{\frac{n}{n - |S| - 3}} \frac{m^{*}}{2} 
      \right\}\\
      \subseteq&\
      \left\{ 
        \left| Z_{ij\mid S}\right| \leq \frac{3}{4}m^{*} 
      \right\}
      \subseteq
      \left\{ 
        \left| Z_{ij\mid S} - z_{ij \mid S} \right| > \frac{m^{*}}{4}
      \right\}.
    \end{align*}
    Since $m^{*} / 4 > 0$ and the consistency of $ Z_{ij\mid S}$, we have that $\PP(M_{ij \mid S}^{II}) \to 0$ as $n \to \infty$.
    Putting together~\eqref{eq:algo-mistake-union-bound} with the fact that, as $n \to \infty$, $\PP(M_{ij \mid S}^{I}) \to 0$ and $\PP(M_{ij \mid S}^{II}) \to 0$ concludes the proof.
  \end{proof}

  \section{Additional results and examples for H\"usler--Reiss distributions}
   
  Let $\Y$ be a H\"usler--Reiss Pareto distribution with variogram matrix $\Gamma$ and precision matrix $\Theta$. In this section we collect some additional results that will be used throughout the paper.
  Recall from Example~\ref{ex:CI_test} that the auxiliary vector of a  \HR{} distribution admit the stochastic representation
  $
  \Y^k\stackrel{d}{=}R\mathbf{1}+\W^k
  $
  where $R$ is a standard exponential random variable and $\W^k$ is a degenerate normal vector with $W^k_k=0 $ almost surely, and $\W^k_{\setminus k} \sim N(-\Gamma_{\setminus k, k}/2, \Sigma^{(k)})$, where $\Sigma^{(k)} \coloneqq \frac{1}{2} \left(\Gamma_{uk} + \Gamma_{vk} - \Gamma_{uv}\right)_{u, v \in V \setminus k}$.
  Then, the nodes $i, j \in V$ are extremal conditionally independent given some collection of nodes $S \subseteq V \setminus \{i, j\}$, i.e., $Y_i\perp_{e}Y_j \mid \Y_S$, if and only if $W_i^k\ci W_j^k\mid \W_S^k$ for some $k\in S$ \citep{EH2020}.
  
  The marginal \HR{} exponent measure density $\lambda(\y_A)$ for some $A\subseteq V$ is again \HR{} with parameter matrix $\Gamma_{A,A}$.
  We define $\theta(A)$, $\mathbf{p}(A)$ and $\sigma^2(A)$ as in identity \eqref{eq:Fiedler-Bapat_margin}.
  
  \begin{proof}[Proof of Example~\ref{ex:HR_cond_prelim}]\label{prf:ex:HR_cond_prelim}
    Let $e_d=\frac{1}{d}\mathbf{1}$.
    The exponent measure density in \cite{HES2022} writes as
    \begin{align*}
      \lambda(\y)= c_\Theta \exp\left\{-\frac{1}{2}\y^{\top}\Theta\y-\y^{\top}(e_d-r_\Theta)\right\},
    \end{align*}
    where $r_\Theta=-\frac{1}{2}\Theta\Gamma e_d$.
    \citet[Corollary 3.7]{devriendt2022a} derives $\Theta\Gamma=-2I+2\mathbf{p}\mathbf{1}^{\top}$ from the Fiedler--Bapat identity, where $I$ denotes the identity matrix. 
    Thus, 
    \[r_\Theta=e_d-\mathbf{p},\]
    which yields the alternative representation with a modified constant $c_\Gamma =c_\Theta\exp(\sigma^2/2)$.
  \end{proof}
  
  \begin{lemma}\label{lem:condHR}
    Let $\Y$ be \HR{} multivariate Pareto distribution with variogram matrix $\Gamma$. We then have for any disjoint $A, C \subset V$ that
    \begin{align*}
      \lambda(\y_A\mid \y_{C}) &=  \frac{1}{\sqrt{\det(2\pi\Sigma_*)}}\exp\left(-\frac{1}{2} (\y_A-\mu^*)^{\top}\Sigma_*^{-1}(\y_A-\mu^*) \right),
    \end{align*}
    where 
    \begin{align*}
      \Sigma_*&=-\frac{1}{2}\Gamma_{A,A} -\begin{pmatrix}
        -\frac{1}{2}\Gamma_{A,C}& \einsfun\\
      \end{pmatrix}
      \begin{pmatrix}
      -\frac{1}{2}\Gamma_{C,C} & \mathbf{1}\\
      \mathbf{1}^{\top}&0\\
      \end{pmatrix}^{-1}
      \begin{pmatrix}
        -\frac{1}{2}\Gamma_{C,A}\\
        \einsfun^{\top}\\
      \end{pmatrix}
    \end{align*} 
    and
    \begin{align*}
      \mu_*=\begin{pmatrix}
        -\frac{1}{2}\Gamma_{A,C}, \einsfun\\
      \end{pmatrix}\begin{pmatrix}
        -\frac{1}{2}\Gamma_{C,C} & \mathbf{1}\\
        \mathbf{1}^{\top}&0\\
      \end{pmatrix}^{-1}\begin{pmatrix}
        \y_{C}\\
        1\\
      \end{pmatrix}.
    \end{align*}
    Furthermore, it holds that $\Sigma_*^{-1}=\left(M^{-1}\right)_{A,A}=\theta(A\cup C)_{A,A}$.
  \end{lemma}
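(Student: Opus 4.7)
The plan is to compute $\lambda(\y_A\mid \y_C) = \lambda(\y_{A\cup C})/\lambda(\y_C)$ directly from the explicit H\"usler--Reiss exponent measure density of Example~\ref{ex:HR_cond_prelim}, using the fact (also recalled in Example~\ref{ex:HR_cond_prelim}) that H\"usler--Reiss margins are again H\"usler--Reiss with parameter matrix $\Gamma_{I,I}$ for $I\subseteq V$. Let
\[
\tilde M_I \;=\; \begin{pmatrix} -\tfrac12 \Gamma_{I,I} & \mathbf 1 \\ \mathbf 1^\top & 0 \end{pmatrix},
\]
so that $\lambda(\y_I) = c_{\Gamma_{I,I}} \exp\bigl(-\tfrac12(\y_I^\top,1)\,\tilde M_I^{-1}\,(\y_I^\top,1)^\top\bigr)$.

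The key computation is a block decomposition of $\tilde M_{A\cup C}$ with the $A$-block isolated and the augmented $C$-block (together with the row and column of ones) grouped into a single block, which is exactly $\tilde M_C$. Writing
\[
\tilde M_{A\cup C} \;=\; \begin{pmatrix} -\tfrac12\Gamma_{A,A} & B \\ B^\top & \tilde M_C \end{pmatrix}, \qquad B = \begin{pmatrix} -\tfrac12\Gamma_{A,C} & \mathbf 1_A \end{pmatrix},
\]
the standard Schur complement identity gives
\[
\bigl(\tilde M_{A\cup C}^{-1}\bigr)_{A,A} \;=\; \bigl(-\tfrac12\Gamma_{A,A} - B\,\tilde M_C^{-1}\,B^\top\bigr)^{-1} \;=\; \Sigma_*^{-1},
\]
which is the announced expression for $\Sigma_*$, and by the Fiedler--Bapat identity~\eqref{eq:Fiedler-Bapat_margin} applied to the index set $A\cup C$ it equals $\theta(A\cup C)_{A,A}$. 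From the block LDL decomposition I then read off the quadratic form: for any augmented vector $\z = (\y_A^\top,\y_C^\top,1)^\top$,
\[
\tfrac12\, \z^\top \tilde M_{A\cup C}^{-1} \z \;=\; \tfrac12 (\y_A - \mu_*)^\top \Sigma_*^{-1} (\y_A - \mu_*) \;+\; \tfrac12 (\y_C^\top,1)\,\tilde M_C^{-1}\,(\y_C^\top,1)^\top,
\]
where $\mu_* = -\Sigma_* \,B\,\tilde M_C^{-1}(\y_C^\top,1)^\top$. A brief simplification using $B = (-\tfrac12\Gamma_{A,C},\mathbf 1_A)$ yields the stated formula for $\mu_*$.

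Dividing $\lambda(\y_{A\cup C})$ by $\lambda(\y_C)$, the $\tilde M_C$-quadratic term cancels exactly, so all that remains is the Gaussian kernel $\exp\bigl(-\tfrac12(\y_A-\mu_*)^\top\Sigma_*^{-1}(\y_A-\mu_*)\bigr)$ times the constant $c_{\Gamma_{A\cup C}}/c_{\Gamma_{C}}$. Since $\lambda(\,\cdot\mid \y_C)$ must be a probability density in $\y_A$ (this is the observation made just after Proposition~\ref{prop:ext-markov-properties}, and is anyway guaranteed by integrating the joint density), this constant is forced to equal $1/\sqrt{\det(2\pi\Sigma_*)}$, giving the claim.

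The main obstacle is simply bookkeeping: the augmented row and column of ones in $\tilde M$ are inhomogeneous with the $\Gamma$-block, so one must be careful when applying Schur complement to keep track of which $\mathbf 1$ belongs to the $A$- or $C$-block, and to verify that the cross term $B\,\tilde M_C^{-1}(\y_C^\top,1)^\top$ indeed simplifies to the stated expression for $\mu_*$. Once this is set up, the completion-of-the-square argument and the matching of the normalizing constant are routine.
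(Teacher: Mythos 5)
Your proposal follows essentially the same route as the paper's proof: write $\lambda(\y_A\mid\y_C)$ as the ratio of the two \HR{} exponent measure densities, block-decompose the augmented matrix with the $A$-block isolated against the augmented $C$-block, complete the square via the Schur complement, and identify $\Sigma_*^{-1}=\theta(A\cup C)_{A,A}$ through the Fiedler--Bapat identity, so the arguments match. One small slip: your intermediate expression $\mu_*=-\Sigma_*\,B\,\tilde M_C^{-1}(\y_C^\top,1)^\top$ should read $\mu_*=B\,\tilde M_C^{-1}(\y_C^\top,1)^\top$ (the factor $-\Sigma_*$ belongs with the off-diagonal block of $\tilde M_{A\cup C}^{-1}$, not with $B$), but this does not affect the stated final formula, and your explicit observation that the constant is pinned down because the conditional density integrates to one is a slightly more careful closing step than the paper's.
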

  
  \begin{proof}
    The conditional probability density of this distribution is
    \begin{align*}
      \lambda(\y_A\mid \y_{C})&=\frac{\lambda(\y_A,\y_{C})}{\lambda(\y_{C})}\\
      &= \frac{c_{\Gamma_{A\cup C,A\cup C}}}{c_{\Gamma_{C,C}}}\frac{\exp\left(-\frac{1}{2}\begin{pmatrix}
          \y_A,\y_{C}^{\top},1\\
        \end{pmatrix}
        \begin{pmatrix}
          -\frac{1}{2}\Gamma_{A\cup C,A\cup C} & \mathbf{1}\\
          \mathbf{1}&0\\
        \end{pmatrix}^{-1}\begin{pmatrix}
          \y_A\\
          \y_{C}\\
          1\\
        \end{pmatrix}\right)}{\exp\left(-\frac{1}{2}\begin{pmatrix}
          \y_{C}^{\top},1\\
        \end{pmatrix}
        \begin{pmatrix}
          -\frac{1}{2}\Gamma_{C, C} & \mathbf{1}\\
          \mathbf{1}&0\\
        \end{pmatrix}^{-1}\begin{pmatrix}
          \y_{C}\\
          1\\
        \end{pmatrix}\right)}.
    \end{align*}
    Let $M_{11}=-\frac{1}{2}\Gamma_{A,A}$, $M_{21}=\begin{psmallmatrix}
      -\frac{1}{2}\Gamma_{C,A}\\
      1\\
    \end{psmallmatrix}$ and $M_{2,2}=\begin{psmallmatrix}
      -\frac{1}{2}\Gamma_{C,C} & \mathbf{1}\\
      \mathbf{1}&0\\
    \end{psmallmatrix}$ and let $M=	\begin{psmallmatrix}
      M_{11}&M_{21}^{\top}\\
      M_{21}&M_{22}\\
    \end{psmallmatrix}$.
    Note that $M_{22}$ is invertible.
    We call $M/M_{22}:=M_{1}-M_{21}^{\top}M_{22}^{-1}M_{21}$ the Schur complement of the block $C$.
    Then the quadratic form in the exponent in the numerator can be rewritten with standard Schur complement arguments as
    \begin{align*}
      &\begin{pmatrix}
        \y_A,\y_{C}^{\top},1\\
      \end{pmatrix}
      M^{-1}\begin{pmatrix}
        \y_A\\
        \y_{C}\\
        1\\
      \end{pmatrix}\\&=\begin{pmatrix}
        \y_A,\y_{C}^{\top},1\\
      \end{pmatrix}
      \begin{pmatrix}
        (M/M_{22})^{-1}&-(M/M_{22})^{-1}M_{21}^{\top}M_{22}^{-1}\\
        -M_{22}^{-1}M_{21}(M/M_{22})^{-1} & M_{22}^{-1}+M_{22}^{-1}M_{21}(M/M_{22})^{-1}M_{21}^{\top}M_{22}^{-1}\\
      \end{pmatrix}
      \begin{pmatrix}
        \y_A\\
        \y_{C}\\
        1\\
      \end{pmatrix}\\
      &=\y_A^{\top}(M/M_{22})^{-1} \y_A-\y_A^{\top}(M/M_{22})^{-1}M_{21}^{\top}M_{22}^{-1}	\begin{pmatrix}
        \y_{C}\\
        1\\
      \end{pmatrix}-\begin{pmatrix}
        \y_{C}^{\top}&1\\
      \end{pmatrix}M_{22}^{-1}M_{21}(M/M_{22})^{-1}\y_A\\ &\qquad+\begin{pmatrix}
        \y_{C}^{\top}&1\\
      \end{pmatrix}
      (M_{22}^{-1}+M_{22}^{-1}M_{21}(M/M_{22})^{-1}M_{21}^{\top}M_{22}^{-1})
      \begin{pmatrix}
        \y_{C}\\
        1\\
      \end{pmatrix}\\
      &=\left(\y_A-M_{21}^{\top}M_{22}^{-1}	\begin{pmatrix}
        \y_{C}\\
        1\\
      \end{pmatrix}\right)^{\top}(M/M_{22})^{-1}\left(\y_A-M_{21}^{\top}M_{22}^{-1}	\begin{pmatrix}
        \y_{C}\\
        1\\
      \end{pmatrix}\right)+
      \begin{pmatrix}
      \y_{C}^{\top}&1\\
      \end{pmatrix}
      M_{22}^{-1}
      \begin{pmatrix}
      \y_{C}\\
      1\\
      \end{pmatrix}\\
    \end{align*} 
    If we plug this into the conditional density, we obtain	
    \begin{align*}
      \lambda(\y_A\mid \y_{C})&= \frac{c_{\Gamma_{A\cup C,A\cup C}}}{c_{\Gamma_{C,C}}}\exp\left(-\frac{1}{2} \left(\y_A-M_{21}^{\top}M_{22}^{-1}	\begin{pmatrix}
        \y_{C}\\
        1\\
      \end{pmatrix}\right)^{\top}(M/M_{22})^{-1}\left(\y_A-M_{21}^{\top}M_{22}^{-1}	\begin{pmatrix}
        \y_{C}\\
        1\\
      \end{pmatrix}\right)\right)
    \end{align*}
    This is an $|A|$-variate Gaussian probability density with covariance matrix $\Sigma_*=(M/M_{22}$ and mean $\mu_*=M_{21}^{\top}M_{22}^{-1}	\begin{pmatrix}
      \y_{C}\\
      1\\
    \end{pmatrix}$.
    From the Fiedler--Bapat identity \eqref{eq:Fiedler-Bapat_margin} we further obtain that $(M/M_{22})^{-1}=\left(M^{-1}\right)_{A,A}=\theta(A\cup C)_{A,A}$.
  \end{proof}
   
  \begin{proof}[Proof of Example~\ref{ex:HR_cond}]\label{prf:ex:HR_cond}
    The example follows from Lemma~\ref{lem:condHR} with $A=\{v\}$ and $C=\pa(v)$.
  \end{proof}
  
  \begin{proof}[Proof of Example~\ref{ex:HR_struct}]\label{prf:ex:HR_struct}
    From Example~\ref{ex:HR_cond} we have
    \begin{align*}
      \mu^*&=\begin{pmatrix}
        -\frac{1}{2}\Gamma_{v,\pa(v)}, 1\\
      \end{pmatrix}\begin{pmatrix}
        -\frac{1}{2}\Gamma_{\pa(v),\pa(v)} & \mathbf{1}\\
        \mathbf{1}&0\\
      \end{pmatrix}^{-1}\begin{pmatrix}
        \X_{\pa(v)}\\
        1\\
      \end{pmatrix}\\
      &=\begin{pmatrix}
        -\frac{1}{2}\Gamma_{v,\pa(v)}, 1\\
      \end{pmatrix}\begin{pmatrix}
        \theta(\pa(v))&\mathbf{p}(\pa(v))\\
        \mathbf{p}^{\top}(\pa(v))&\sigma^2(\pa(v))\\
      \end{pmatrix}\begin{pmatrix}
        \X_{\pa(v)}\\
        1\\
      \end{pmatrix}\\
      &=\begin{pmatrix}
        -\frac{1}{2}\Gamma_{v,\pa(v)}\theta(\pa(v))+\mathbf{p}^{\top}(\pa(v)),-\frac{1}{2}\Gamma_{v,\pa(v)}\mathbf{p}(\pa(v))+\sigma^2(\pa(v))\\
      \end{pmatrix}\begin{pmatrix}
        \X_{\pa(v)}\\
        1\\
      \end{pmatrix}.
    \end{align*}
  \end{proof}
   
  \begin{lemma}\label{lem:precision_CI}
    Let $\Y$ be a \HR{} Pareto distribution with variogram matrix $\Gamma$ and precision matrix $\Theta$. The following statements are equivalent to $ Y_i\perp_eY_j \mid Y_C $: 
    \begin{enumerate}
      \item $\theta(\{i,j\}\cup C)_{ij}=0$,
      \item $-\frac{1}{2}\Gamma_{i,j} -\begin{pmatrix}
        -\frac{1}{2}\Gamma_{i,C}& 1\\
      \end{pmatrix}
      \begin{pmatrix}
        -\frac{1}{2}\Gamma_{C,C} & \mathbf{1}\\
        \mathbf{1}^{\top}&0\\
      \end{pmatrix}^{-1}
      \begin{pmatrix}
        -\frac{1}{2}\Gamma_{C,j}\\
        1\\
      \end{pmatrix}=0$,
      \item $\rho_{ij \mid C} \coloneqq -\theta(\{i,j\}\cup C)_{ij} / \sqrt{\theta(\{i,j\}\cup C)_{ii} \theta(\{i,j\}\cup C)_{jj}}=0,$
      \item 	$\det \left(\Theta_{\setminus (\{i\}\cup C),\setminus (\{j\}\cup C)}\right)=0. $
    \end{enumerate}
  \end{lemma}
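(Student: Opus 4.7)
The statement asserts that a given extremal conditional independence relation admits four equivalent algebraic characterizations. The equivalence of $Y_i \perpe Y_j \mid \Y_C$ with condition (1) is already recorded as~\eqref{eq:CIviaGamma} in Example~\ref{ex:CI_test} (following~\cite{HES2022}), so my plan is only to establish the chain (1)$\Leftrightarrow$(2), (1)$\Leftrightarrow$(3), and (1)$\Leftrightarrow$(4) among the four listed conditions.

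For (1)$\Leftrightarrow$(2), I would apply the Fiedler--Bapat identity~\eqref{eq:Fiedler-Bapat_margin} to the marginal on $P := \{i,j\} \cup C$ and read off the $(i,j)$-entry of $\theta(P)$ by block-inverting the bordered matrix. Since $\theta(P)$ is the top-left block of the inverse of the $(|P|+1)\times(|P|+1)$ bordered matrix, its $(i,j)$-entry is given by the standard Schur-complement formula, which yields precisely the expression in condition (2). This is essentially the same computation already used in the proof of Lemma~\ref{lem:condHR} for the conditional mean and covariance of a H\"usler--Reiss marginal.

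For (1)$\Leftrightarrow$(3), I would first observe that the diagonal entries $\theta(P)_{ii}$ and $\theta(P)_{jj}$ are strictly positive: indeed, $\theta(P)$ is a positive semidefinite signed Laplacian with kernel $\mathrm{span}(\mathbf{1}_{|P|})$, and since neither $e_i$ nor $e_j$ lies in this one-dimensional kernel (as $|P|\ge 2$), the quadratic form $e_i^\top \theta(P) e_i = \theta(P)_{ii}$ is strictly positive, and similarly for $j$. Given this, $\rho_{ij\mid C} = -\theta(P)_{ij}/\sqrt{\theta(P)_{ii}\,\theta(P)_{jj}}$ vanishes if and only if $\theta(P)_{ij}$ does.

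The nontrivial equivalence is (1)$\Leftrightarrow$(4). The plan is first to establish the marginal-precision identity
\[ \theta(P) \;=\; \Theta_{PP} - \Theta_{P,Q}\,\Theta_{QQ}^{-1}\,\Theta_{Q,P}, \qquad Q := V\setminus P, \]
where $\Theta_{QQ}$ is positive definite because $\Theta$ is PSD with kernel $\mathrm{span}(\mathbf{1})$ and $Q\subsetneq V$. Then, applying the Jacobi/Cramer identity for an individual entry of a Schur complement, I would obtain
\[ \theta(P)_{ij} \;=\; \pm \,\frac{\det\bigl(\Theta_{\setminus(\{i\}\cup C),\,\setminus(\{j\}\cup C)}\bigr)}{\det(\Theta_{QQ})}, \]
where the sign comes from reordering the adjoined row $i$ and column $j$, and where the symmetry $\Theta_{AB} = \Theta_{BA}^\top$ shows that the two natural row/column-deletion orientations of this minor yield the same determinant. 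Since $\det(\Theta_{QQ})>0$, condition (4) is equivalent to $\theta(P)_{ij}=0$. The main obstacle will be establishing the marginal-precision Schur-complement identity itself, since $\Theta$ is singular and the standard block-inversion on $\Theta$ does not apply directly; I would derive it by working with the nonsingular bordered matrix from~\eqref{eq:Fiedler-Bapat_margin} and inverting in two stages, first eliminating the $Q$-coordinates and then recovering $\theta(P)$ as the remaining top-left block.
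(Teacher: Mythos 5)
Your proposal is correct in substance, and for conditions (1), (2) and (3) it follows essentially the paper's route: the paper also reads the equivalence with extremal conditional independence off the marginal Fiedler--Bapat identity and the known \HR{} result, identifies the expression in (2) as the off-diagonal entry of the conditional covariance $\Sigma_*$ from Lemma~\ref{lem:condHR}, and obtains (3) by inverting the $2\times 2$ block $\Sigma_*^{-1}=\theta(\{i,j\}\cup C)_{\{i,j\},\{i,j\}}$. One small imprecision in your step for (2): the expression in condition (2) is the $(i,j)$-entry of the Schur complement $\Sigma_*$ itself, not the $(i,j)$-entry of $\theta(P)$, which is the corresponding entry of $\Sigma_*^{-1}$; since $\theta(P)_{ij}=-(\Sigma_*)_{ij}/\det(\Sigma_*)$ with $\det(\Sigma_*)>0$, the two vanish simultaneously, so the equivalence survives, but the claimed identity as stated is not literally true. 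Your explicit positivity argument for the diagonal entries in (3) is actually more careful than what the paper writes.

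Where you genuinely diverge is the equivalence with (4). The paper stays probabilistic: for $k\in C$ it computes $\Cov(\W^k_{\setminus C}\mid \W^k_{C\setminus k})=(\Theta^{(k)}_{\setminus C})^{-1}$ and invokes the cofactor formula for entries of an inverse, then observes that the relevant minor of $\Theta^{(k)}$ equals the minor of $\Theta$ because $k\in C$. You instead propose the purely linear-algebraic route: establish the marginal-precision identity $\theta(P)=\Theta_{PP}-\Theta_{PQ}\Theta_{QQ}^{-1}\Theta_{QP}$ by two-stage inversion of the nonsingular bordered matrix (which works, since $\Theta_{QQ}$ is positive definite for $Q\subsetneq V$), and then apply the quotient formula $S_{ij}=\det(\Theta_{\{i\}\cup Q,\{j\}\cup Q})/\det(\Theta_{QQ})$ for entries of a Schur complement, with the sign ambiguity you flag being irrelevant for the vanishing statement. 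Both arguments are valid; the paper's leverages machinery already in place (the Gaussianity of the extremal functions and the definition of $\perp_e$ through them), while yours requires proving the marginal Schur-complement identity but then gives the determinantal criterion (4) directly from $\Theta$ without passing through any fixed $k\in C$, and as a by-product yields the useful identity $\theta(P)_{ij}\propto\det(\Theta_{\setminus(\{i\}\cup C),\setminus(\{j\}\cup C)})$ rather than only its zero set.
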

  \begin{proof}
  
    \begin{enumerate}
      \item\, The signed Laplacian matrix $\theta(\{i,j\}\cup C)$ is the \HR{} precision matrix corresponding to the variogram $\Gamma_{\{i,j\}\cup C,\{i,j\}\cup C}$, see \eqref{eq:Fiedler-Bapat_margin}. As this is the variogram of the $\{i,j\}\cup C$-th \HR{} marginal, we obtain the result from \cite{HES2022}.
      \item\, From Lemma~\ref{lem:condHR} it follows that $\lambda(y_i,y_j\mid \y_C)$ is the density of a bivariate Gaussian with conditional covariance matrix 
      \begin{align}
        \Sigma_*&=-\frac{1}{2}\Gamma_{\{i,j\},\{i,j\}} -\begin{pmatrix}
          -\frac{1}{2}\Gamma_{\{i,j\},C}& \einsfun\\
        \end{pmatrix}
        \begin{pmatrix}
          -\frac{1}{2}\Gamma_{C,C} & \mathbf{1}\\
          \mathbf{1}^{\top}&0\\
        \end{pmatrix}^{-1}
        \begin{pmatrix}
          -\frac{1}{2}\Gamma_{C,\{i,j\}}\\
          \einsfun^{\top}\\
        \end{pmatrix}.\label{eq:cond_cov}
      \end{align}
      The conditional density $\lambda(y_i,y_j|\y_C)$ therefore factorizes into $\lambda(y_i|\y_C)\lambda(y_j|\y_C)$ if and only the covariance vanishes, i.e.~$(\Sigma_*)_{ij}=0$.
      
      \item\, Let $\vartheta=\theta(\{i,j\}\cup C)$ be the Laplacian corresponding to $\Gamma_{\{i,j\}\cup C,\{i,j\}\cup C}$.
      From standard Schur complement arguments it follows that the conditional covariance $\Sigma_*$ from \eqref{eq:cond_cov} satisfies
      \begin{align*}
        \Sigma_*^{-1}&=\begin{pmatrix}
          \vartheta_{ii} &\vartheta_{ij}\\
          \vartheta_{ij} &\vartheta_{jj}\\
        \end{pmatrix},
      \end{align*}
      such that
      \begin{align*}
        \Sigma_*&=\frac{1}{\vartheta_{ii}\vartheta_{jj}-\vartheta_{ij}^2}\begin{pmatrix}
          \vartheta_{jj} &-\vartheta_{ij}\\
          -\vartheta_{ij} &\vartheta_{ii}\\
        \end{pmatrix}.
          \end{align*}
      This yields a partial correlation coefficient as the conditional correlation from $\Sigma_*$, that is
      \[\rho_{ij|C}=-\frac{\vartheta_{ij}}{\sqrt{\vartheta_{ii}\vartheta_{jj}}}.\]
      
      \item\, We observe that for $ k\in C $ with standard Schur complement arguments it is
    \begin{align*}
      \text{Cov}( \W^{k}_{\setminus C}\mid \W^k_{C\setminus k}) & = \left(\Theta_{\setminus C}^{(k)}\right)^{-1}.
    \end{align*}
    This means that
    \begin{align*}
      Y_i\perp_eY_j\mid \Y_C &\Leftrightarrow W^{k}_{i}\indep W_{j}^{k}\mid \W^{k}_{C\setminus k} \\
      & \Leftrightarrow \det \left(\Theta_{\setminus (\{i\}\cup C),\setminus (\{j\}\cup C)}^{(k)}\right)=0 \\
      &\Leftrightarrow \det \left(\Theta_{\setminus (\{i\}\cup C),\setminus (\{j\}\cup C)}\right)=0.
    \end{align*}
  \end{enumerate}
  \end{proof}
   
  \begin{lemma}\label{lem:partial_correlation}
  Let $\Y$ be \HR{} with variogram matrix $\Gamma$ and precision matrix $\Theta$.
  For every $k\in V$, let $\rho_{ij \mid S}^{(k)}$ be the partial correlation coefficient of $\W^k_{\setminus k}$ for any nonempty set $S\subseteq V$ with $i,j\not\in S$. 
  It holds that for any $k\in S$, the partial correlation coefficient $\rho_{ij \mid S}^{(k)}$ is equal to the extremal partial correlation coefficient $\rho_{ij \mid S}$.
  \end{lemma}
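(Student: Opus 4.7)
The plan is to reduce the claim to a standard Gaussian partial correlation calculation on a suitable marginal of $\W^k$, and then identify the relevant precision submatrix with the \HR{} precision matrix $\theta(\{i,j\}\cup S)$ via the Fiedler--Bapat identity~\eqref{eq:Fiedler-Bapat_margin}. Throughout, write $T = \{i,j\}\cup S$ and assume $k\in S$.

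First, since $W^k_k = 0$ almost surely, conditioning on $\W^k_S$ is equivalent to conditioning on $\W^k_{S\setminus\{k\}}$, so $\rho_{ij\mid S}^{(k)}$ is the usual Gaussian partial correlation between $W^k_i$ and $W^k_j$ given $\W^k_{S\setminus\{k\}}$. Because the Gaussian partial correlation only depends on the marginal distribution of $(W^k_i, W^k_j, \W^k_{S\setminus\{k\}}) = \W^k_{T\setminus\{k\}}$, it suffices to compute it from the covariance matrix $\Sigma^{(k)}_{T\setminus\{k\},T\setminus\{k\}}$.

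Next, I use the key observation that by the definition of $\Sigma^{(k)}$ in Example~\ref{ex:CI_test}, its $(u,v)$-entry is $\tfrac{1}{2}(\Gamma_{uk}+\Gamma_{vk}-\Gamma_{uv})$, which involves only entries of $\Gamma_{T,T}$. Hence $\Sigma^{(k)}_{T\setminus\{k\},T\setminus\{k\}}$ coincides with the analogous covariance constructed from the marginal \HR{} distribution on $T$. Applying the Fiedler--Bapat identity~\eqref{eq:Fiedler-Bapat_margin} to this margin yields
\[
\bigl(\Sigma^{(k)}_{T\setminus\{k\},T\setminus\{k\}}\bigr)^{-1} \;=\; \theta(T)_{T\setminus\{k\},\,T\setminus\{k\}},
\]
i.e., the \HR{} precision matrix $\theta(T)$ with its $k$th row and column deleted. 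This is the step where one must be careful with the indexing: the submatrix on the right retains the $(i,j)$, $(i,i)$, $(j,j)$ entries of $\theta(T)$ unchanged, because $k\notin\{i,j\}$.

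Finally, applying the standard formula that expresses the Gaussian partial correlation between two coordinates (given all remaining coordinates of a Gaussian vector) as the normalized off-diagonal entry of the precision matrix, I obtain
\[
\rho_{ij\mid S}^{(k)} \;=\; -\frac{\theta(T)_{ij}}{\sqrt{\theta(T)_{ii}\,\theta(T)_{jj}}} \;=\; \rho_{ij\mid S},
\]
which is exactly the extremal partial correlation defined in~\eqref{eq:partial_cor}. The main obstacle is the bookkeeping in the second step, namely verifying that restricting to $T$ first and then deleting row/column $k$ gives the same matrix as restricting $\Sigma^{(k)}$ to $T\setminus\{k\}$; this is immediate from the formula for $\Sigma^{(k)}$ but is where the assumption $k\in S$ is essential.
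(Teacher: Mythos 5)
Your proposal is correct and follows essentially the same route as the paper: reduce $\rho_{ij\mid S}^{(k)}$ to a Gaussian partial correlation of the extremal function restricted to $T=\{i,j\}\cup S$, identify the inverse of the restricted covariance with $\theta(T)$ minus its $k$th row and column via the Fiedler--Bapat identity (using that the entries of $\Sigma^{(k)}$ over $T\setminus\{k\}$ depend only on $\Gamma_{T,T}$), and read off the partial correlation from the precision matrix. Your justification of the key identification step is in fact slightly more explicit than the paper's, which simply asserts $\theta(T)^{(k)}_{uv}=\theta(T)_{uv}$ for $u,v\in T\setminus\{k\}$.
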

  \begin{proof}
     As $\W^k_{V\setminus\{k\}}$ is Gaussian, the conditional covariance  $\Cov(W^k_i,W^k_j|\W^k_S)=\Sigma^{(k)}_{ij\mid S} \coloneqq \Sigma^{(k)}_{ij} - \Sigma^{(k)}_{iS}(\Sigma^{(k)}_{SS})^{-1}\Sigma^{(k)}_{Sj}$ vanishes if and only if $W^k_i\ci W^k_j|\W_S^k$.
     Let $\theta(\{i,j\}\cup S)^{(k)}:=\left(\Sigma^{(k)}_{\{i,j\}\cup S,\{i,j\}\cup S}\right)^{-1}$, such that a standard Schur complement argument yields the partial correlation $$\rho_{ij \mid S}^{(k)} \coloneqq -\theta(\{i,j\}\cup S)^{(k)}_{ij} / \sqrt{\theta(\{i,j\}\cup S)^{(k)}_{ii} \theta(\{i,j\}\cup S)^{(k)}_{jj}}$$
     for any $k\in S$.
     Conveniently, $\theta(\{i,j\}\cup S)^{(k)}_{uv}=\theta(\{i,j\}\cup S)_{uv}$ for any $u,v \in\{i,j\}\cup S\setminus \{k\}$ and $k\in S$, where the matrix $\theta(A)$ can be obtained from $\Gamma$ for any $A\subseteq V$ as in \eqref{eq:Fiedler-Bapat_margin}.
     Thus, we find that 
     \begin{align}
       \rho_{ij \mid S} \coloneqq -\theta(\{i,j\}\cup S)_{ij} / \sqrt{\theta(\{i,j\}\cup S)_{ii} \theta(\{i,j\}\cup S)_{jj}}\label{eq:partial_corr}
     \end{align}
     is equal to the partial correlations $\rho_{ij \mid S}^{(k)}$ for all $k\in S$.
  \end{proof}
   
  \subsection{Example for a linear \HR{} SCM}\label{app:ex-HR-SEM}
  
  \begin{example}
  We assume the graph $G_e$ in Figure~\ref{subfig:b} with 
  \begin{align*}
    B&=\begin{pmatrix}
      0&1&1&0\\
      0&0&0&b_{24}\\
      0&0&0&b_{34}\\
      0&0&0&0\\
    \end{pmatrix},
    &&\mathfrak{L}(G_e)=\begin{pmatrix}
      -1&1&0&0\\
      -1&0&1&0\\
      0&-b_{24}&-b_{34}&1\\			
    \end{pmatrix},
  \end{align*}
  where $b_{24}+b_{34}=1$.		
  For $\nu_2^2,\nu_3^2,\nu_4^2>0$ this yields a H\"usler--Reiss precision matrix
  \begin{align*}
    \Theta=\mathfrak{L}(G_e)^{\top} \begin{pmatrix}
      1/\nu_2^2&0&0\\
      0&1/\nu_3^2&0\\
      0&0&1/\nu_4^2\\
    \end{pmatrix}\mathfrak{L}(G_e)
    = \begin{pmatrix}
      1/\nu_2^2+1/\nu_3^2&-1/\nu_2^2&-1/\nu_3^2&0\\
      -1/\nu_2^2&1/\nu_2^2+b_{24}^2/\nu_4^2&b_{24}b_{34}/\nu_4^2&-b_{24}/\nu_4^2\\
      -1/\nu_3^2&b_{24}b_{34}/\nu_4^2&1/\nu_3^2+b_{34}^2/\nu_4^2&-b_{34}/\nu_4^2\\
      0&-b_{24}/\nu_4^2&-b_{34}/\nu_4^2&1/\nu_4^2\\
    \end{pmatrix}.
  \end{align*}
  If we construct a \HR{} SCM $\Y$ from $\Theta$ as in Proposition~\ref{prop:HR_SCM_construction}, then $\Theta_{14}=0$ infers that $Y_1\perp_{e}Y_4\mid \Y_{23}$. It is further $Y_2\perp_{e}Y_3\mid Y_1$, as with Lemma~\ref{lem:precision_CI} we observe
  \[\det(\Theta_{\setminus\{1,2\},\setminus\{1,3\}})=\det\begin{pmatrix}
    b_{24}b_{34}/\nu_4^2&-b_{34}/\nu_4^2\\
    -b_{24}/\nu_4^2&1/\nu_4^2\\
  \end{pmatrix}=0.\]
  Thus, $\Y$ is indeed an extremal directed graphical model with respect to $G_e$.
  \end{example}
   
  \newpage
  \section{Graphical representation of the river network}\label{app:additional-exp-n-figures}
  
  \begin{figure}[!h]
  \centering
  \includegraphics[width=.8\textwidth]{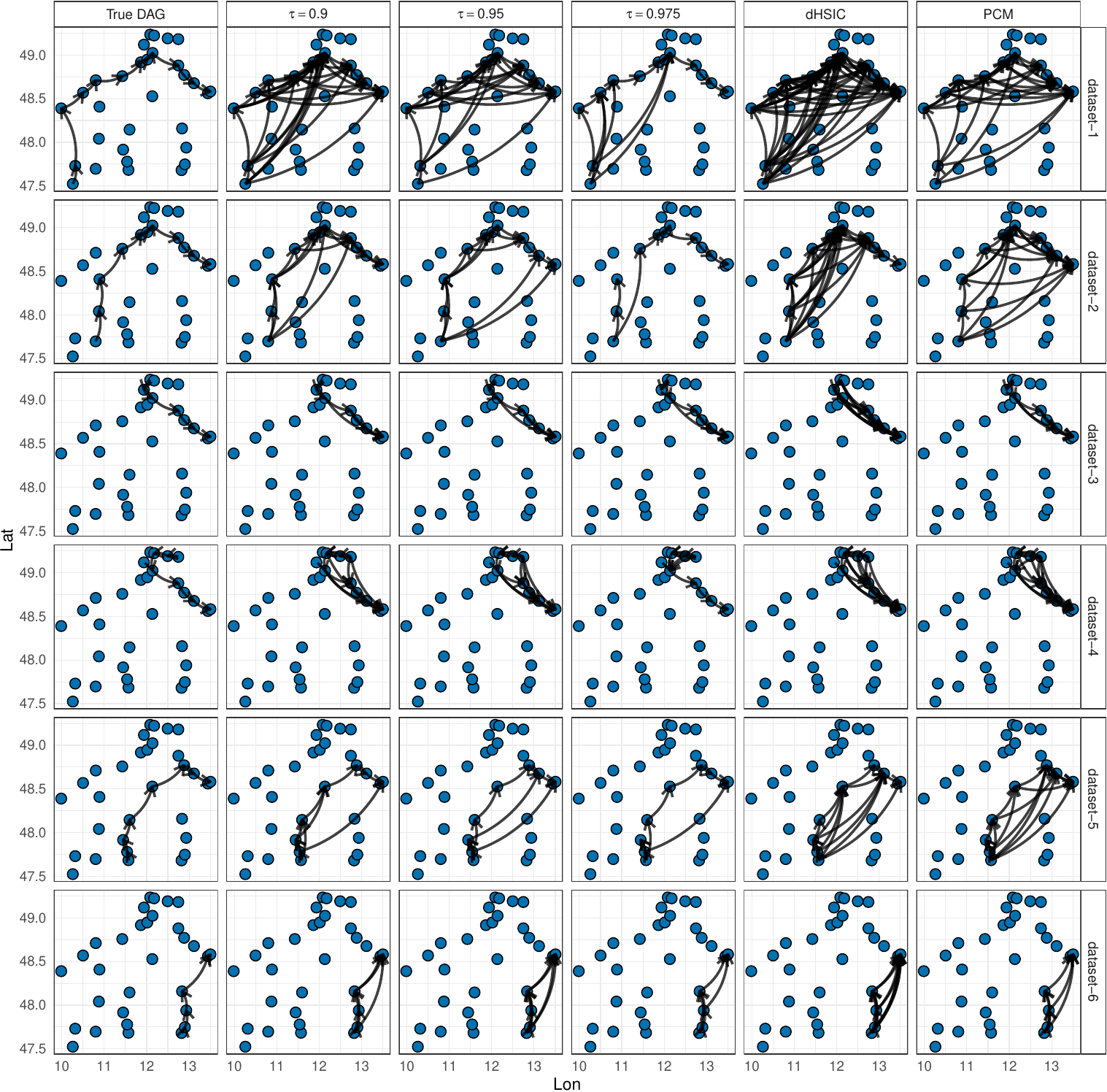}
  \caption{Recovered river network structures across different methods (columns) and river branches (rows) for a fixed subsample repetition of the experiment described in Section~\ref{sec:application}. 
  The X- and Y-axes represent the geographical coordinates over the river network, with nodes and edges representing stations and (inferred) connections, respectively.
  The first column presents the actual river network structure. The next three columns correspond to the graphs estimated by the extremal pruning algorithm using the extremal conditional independence test at threshold levels $\tau \in \{0.9, 0.95, 0.975\}$. The last two columns show the graphs learned using dHSIC and PCM instead of the extremal conditional independence test.
  }
  \label{fig:river-maps}
  \end{figure}

\end{appendix}
\end{document}